\newcommand{\RR}{\mathbb R}
\newcommand{\CC}{\mathbb C}
\newcommand{\ZZ}{\mathbb Z}
\newcommand{\M}{\mathcal M}
\newcommand{\R}{\mathcal R}
\newcommand{\cL}{\mathcal L}
\newcommand{\cH}{\mathcal H}
\newcommand{\VR}{\mbox{\v{R}}}
\newcommand{\g}{\mathfrak g}
\newcommand{\w}{\omega}
\newcommand{\al}{\alpha}
\newcommand{\pa}{\partial}
\newcommand{\e}{\epsilon}
\newtheorem{theorem}{Theorem}
\newtheorem{proposition}{Proposition}
\newtheorem{remark}{Remark}
\newtheorem{definition}{Definition}
\begin{document}

\title
{Lectures on the integrability of the $6$-vertex model.}
\author{Nicolai Reshetikhin
\thanks{Department of Mathematics,
    University of California at Berkeley,
Berkeley, CA 94720-3840.
and KDV Institute for Mathematics,
Universiteit van Amsterdam,
Plantage Muidergracht 24,
1018 TV, Amsterdam, The Netherlands.E-mail:
reshetik@math.berkeley.edu }}
\date{}
\maketitle

\tableofcontents

\section{Introduction}

The goal of these notes is to outline the relation between
solvable models in statistical mechanics, and classical and quantum integrable spin chains. The main examples are the $6$-vertex model
in statistical mechanics,  and  spin chains related to the loop algebra $Lsl_2$.

The $6$-vertex model emerged as a version of the Pauling's ice model,
more generally, as a two-dimensional model of ferroelectricity.
The free energy per site was computed exactly in the thermodynamical
limit by E. Lieb \cite{LiebM}. The free energy as a function of
electric fields was computed by Sutherland and Yang \cite{SY}.
For details and more references on earlier works on the $6$-vertex
model see \cite{LW}. The structure of the free energy as a function of electric
fields was also studied in \cite{Nold}\cite{BS}.

Baxter descovered that Boltzman weights of the $6$-vertex model can be
arranged into a matrix which staisfies what is now known as the
Yang-Baxter equation. For more references on the $6$-vertex model
and on the consequences of the Yang-Baxter equation for the
weights of the $6$-vertex model see \cite{Ba}.

The $6$-vertex model and similar `integrable' models
in statistical mechanics became the subject of renewed
research activity after the discovery of Sklyanin
of the relation between the Yang-Baxter relation
in the $6$-vertex model and the quantization of classical
integrable systems \cite{Skl-1}. It lead to
the discovery of many `hidden' algebraic structures
of the $6$-vertex model and to the construction of quantizations of a number of important classical field theories
\cite{FST}\cite{FT}. For more references see, for example \cite{KBI}.
Further development of this subject resulted in
the development of quantum groups \cite{Dr} and further
understanding of algebraic nature of integrability.

Classical spin chains related to the Lie group $SL_2$
is an important family of classical integrable systems
related to the $6$-vertex model. The continuum  version
of this model is known as Landau-Lifshitz model.  One of its
quantum counterparts, the Heisenberg spin chain
has a long history. Eigenvectors of its Hamiltonian of
the Heisenberg model
were constructed by Bethe in 1931 using the substitution which
is known now as the Bethe ansatz. He expressed
the eigenvalues of the Hamiltonian in terms of solutions to
a system of algebraic equations known now as Bethe equations.
An algebraic version of this substitution was found in
\cite{FT}.

These lectures consist of three major parts. First part
is a survey of some basic facts about classical integrable
spin chains. The second part is a survey of
the corresponding quantum spin chains. The third part is
focus on the $6$-vertex model and on the limit shape
phenomenon. The Appendix has a number of random useful
facts.

The author is happy to thank A. Okounkov, K. Palamarchuk, E. Sklyanin,
and F. Smirnov  for
discussions, B. Sturmfels for an important remark about the
solutions to Bethe equations, and the organizers of the
school for the opportunity to give these lectures.

This work was supported by the Danish National
Research Foundation through the
Niels Bohr initiative, the author is
grateful to the Aarhus University for the
hospitality. It was also
supported by the NSF grant DMS-0601912.

\section{Classical integrable spin chains}

The notion of a Poisson Lie group developed from the
study of integrable systems and their relation to
solvable models in statistical mechanics.

It is a geometrical structure behind Poisson structures
on Lax operators, which emerged in the analysis of Hamiltonian
structures in integrable partial differential equations.
First examples of these
structures are related to taking semiclassical limit of Baxter's $R$-matrix for the 8-vertex model. The notion of Poisson Lie groups and Lie bialgebras places these examples into the context of Lie theory and provides a natural versions of such systems related to
other simple Lie algebras.

\subsection{Classical $r$-matrices and the construction of classical
integrable spin chains}\label{cl-int-syst}

Here we will recall the construction of classical integrable
systems based on classical $r$-matrices.

A classical $r$-matrix with (an additive) spectral parameter is a
holomorphic function on $\CC$ with values in $End(V)^{\otimes 2}$ which satisfies the classical Yang-Baxter equation:
\begin{equation}\label{cl-YBE}
[r_{12}(u),r_{13}(u+v)]+[r_{12}(u),r_{23}(v)]+[r_{13}(u+v),r_{23}(v)]=0
\end{equation}
where $r_{ij}(u)$ act in $V^{\otimes 3}$, such that $r_{12}(u)=r(u)\otimes 1, \ \ r_{23}(u)=1\otimes r(u)$, etc.

Let $L(u)$ be a holomorphic function on $\CC$ of certain type (for example a polynomial). Matrix elements of coefficients of this function satisfy quadratic $r$-matrix Poisson brackets if
Their generating function $L(u)$ has the Poisson brackets
\begin{equation}\label{r-br}
\{L_1(u), L_2(v)\}=[r(u), L_1(u)L_2(v)]
\end{equation}
where $L_1(u)=L(u)\otimes 1, L_2(u)=1\otimes L(u)$.
The expression on the left is the collection of Poisson
brackets $\{L_{ij}(u), L_{kl}(v)\}$.

Consider the product
\[
T(u)=L^{(N)}(u-a_N)\dots L^{(1)}_1(u-a_1)
\]
Matrix elements of $T(u)$ are functions on
$P_N\times \dots \times P_1$. The factor
$L^{(i)}$ is the function on $P_i$. The
Poisson structure on $P_i$ is as above.

The $r$-matrix Poisson brackets for
$L^{(i)}(u)$ imply similar Poisson brackets
for $T_{ij}(u)$:
\[
\{T_1(u), T_2(v)\}=[r(u-v), T_1(u)T_2(v)]
\]

Taking the trace in this formula we se that
\[
\{t(u), t(v)\}=0
\]

Fix symplectic leaves $S_i\subset P_i$ for each $i$.The
restriction of the generating function $t(u)$ to
the product of symplectic leaves $S=S_N\times \dots \times S_1\subset P_N\times \dots \times P_1$ gives the generating
function for commuting functions on this symplectic
manifold.

Under the right circumstances the generating function $t(u)$ will
produce the necessary number of independent functions to produce
a completely integrable system, i.e.
$dim(S)/2$.

One of the main conceptual questions in this construction is: Why classical $r$-matrices exist? Indeed, if $n=dim(V)$, the equation (\ref{cl-YBE}) is a system of $n^6$ functional equations
for $n^4$ functions.
The construction of the Drinfeld double of a Lie bialgebra
provides an answer to this question \cite{Dr}.

\subsection{Classical $L$-operators related to $\widehat{sl_2}$}

In this section we will focus on classical $L$-operators
for the classical $r$-matrix corresponding to the standard Lie bialgebra structure on $\widehat{sl_2}$.

Such $L$-operators describe finite dimensional Poisson submanifolds in
the infinite dimensional Poisson Lie group $LSL_2$. For some basic
facts and references see Appendix \ref{be}.
Up to a scalar multiple they are polynomials in the spectral
variable $z$. One of such simplest Poisson submanifolds correspond
to polynomials of first degree of the following form:
\begin{equation}\label{orbit}
L(u)=\left(\begin{array}{cccc}a+ a'z^2 &  b'\\
z^2 b & c+c' z^2 \end{array}\right)
\end{equation}
where $z=\exp(u)$. The
$r$-matrix Poisson brackets on $L(u)$ with the $r$-matrix
given by (\ref{cl-r}) induce a Poisson algebra structure on
the algebra of polynomials in $a, b, ...$. This Poisson algebra can be specialized further (by quotienting with respect to
corresponding Poisson ideal). As a result we arrive to the following L-operator:
\begin{equation}\label{el-L}
L(u)=\left(
\begin{array}{cccc}
  zk-z^{-1}k^{-1} & z^{-1}f   \\
  zf &  zk^{-1}-z^{-1}k
\end{array} \right)
\end{equation}
which satisfies the $r$-matrix Poisson brackets (\ref{r-br})
with the following brackets on $k, e, f$:
\begin{equation}\label{br-1}
\{k,e\}=\e ke, \ \ \{k, f\}=- \e kf
\end{equation}
\begin{equation}\label{br-2}
\{e,f\}=2\e (k^2-k^{-2}),
\end{equation}

The function
\begin{equation}\label{Poisson-cas}
c=ef+ k^2+k^{-2},
\end{equation}
Poisson commute with all other elements of this Poisson algebra, i.e. it is the Casimir function. It is easy to show that this
is the only Casimir function on this Poisson manifold.

It is easy to check that
\begin{equation}\label{cl-crossing}
L(-u)^t=-D_z\sigma^y_2L(u)\sigma^y_2 D_z^{-1}
\end{equation}

On the level surface of $c$ parameterized as
$c=t^2+t^{-2}$ this matrix satisfies the extra identity
\[
L(u)\theta(L(-u))^t=(zt-z^{-1}t^{-1})(tz^{-1}-t^{-1}z)I
\]
where $I$ is the identity matrix and
\[
\theta(e)=f, \ \ \theta(f)=e, \ \ \theta(k)=k
\]
is the anti-Poisson involution: $\{\theta(a)\theta(b)\}=-\theta(\{a,b\})$, $\theta^2=id$.

Its easy to find the determinant of $L(u)$:
\[
det(L(u))=(zt-z^{-1}t^{-1})(t^{-1}z-tz^{-1})
\]

When $z=t, t^{-1}$ the matrix $L$ degenerates
to one dimensional projectors.

\begin{equation}\label{L-deg-1}
L(t)=\left(
\begin{array}{c}
  \frac{tk-t^{-1}k^{-1}}{t^{-1}e}    \\
  1
\end{array} \right)\otimes \left(t^{-1}e , tk^{-1}-t^{-1}k\right)
\end{equation}
\begin{equation}\label{L-deg-2}
 L(t^{-1})=\left(
\begin{array}{c}
  \frac{t^{-1}k-tk^{-1}}{te}    \\
  1
\end{array} \right)\otimes \left(te , t^{-1}k^{-1}-tk\right)
\end{equation}

\subsection{Real forms}\label{cl-r-form}

\subsubsection{} Here we will describe the Poisson manifold
$su_2^*$ .
Let $S_1, S_2, S_3$ be the coordinates $su_2^*$ corresponding to
the usual orthonormal basis in $su_2$. The Poisson brackets
between these coordinate functions are
\[
\{S_1, S_2\}=2S_3, \ \ \{S_2, S_3\}=2S_1, \ \ \{S_3, S_1\}=2S_2,
\]
These coordinates are also known as classical spin coordinates.
The center of this Poisson algebra is generated by $C=S_1^2+S_2^2+
S_3^2$.

It is convenient to introduce $S^+=(S_1+iS^2)/2$, $S^-=(S_1-iS^2)/2$. Since $S_i$ are real, $\overline{S^+}=S^-$.

The Poisson brackets between these coordinates are:
\[
\{S^+, S^-\}=-iS_3, \ \ \{S_3, S^\pm\}=\mp 2i S^\pm
\]

Level surfaces of $C$ are spheres. On the level surface with
$C=l^2$ with two point $S_3=\pm l$ being removed we have the following Darboux coordinates:
\[
S^+=e^{i\phi}\sqrt{pl-p^2}, \ \ S^-=e^{-i\phi}\sqrt{pl-p^2},
\ \ S_3=l-2p
\]
where $0<p<l$ and $0<\phi \leq 2\pi$ and $\{p,\phi\}=1$.

\subsubsection{} The Poisson algebra
\[
\{k, e\}=\e ke, \ \ \{k, f\}=-2\e kf,
\]
\[
\{e, f\}=2\e (k^2-k^{-2})
\]
has the two real forms which are important for spin chains with
compact phase spaces:
\begin{itemize}
\item The real form with $\e=1$, $|k|=1$, $e=\overline{f}$.
\item And the real form where $\e=i$, $e=\overline{f}$, and $\overline{k}=k$.
\end{itemize}

 In the first case the level surface of (\ref{Poisson-cas}) with $c=2\cos 2R$ without two points $e=f=0$ has the
following Darboux coordinates:
\[
e=2e^{i\phi}\sqrt{\sin(p)\sin(2R-p)}, \ \
f=2e^{-i\phi}\sqrt{\sin(p)\sin(2R-p)}, \ \ k=e^{i(R-p)}
\]
with $\{p, \phi\}=1$, and $0<p<l$ and $0<\phi \leq 2\pi$.

Similarly, in the second case  the level
surface $c=2\cosh R$ without two points $e=f=0$ have
Darboux coordinates
\[
e=2e^{i\phi}\sqrt{\sinh(p)\sinh(2R-p)}, \ \
f=2e^{-i\phi}\sqrt{\sinh(p)\sinh(2R-p)}, \ \ k=e^{R-p}
\]
where $\phi \in \RR$, $0<p<R$, and $\{p, \phi\}=1$.

We will denote these level surfaces by $S^{(R)}$. In the
compact case $S^{(R)}$ is diffeomorphic to a sphere. It can
be realized as the unit sphere with the symplectic form
dependent on $R$.

\subsection{Integrable classical local $SU_2$-spin chains}\label{cl-Ham-sec}

Because the $S^{(R)}$ is a symplectic leaf of the Poisson Lie group $LGL_2$, the Cartesian product
\[
\M^{R_1,\dots, R_N}=
S^{(R_1)}\times \dots \times S^{(R_N)}
\]
is a Poisson submanifold in $LGL_2$, which means matrix elements of the monodromy matrix
\begin{equation}\label{cl-tran}
T(z)=L^{(R_1)}(za_1)L^{(R_2)}(za_2)\dots L^{(R_N)}(za_N)
\end{equation}
satisfy the $r$-matrix Poisson brackets.

In order to obtain local Hamiltonians in the homogeneous classical spin chain $a_1=\dots=a_N=1$, $R_1=\dots=R_N=R$ one can use the
degenerations (\ref{L-deg-1}), (\ref{L-deg-2}). Combining these
formulae (\ref{L-deg-1}), (\ref{L-deg-2}), and (\ref{cl-crossing})
we obtain the following identities:
\[
L(t)=\alpha\otimes \beta^t, \ \ L(t^{-1})=-\sigma^yD\beta\otimes \alpha^tD^{-1}\sigma^y
\]
where column vector $\alpha$ and row vector $\beta^t$ are
given in (\ref{L-deg-1}). These identities imply
\[
tr(T(t))=\prod_{n=1}^N(\beta_n,\alpha_{n+1}), \ \ tr(T(t^{-1}))=(-1)^N\prod_{n=1}^N(\alpha_n,\beta_{n+1})
\]
Here we assume the periodicity $\alpha_{N+1}=\alpha_1$ and
$\beta_{N+1}=\beta_1$. Now notice that
\[
tr(L_n(t)L_{n+1}(t))=(\alpha_n,\beta_{n+1})(\beta_n,\alpha_{n+1})
\]
One the other hand this trace can be computed explicitly:
\begin{multline}
tr(L_n(t)L_{n+1}(t))=e_nf_{n+1}+f_ne_{n+1}+
(tk_n-t^{-1}k_n^{-1})(tk_{n+1}-t^{-1}k_{n+1}^{-1})+\\
(tk_n^{-1}-t^{-1}k_n)(tk_{n+1}^{-1}-t^{-1}k_{n+1})
\end{multline}

This gives the first local Hamiltonian
\begin{multline}
H=\log(tr(T(t))tr(T(t^{-1})))=\sum_{n=1}^N\log(e_nf_{n+1}+f_ne_{n+1}+
\\ (tk_n-t^{-1}k_n^{-1})(tk_{n+1}-t^{-1}k_{n+1}^{-1})+
(tk_n^{-1}-t^{-1}k_n)(tk_{n+1}^{-1}-t^{-1}k_{n+1}))
\end{multline}

Other local spin Hamiltonians can be chosen as logarithmic derivatives of $tr(T(z))$ at when $z=t^{\pm 1}$, for details see
\cite{FT} and references therein.

When $N$ is even and inhomogeneities are alternating $a_1=a,a_2=a^{-1},a_3=a,\dots, a_N=a^{-1}$ there is a similar
construction of local Hamiltinians also based on
degenerations (\ref{L-deg-1}), (\ref{L-deg-2}).
Again, all logarithmic derivatives of $T(z)$ at points
$z=at^{\pm 1}, a^{-1}t^{\pm 1}$
are local spin Hamiltonians.

In the continuum limit the Hamiltian dynamics generated by these Hamiltians converges to the Landau-Lifshitz equation, see \cite{FT} and references therein.

\section{Quantization of local integrable spin chains}

\subsection{Quantum integrable spin chains}

In the appendix \ref{q-in-sys} there is a short discussion of integrable quantization of classical integrable system.

A quantization of a local classical integrable
spin chin is an integrable quantization of a
classical local integrable spin chain such
that quantized Hamiltonians remain local. That is
the collection of the following data:

\begin{itemize}
\item A choice of the quantization of the algebra of observables
of the classical system (in a sense of the section Appendix \ref{quantization}), i.e a family of associative algebras
with a $*$-involution which deform the classical
Poisson algebra of observables.
\item  A choice of  a maximal commutative subalgebra in the algebra which is a quantization of Poisson commuting algebra of classical integrals.

\item In addition,locality of the quantization means that the quantized algebra of observables is the tensor product of local algebras (one for each site of our one-dimensional lattice): $A_h=\otimes_{n=1}^NB_h^{(n)}$, and that the quantum Hamiltonian has the same local structure as the classical Hamiltonian  (\ref{cl-loc}):
\[
H=\sum_n H_n
\]
where $H_n=1\otimes \dots\otimes H^{(k)}\otimes \dots\otimes 1$ and $H^{(k)}\in B^{(n)}_h\otimes B^{(n+1)}_h\otimes\dots\otimes  B^{(n+k)}_h $.

\item A $*$-representation of the algebra of observables (the
space of pure states of the system).

\end{itemize}

\subsection{The Yang-Baxter equation and the quantization}

Here we will describe the approach to the quantization
of classical spin chains with $r$-matrix Poisson bracket
for polynomial Lax matrices based on construction of
corresponding quantum $R$-matrices and quantum Lax matrices.

In modern language the construction of quantum $R$-matrix means
the construction of the corresponding quantum group, and the
construction of the quantum $L$-matrix means the construction of
the corresponding representation of the quantum group.

Suppose we have a classical integrable system with
commuting integrals obtained as coefficients of
the generating function $\tau(u)$ described in section \ref{cl-int-syst}.

The $R$-matrix quantization means the following:

\begin{itemize}
\item Find a family $R(u,h)$ of invertible linear operators acting
in $V\otimes V$ such that for each $h$ they satisfy the quantum Yang-Baxter equation
\[
R_{12}(u,h)R_{13}(u+v,h)R_{23}(v,h)=R_{23}(v,h)R_{13}(u+v,h)R_{12}(u,h)
\]
and when $h\to 0$
\[
R(u,h)=1+hr(u)+O(h^2)
\]
where $r(u)$ is the classical $r$-matrix.

\item Let the classical Lax matrix $L(u)$ be a matrix valued
function of $u$ of certain type (for example a polynomial of fixed degree), with matrix elements generating a Poisson algebra with Poisson brackets
(\ref{r-br}). For given $R(u,h)$ define the quantization of
this Poisson algebra as the associative algebra generated by
matrix elements of the matrix $\cL(u)$ of the same type as
$L(u)$ (for example a polynomial of the same degree) with defining relations
\begin{equation}\label{R-L}
R(u,h) \cL(u+v)\otimes \cL(v)=(1\otimes \cL(v))
(\cL(u+v)\otimes 1) R(u,h),
\end{equation}
Denote such algebra by $B_h$.

\item Consider the generating function
\begin{equation}\label{T}
T(u)=\cL_1(u-w_1)\otimes\dots \cL_N(u-w_N)
\end{equation}
acting in $End(V)\otimes B_h^{(1)}\otimes \dots\otimes B_h^{(N)}$. It is easy to see that the commutation relations
(\ref{R-L}) imply
\begin{equation}\label{R-T}
R(u,h) T(u+v)\otimes T(v)=(1\otimes T(v))
(T(u+v)\otimes 1) R(u,h),
\end{equation}
The invertibility of $R(u,h)$ together with the relations
(\ref{R-T}) imply that $t(u)=tr_V(T(u))\in B_h^{(N)}\otimes \dots\otimes B_h^{(1)}$ is a generating function for a commutative subalgebra in $B_h^{(1)}\otimes \dots\otimes B_h^{(N)}$:
\[
[t(u),t(v)]=0
\]
Under the right circumstances this commutative subalgebra is maximal and defines an integrable quantization of the corresponding classical integrable spin chain.
\end{itemize}

The $R$-matrix was found by Baxter. Sklyanin discovered that when $h\to 0$ the classical $R$-matrix defines a Possion structure
on $LGL_2$ defined by the formula (\ref{loop-P}) and that it implies the Poisson commutativity of traces.

There is an algebraic way to derive the Baxter's R-matrix from the universal $R$-matrix for $U_q(\widehat{gl_2})$. It is outlined in the appendix.

\subsection{Quantum Lax operators and representation theory}

\subsubsection{Quantum $LSL_2$}
Here is the formal definition of $C_q(\widehat{SL}_2)$
in terms of generators and relations.
Let $q$ be a nonzero complex number. The algebra  $C_q(\widehat{SL}_2)$
is a complex algebra generated by elements the $ T_{ij}^{(k)}$,
where $i,j=1,2$ and $k\in {\Bbb Z}$. Consider the matrix ${\mathcal T}(z)$
which is
the generating function for
the elements  $ T_{ij}^{(k)}$
\begin{equation}\label{gen-aff}
{\mathcal T}(z)=\sum_{k=1}^\infty T^{(k)} z^{2k}+ \left(
\begin{array}{cc}T_{11}^{(0)}& T_{12}^{(0)}\\ 0& T_{22}^{(0)}\end{array}\right).
\end{equation}
The determining relations in $C_q(\widehat{SL}_2)$
can be written as the following
matrix identity with entries in  $C_q(\widehat{SL}_2)$:
\begin{eqnarray} \label{relations}
R(z) {\mathcal T}(zw)\otimes {\mathcal T}(w)&=&(1\otimes {\mathcal T}(w))
({\mathcal T}(zw)\otimes 1) R(z),  \\
{\mathcal T}(qz)_{11}{\mathcal T}(z)_{22}&-&{\mathcal T}(qz)_{12}{\mathcal T}(z)_{21}=1,
\nonumber
\end{eqnarray}
where the tensor product is the tensor product of matrices. Matrix elements in this formula are multiplied as elements of $C_q(\widehat{GL}_2)$ in the order in which they appear.

The matrix $R(z)$ acts in ${\Bbb C}^2\otimes{\Bbb C}^2$ and
has the following structure in the tensor product basis:
\begin{equation}    \label{Rmatrix}
R(z) = \begin{pmatrix}
1 & 0 & 0 & 0 \\
0 & f(z) & z^{-1}g(z) & 0 \\
0 & zg(z) & f(z) & 0 \\
0 & 0 & 0 & 1
\end{pmatrix}.
\end{equation}
where
\[
f(z)=\dfrac{z-z^{-1}}{zq-z^{-1}q^{-1}}, \ \ g(z)=\dfrac{(q-q^{-1})}{zq-z^{-1}q^{-1}}
\]

It satisfies the Yang-Baxter equation.

\begin{remark} There is important function $s(z)$:
$$
s(z)=q^{-1/2} \frac{(z^2q^2;q^4)_\infty^2 }
{(z^2;q^4)_\infty (z^2q^4;q^4)_\infty },
$$
where
$$
(x;p)_\infty=\prod_{k=1}^\infty (1-xp^n).
$$

the matrix
\[
\R(z)=s(z) R(z)
\]
satisfies what is known in physics unitarity and the crossing symmetry:
\[
\R(z)\R(z^{-1})^t=1, \ \ \R(z^{-1})^{t_2}=C_2\R(zq^{-1})C_2
\]
where $t$ is the transposition with respect to the
standard scalar product in ${\CC^2}^{\otimes}$, $t_2$
is the transposition with respect to second factor in the tensor
product, and $C_2=1\otimes C$ where
\[
C=\left(\begin{array}{cc} 0& -i\\
i & 0
\end{array}\right)
\]

\end{remark}

The Hopf algebra structure on  $C_q(\widehat{SL}_2)$ is determined by the
following action of the antipode on the generators:
\begin{equation}   \label{comult}
\Delta ({\cal T}_{ij}(z))= \sum_{k={1,2}}{\mathcal T}_{ik}(z)\otimes
{\mathcal T}_{kj}(z).
\end{equation}
The right side here, as well as in the second relation in \eqref{relations},
is understood as a product of Laurent power series. The antipode is
determined by the relation
$$
{\mathcal T}(z)(S\otimes {id}{\mathcal T}(z))=1.
$$

Let $d$ be a nonzero complex number.
The identity \eqref{relations} implies that the power series
\begin{equation}  \label{t-m}
\tau_1(z;d)=d {\mathcal T}_{11}(z)+d^{-1} {\mathcal T}_{22}(z)
\end{equation}
generates a commutative subalgebra in $C_q(\widehat{SL}_2)$.

Chose a linear basis (for example ordered monomials in $T_{ij}^{(k)}$ ). The relations between generators will
give the multiplication rule for monomials which will depend on $q$. This multiplication turns into the commutative multiplication of coordinate functions when $q=1$. The commutator of two monomials, divided by $q-1$, at $q=1$  becomes the Poisson bracket. It is easy to check that this Poisson brackets is exactly the one
defined by the classical $r$-matrix (\ref{cl-r}).

\begin{remark} Let $D$ be a diagonal matrix. It is easy to see that $[D\otimes 1+1\otimes D, R(x)]=0$. It is easy to show that
if
\[
R(z){\mathcal T}_1(zw){\mathcal T}_2(w)={\mathcal T}_2(w){\mathcal T}_1(zw)R(z)
\]
then
\[
\tilde{R}(z)=(z^D\otimes 1)R(z)(z^{-D}\otimes 1), \ \ \tilde{{\mathcal T}}(z)=z^{D}{\mathcal T}(z)z^{-D}
\]
satisfy the same relation
\[
\tilde{R}(z)\tilde{{\mathcal T}}_1(zw)\tilde{{\mathcal T}}_2(w)=
\tilde{{\mathcal T}}_2(w)\tilde{{\mathcal T}}_1(zw)\tilde{R}(z)
\]
In particular $\tilde{R}(z)$ satisfies the Yang-Baxter equation.
Choosing $D=diag(-1/2,1/2)$ gives the $R$-matrix
(\ref{Rmatrix}) but with no factors $z^{\pm 1}$ off-diagonal.
This symmetric version of the R-matrix is the matrix of Boltzmann
weights in the 6-vertex model.
\end{remark}
\begin{remark}
If $A$ is an invertible diagonal matrix such that $(A\otimes A)R(z)=R(z)(A\otimes A)$ and ${\mathcal T}(z)$ is as above then
\[
{\mathcal T}^A(z)=A{\mathcal T}(z)A^{-1}
\]
also satisfies the relations (\ref{relations}).
\end{remark}

\subsection{Irreducible representations}\label{irreps}
\subsubsection{}It is easy to check that the following matrix satisfies the $R$-matrix commutation relations (\ref{relations})
\begin{equation}\label{q-L}
{\mathcal L}(z)=\left(
\begin{array}{cccc}
  zkq^{\frac{1}{2}}-z^{-1}k^{-1}q^{-\frac{1}{2}} & z^{-1}q^{-\frac{1}{2}}f   \\
   zq^{\frac{1}{2}}e &  zk^{-1}q^{\frac{1}{2}}-z^{-1}kq^{-\frac{1}{2}}
\end{array} \right)
\end{equation}

if $e, f, k$ commute as
\begin{eqnarray*}
ke&=&qek, ~~\ kf=q^{-1}fk, \\
ef-fe&=&(q-q^{-1})(k^2-k^{-2}).
\end{eqnarray*}
Denote this algebra $C_q$.

The element
\begin{equation}
c=fe+k^2q+k^{-2}q^{-1}
\end{equation}
generates the center of this algebra.

It is clear that this algebra quantizes the Poisson algebra
(\ref{br-1})(\ref{br-2}).  Indeed, the algebra $C_1$ is the commutative algebra generated by $e,f, k^{\pm 1}$. Consider the monomial basis $e^nk^mf^l$ in $C_q$.
Fix the isomorphism between $C_q$ and $C_1$ identifying these bases. The associative multiplication in $C_q$ is given
in this basis the function of $q$:
\[
e_ie_j=\sum_k m_{ij}^k(q)e_k
\]
it is clear that when $q=1$ this multiplication is the
usual multiplication in the commutative algebra generated by
$e,f, k, k^{-1}$. The skew symmetric part of the derivative of $m(q)$ at $q=1$
is the Poisson structure.

The algebra $C_q$ is a Hopf algebra with the comultiplication
acting on generators as
\[
\Delta k= k\otimes k, \ \ \Delta e=e\otimes k +k^{-1}\otimes e,
\Delta f=f\otimes k+ k^{-1}\otimes f
\]

The algebra $C_q$ is closely related to the quantized universal
enveloping algebra for $sl_2$. Indeed, elements $E=ek/(q-q^{-1}),
F=k^{-1}f/(q-q^{-1}), K=k^2$ are generators for $U_q(sl_2)$:
\[
KE=q^2EK, \ \ KF=q^{-2}FK, \ \ EF-FE=\frac{K-K^{-1}}{q-q^{-1}}
\]
with
\[
\Delta K=K\otimes K, \ \ \Delta E=E\otimes K+1\otimes E,
\Delta F= F\otimes 1+ K^{-1} \otimes F
\]

\subsubsection{} Assume that $q$ is generic. Denote by $V^{(m)}$ the irreducible $m+1$-dimensional representation of $C_q$, and by $v^{(m)}_0$ the highest weight vector in this representation:
\[
kv^{(m)}_0=q^{\frac{m}{2}} v^{(m)}_0, \ \ ev^{(m)}_0=0
\]
The weight basis in this representation can be obtained
by acting $f$ on the highest weight vector.
Properly normalized the action of $C_q$ on the weight basis is:
\[
kv^{(m)}_n=q^{\frac{m}{2}-n}v^{(m)}_n, \ \
fv^{(m)}_n=(q^{m-n}-q^{-m+n})v^{(m)}_{n+1}, \ \
ev^{(m)}_n=(q^n-q^{-n})v^{(m)}_{n-1}.
\]

The Casimir element $c$ acts on $V{(m)}$ by the multiplication
on $q^{m+1}+q^{-m-1}$.

Because the algebra $C_q$ is a Hopf algebra, it acts naturally on
the tensor product of representations.
\subsubsection{} Denote the matrix (\ref{q-L}) evaluated in the irreducible representation $V^{(m)}$ by ${\mathcal L}^{(m)}(z)$.
It is easy to check that it satisfies the following identities:
\[
{\mathcal L}^{(m)}(z^{-1})^t=-D_zC{\mathcal L}^{(m)}(zq)C^{-1}D_z^{-1}
\]
\[
{\mathcal L}^{(m)}(z)^T{\mathcal L}^{(m)}(z^{-1})=(zq^{\frac{m+1}{2}}-z^{-1}q^{-\frac{m+1}{2}})(z^{-1}q^{\frac{m+1}{2}}-zq^{-\frac{m+1}{2}})I
\]
where $D_z$ is a diagonal matrix.

Here $t$ is the transposition with respect to the
standard scalar product in $\CC^2$ and $T$ is the transposition
$t$ combined with the transposition in $V^{(m)}$ with respect to
the scalar product $(v^{(m)}_n,v^{(m)}_{n'})=\delta_{n,n'}$.

\subsubsection{} The matrix (\ref{q-L}) defines a family of $2$-dimensional representations of $C_q(\widehat{SL_2})$.
If $a$ is a non-zero complex number such representation is
\[
{\mathcal T}(z)\mapsto g_m(z){\mathcal L}^{(m)}(za)
\]
where the factor $g_m(z)$ is important only if we
want to satisfy the second relations which play the
role of quantum counter-parts of the the unimodularity
(i.e. $det=1$) of ${\mathcal T}(z)$.
\[
g_m(z)=-z\frac{(t^{-1}q^3z^2;q^4)_\infty (tq^5z^2;q^4)_\infty}
{(t^{-1}qz^2;q^4)_\infty (tq^3z^2;q^4)_\infty}
\]

\subsubsection{} The comultiplication defines the tensor
product of irreducible representations described above
\begin{equation} \label{T-repr}
{\mathcal T}(z)\mapsto \prod_{i=1}^Ng_{m_i}(z/a_i)T^{(m_1,\dots,m_N)}(z|a_1,\dots,a_N)
\end{equation}
where
\begin{equation} \label{T-matr}
T^{(m_1,\dots,m_N)}(z|a_1,\dots,a_N) ={\mathcal L}_1(z/a_1) \dots
{\mathcal L}_N(z/a_N)  ,
\end{equation}
where we have taken the matrix product of the matrices ${\mathcal L}(z)$ as
$2\times 2$ matrices. The $n$-th factor in \eqref{T-matr}
acts on the $n$-th factor of $V^{(m_1)}\otimes \dots \otimes V^{(m_N)}$.

Notice that this is also a tensor product of representations of $C_q$.

\subsubsection{Real forms}\label{q-r-form} As in the classical case there are two
real form of the algebra $C_q$ which are important for
finite dimensional spin chains.

Recall that a $*$-involution of a complex associative algebra
is an ani-involution of
the algebra, i.e. $(ab)^*=b^*a^*$ which is complex anti-linear:
$(\lambda a)^*=\bar{\lambda}a^*$. A real form of a complex $A$
corresponding to this involutions is real algebra which is the real subspace in $A$ spanned by the $*$-invariant
elements.

When $|q|=1$, we will write $q=\exp(i\gamma)$, the relevant real form of $C_q$, is characterized by the $*$-involution which acts on generators as
\[
e^*=f, \ \ k^*=k^{-1}
\]

When $q$ is real positive we will write $q=\exp(\eta)$.
In this case  the relevant real form is characterized by the $*$ involution acting on generators as:
\[
e^*=f, \ \ f^*=e, \ \ k^*=k
\]

As $\gamma\to 0$ or $\eta\to 0$ these real forms become real forms of the corresponding Poisson algebras described in section
\ref{cl-r-form} with $\epsilon=i$ and
$\epsilon=1$ respectively.

\subsection{The fusion of $R$-matrices and the degeneration of tensor products of irreducibles}

This section is the analog of the construction of
quantum $L$-operators by taking the tensor product of
2-dimensional representations.

Consider the product of $R$-matrices acting in the tensor
product of $n+m$ copies of $\CC^2$:
\[
\begin{array}{ccccc}R_{1'2'\dots m',12\dots n}(z)=&
 R_{1'1}(z)&R_{1'2}(zq)&\dots & R_{1'n}(zq^{n-1})\\
 & R_{2'1}(zq)&R_{2'2}(zq^2)&\dots & R_{2'n}(zq^{n})\\
 & \dots& \dots &\dots & \dots\\
 & R_{m'1}(zq^{m-1})&R_{m'2}(zq^m)&\dots & R_{m'n}(zq^{n+m-2})
\end{array}
\]

The operator $R(z)$ satisfies the identities
\[
R(z)R^t(z^{-1})=(zq-z^{-1}q^{-1})(z^{-1}q-zq^{-1})
\]
\[
PR(z)P=R^t(z)
\]
where $t$ is the transpiration operation (with respect
to the tensor product of the standard basis in $\CC^2$), and
\[
det(R(z))=(zq-z^{-1}q^{-1})^3(z^{-1}q-zq^{-1})
\]
From here we conclude that the matrix $PR(z)$ degenerates
at $z=q$ and $z=q^{-1}$ to the matrices of rank $3$ and $1$
respectively.

Define
\begin{equation}\label{p-proj}
\begin{array}{ccccc}P^{+}_{12\dots n}=&\VR_{12}(q)& \VR_{23}(q^2)& \dots & \VR_{1n}(q^{n-1})\\
 & & \VR_{23}(q) & \dots & \VR_{2n}(q^{n-2})\\
 & & & \dots & \\
 & & &  & \VR_{n-1 n}(q)
\end{array}
\end{equation}
where $\VR(z)=PR(z)$ and $P$ is the permutation matrix,
$P(x\otimes y)=y\otimes x$.

Consider $(\CC^2)^{\otimes n}$ as a representation of
$C_q$. Because all finite dimensional representations of
this algebra are completely reducible, it decomposes
into the direct sum of irreducible components. The irreducible
representation $V^{(n)}$ appears in this decomposition
with multiplicity $1$. One can show that the operator (\ref{p-proj}) is the orthogonal projector to $V^{(n)}$.
The proof can be found in \cite{KRS}.

Also, it is not difficult to show that
\begin{multline}
P^+_{12\dots n}R_{1',12\dots n}(zq^{-\frac{n-1}{2}})P^+_{12\dots n}=\\ (zq^{-\frac{n-3}{2}}-z^{-1}q^{\frac{n-3}{2}})\dots (zq^{\frac{n-1}{2}}-z^{-1}q^{-\frac{n-1}{2}}) R_{1',[12\dots n]}^{(1,n)}(z)
\end{multline}
where the linear operator $R^{1,n}(z)$ acts in $\CC^2\otimes V^{(n)}$ and the second factor appears as the $q$-symmetrized part
of the tensor product ${\CC^2}^{\otimes N}$. Moreover, it is easy to show that this operators is conjugate by a diagonal matrix to
${\mathcal L}(z)$:
\[
R^{(1,n)}(z)\simeq \left(
\begin{array}{cccc}
  zkq^{\frac{1}{2}}-z^{-1}k^{-1}q^{-\frac{1}{2}} & z^{-1}q^{-\frac{1}{2}}f  \\
   zq^{\frac{1}{2}}e &  zk^{-1}q^{\frac{1}{2}}-z^{-1}kq^{-\frac{1}{2}}
\end{array} \right)
\]
where $e,f, k$ act in the $n+1$ dimensional irreducible representation as it is described in section \ref{irreps}. In this realization of the irreducible representation weight vectors
appear as $v^{(n)}_k=P^{+}_{1\dots n}e_1\otimes e_1\otimes e_2
\otimes e_2$ where we have $n-k$ copies of $e_1$ and $k$ copies of $e_2$ in this tensor product.

Similarly
\[
\begin{array}{cccc} P^+_{1'2'\dots m'}P^+_{12\dots n} & R_{1'1}(zq^{-\frac{n+m-2}{2}})&\dots & R_{1'n}(zq^{\frac{n-m}{2}})\\
 &R_{2'1}(zq^{-\frac{n+m-4}{2}})&\dots & R_{2'n}(zq^{\frac{n-m+2}{2}})\\
 & \dots & \dots & \dots \\
 & P^+_{1'2'\dots m'}& P^+_{12\dots n}=&
 \end{array}
 \]
\[
\begin{array}{ccc}
(zq^{-\frac{n+m-4}{2}}-z^{-1}q^{\frac{n+m-4}{2}})&\dots & (zq^{-\frac{n-m}{2}}-z^{-1}q^{-\frac{n-m}{2}})\\
\dots & \dots & \dots \\
(zq^{-\frac{n-m-2}{2}}-z^{-1}q^{-\frac{n-m-2}{2}})&\dots & (zq^{\frac{n+m-2}{2}}-z^{-1}q^{-\frac{n+m-2}{2}})\\
 & & R^{(m,n)}(z)
\end{array}
\]
Here we assume that $m<n$. The matrix elements of $R^{(m,n)}(z)$
are Laurent polynomials of the form $z^{-m}P(z^2)$ where $P(t)$
is a polynomial of degree $m$.

The matrix $R^{(n,m)}$ also can be expressed in terms $e, f, k^{\pm 1}$.

The matrices $R^{(k,l)}$ satisfy the the Yang-Baxter equation
\[
R_{12}^{(l,m)}(z)R_{13}^{(l,n)}(zw)R_{23}^{(m,n)}(w)= R_{23}^{(m,n)}(w)R_{13}^{(l,n)}(zw)R_{12}^{(l,m)}(z)
\]

In addition to this they satisfy identities
\[
R^{(l,m)}(z)R^{(m,l)}(z^{-1})^T=s_{ml}(z)s_{ml}(z^{-1})
\]
and
\[
R_{12}^{(l,m)}(z)^{t_1}=(-1)^l (D_zC^{(m)}\otimes 1)R_{12}^{(l,m)}(zq)( D_z^{-1}{C^{(m)}}^{-1}\otimes 1)
\]
where $s_{ml}(z)$ is a Laurent polynomial in $z$ which is easy to
compute, $C^{(m)}=P^+_{12\dots n}\otimes C P^+_{12\dots n}$,
 and we assume that $l\leq m$.

\subsection{Higher transfer-matrices}
Define $C_q(\widehat{SL}_2)$-valued matrices
\[
{\mathcal T}^{(m)}(z)=P^+_{12\dots n}{\mathcal T}_1(zq^{{m-1\over 2}})\dots
{\mathcal T}_m(zq^{-{m-1\over 2}})P^+_{12\dots n}
\]
where $P^+_{12\dots n}$ is defined above and ${\mathcal T}$ is the
matrix (\ref{gen-aff}).

They satisfy the relations
\[
R^{(l,m)}(z)_{12}{\mathcal T}^{(l)}_1(zw){\mathcal T}^{(m)}_2(w)=
{\mathcal T}^{(m)}_2(w){\mathcal T}^{(l)}_1(zw)R^{(l,m)}(z)_{12}
\]

For non-zero $d$ define the following elements of $C_q(\widehat{SL}_2)$
\[
\tau_\ell(z) = ({\mbox{id}}\otimes tr_{V^{(\ell)}})
({\mathcal T}^{(\ell)}(z)d^{(l)}) \ .
\]
where $d^{(l)}=diag(d^l,d^{l-2},\dots, d^{-l})$ and $d\neq 0$.

The fusion relations for ${\mathcal T}(z)$  imply the following
recursive relations for $\tau_\ell(z)$:
\[
\tau_1(z)\tau_\ell(zq) = \tau_{\ell +1}(z)+\tau_{\ell -1}(zq^2)
\]
which can be solved in terms of determinants \cite{BR}:
\[
\tau_\ell(z) = {\mbox{det}} \left(
\begin{array}{cccc}
  \tau_1(z) & 1 & {} & 0 \\
  1 & \tau_1(zq) & {\ddots} & {} \\
 {} & \ddots & \ddots & {1} \\
  0 & {} & 1 & \tau_1(zq^{\ell -1})
\end{array}\right).
\]
The remarkable fact is that elements $\{\tau_\ell(z)\}$ also satisfy
another set of relations which also follow from the fusion
relations:
\begin{equation}  \label{F}
\tau_\ell(zq^{\frac 12})\tau_\ell(zq^{-\frac 12}) =
\tau_{\ell +1}(z) \tau_{\ell -1}(z)+1.
\end{equation}

\subsection{Local integrable quantum spin Hamiltonians}
Transfer-matrices
\[
t_m(u)=tr_a(R^{m,m_1}_{a1}(z/a_1)\dots R^{m,m_1}_{a1}(z/a_1)d_a^{(l)})
\]
form a commuting family of operators
in $V^{(m_1)} \otimes \dots \otimes V^{(m_N)}$.
They quantize the generating functions for classical
spin chains and can be used to construct local quantum spin chains.
Below we outline two common constructions of local Hamiltonians
from such transfer-matrices.

\subsubsection{Homogeneous $SU(2)$ spin chains}\label{local-homo}

The homogeneous Heisenberg model of spin $S$
corresponds to the choice $m_1=\dots =m_N=l$ and $a_N=\dots= a_1=1$.
We will denote corresponding transfer matrices as $\tau^{(l)}_m(u)$

In the case $m=1$, the transfer matrix $t^{(1)}_1(z)$ is
\begin{equation}  \label{6v}
t_1^{(1)}(z) = tr_0(R_{0N}(z)\dots R_{01}(z))
\end{equation}
where $R(z)$ is the matrix \eqref{Rmatrix}.

The linear operator \eqref{6v} is the transfer matrix of the 6-vertex model
\cite{LiebM} \cite{Ba}. It is also a generating function for local spin
Hamiltonians:
\begin{equation}  \label{loc-ham-1}
H_1 = \frac{d}{du} \log (t_1^{(1)}(u))|_{u=0}~ =~ \sum^N_{n=1}
(\sigma^x_n \sigma^y_{n+1} +\sigma^y_n \sigma^y_{n+1} +
\Delta\sigma^z_n \sigma^z_{n+1}),
\end{equation}
\begin{equation}  \label{loc-ham}
H_k =(\frac{d}{du})^k \log t_1^{(1)}(u)|_{u=0} =
\sum^N_{n=1} H^{(k)} (\sigma_n,\dots ,\sigma_{n+k}).
\end{equation}
Here $\sigma^x,\sigma^y,\sigma^z$ are Pauli matrices
and $\sigma_n$ is the collection of Pauli matrices acting nontrivially in the n-th factor of the tensor product.
\begin{equation}\label{sigma-m}
\sigma^x=\left(\begin{array}{ll} 0 & 1 \\ 1 & 0
\end{array}\right) \ ,
\qquad
\sigma^y=\left(\begin{array}{ll} 0 & i \\ -i & 0
\end{array}\right) \  ,
\qquad
\sigma^z=\left(\begin{array}{ll} 1 & 0 \\ 0 & -1
\end{array}\right) \ .
\end{equation}

If $l>1$ similar analysis can be done for the transfer-matrix
\[
t^{(l)}_l(z)=tr_a(R_{a1}^{(l,l)}(z)\dots R_{aN}^{(l,l)}(z))
\]
Since $R^{(l,l)}(1)=P$, we have:
\[
t^{(l)}_l(1)=tr_a(P_{a1}\dots P_{aN})=tr_a(P_{12}P_{13}\dots P_{1N}P_{a1})=P_{12}P_{13}\dots P_{1N}
\]
Here we used the identities $P_{a1}A_aP_{a1}=A_1$
and $tr_a(P_{a1})=I_1$.

The operator $T=t^{(l)}_l(1)$ is the translation matrix:
\[
T(x_1\otimes x_2\dots \otimes x_n)=x_N\otimes x_1\otimes \dots x_{N-1}
\]

Differentiating $t^{(l)}_l(z)$ at $z=1$ we have:
\[
{t^{(l)}_l(1)}'=\sum_{i=1}^N tr_a(P_{a1}\dots P_{ai-1}{R^{(l,l)}_{ai}(1)}'
P_{ai+1}\dots P_{aN})=T\sum_{i=1}^N H^{(l)}_{ii+1}
\]

Similarly, taking higher logarithmic derivatives  of $t_{l}^{(l)}(z)$ at $z=1$ we will have
higher local Hamiltonians acting in
$({\Bbb C}^{l+1})^{\otimes N}$:
\[
H_k =(z\frac{d}{dz})^k t^{(l)}_{l}(z)|_{z=1} =
\sum^N_{n=1} H^{(k)}_{n,\dots ,n+k} \ .
\]
Here the matrix $H^{(k)}$ acts in
$({\Bbb C}^{l+1})^{\otimes {k+1}}$.
The subindices show how this matrix acts in
$({\Bbb C}^{l+1})^{\otimes N}$.

One can show that these local quantum spin chain Hamiltonians in the limit $q\to 1$ and $l\to \infty$ become classical Hamiltonians
described in section \ref{cl-Ham-sec}, assuming that $q^l$ is fixed.

\subsubsection{Inhomogeneous $SU(2)$ spin chains}
The construction using degeneration points.

The construction of inhomogeneous local operators
is easy to illustrate on the spin chain where the inhomogeneities
alternate.
\[
t_m(z)=tr_a(R_{a1}^{(m,l_1)}(za^{-1})R_{a2}^{(m,l_2)}(za)\dots R_{a,2N-1}^{(m,l_1)}(za^{-1})R_{a,2N}^{(m,l_2)}(za))
\]

Now we have two sublattices and two translation operators
\[
T_{even}=P_{24}P_{26}\dots P_{2,2N}, \ \  T_{odd}=P_{13}P_{15}\dots P_{1,2N-1}
\]

It is easy to find the following special values of
the transfer-matrix:
\[
t_{l_1}(a)=T^{even}R^{(l_2,l_1)}_{21}(a^{-2})\dots R^{(l_2,l_1)}_{2N,2N-1}(a^{-2})
\]
\[
t_{l_2}(a^{-1})=R^{(l_1,l_2)}_{12}(a^{2})\dots R^{(l_1,l_2)}_{2N-1,2N}(a^{2})T^{odd}
\]
These operators commute and
\[
t_{l_1}(a)t_{l_2}(a^{-1})=T^{even}T^{odd}
\]
\begin{multline}
t_{l_1}(a)t_{l_2}(a^{-1})^{-1}=T^{even}(T^{odd})^{-1}\\
R^{(l_2,l_1)}_{2,2N-1}(a^{-2})R^{(l_2,l_1)}_{4,1}(a^{-2})\dots R^{(l_2,l_1)}_{2N,2N-3}(a^{-2})
R^{(l_1,l_2)}_{2N-1,2N}(a^{-2})\dots R^{(l_1,l_2)}_{1,2}(a^{-2})
\end{multline}
Taking logarithmic derivatives of $t_{l_1}(z)$ at $z=a$ and of $t_{l_2}(z)$ at $z=a^{-1}$ we again will have local operators, for
example:
\begin{multline}
z\frac{d}{dz} \log t_{l_1}(z)|_{z=a}=\sum_{n=1}^N R^{-1}_{2n+1,2n}(a^2)R'_{2n+1,2n}(a^2)\\
+\sum_{n=1}^N R^{-1}_{2n+1,2n}(a^2)P_{2n+1,2n-1}R'_{2n+1,2n-1}(1)R_{2n+1,2n}(a^2)
\end{multline}

These Hamiltonians in the semiclassical limit reproduce inhomogeneous classical spin chains described earlier.

\section{The spectrum of transfer-matrices}\label{Bethe-ans}

\subsection{Diagonalizability of transfer-matrices}

Assume that $q$ is real. Let $t(u)^*$ be the Hermitian conjugation
of $t(u)$ with respect to the standard Hermitian scalar product on $(\CC^2)^{\otimes N}$. It is easy to prove, using the identities
for $R(u)$ that
\[
t(z|a_1,\dots, a_N)^*=(-1)^N t(\overline{z}^{-1}q^{-1}|\overline{a_1}^{-1}\dots \overline{a_N}^{-1})
\]
where $\overline{z}$ is the complex conjugate to $z$.

Because $t(z)$ is the commutative family of operators,
the operators $t(z)$ is normal when $\overline{a_i}=a_i^{-1}$.
Therefore for these values of $a_i$ it is diagonalizable.
Since $t(z)$ is linear (up to a scalar factor) in $a_i^2$,
this imply that $t(z)$ is diagonalizable for all generic complex
values of $a_i$, and for the same reasons for all generic
complex values of $q$.

\subsection{Bethe ansatz for $sl_2$}
In this section we will recall the algebraic Bethe ansatz
for the inhomogeneous finite dimensional spin chain.

The quantum monodromy matrix for such spin chain is:
\begin{equation}\label{mon-matr}
T(z)=\cL_1^{(m_1)}(z/a_1)\dots \cL_N^{(m_N)}(z/a_N)D
\end{equation}
where
\[
D=\left(\begin{array}{cc} Z & 0 \\ 0 & Z^{-1}\end{array}\right)
\]

It is convenient to write it as
\[
T(z)=\left(\begin{array}{cc} A(z) & B(z) \\ C(z) & D(z)
\end{array}\right)
\]
In the basis $e_1\otimes e_1,e_1\otimes e_2,e_2\otimes e_1,e_2\otimes e_2$ of the tensor product $\CC^2\otimes \CC^2$
we have:
\[
T_1(zw)T_2(w)=\left(\begin{array}{cccc} A(zw)A(w) & A(zw)B(w) & B(zw)A(w) & B(zw)B(w) \\
A(zw)C(w) & A(zw)D(w) & B(zw)C(w) & B(zw)D(w)\\
C(zw)A(w)& C(zw)B(w) & D(zw)A(w) & D(zw)B(w) \\
C(zw)C(w) & C(zw)D(w) & D(zw)C(w) & D(zw)D(w)
\end{array}\right)
\]

Writing the $R$-matrix in the tensor product basis
as in (\ref{Rmatrix})
\[
R(z)=\left(\begin{array}{cccc} 1 & 0 & 0 & 0 \\
0 & f(z) & g(z)z^{-1} & 0\\
0 & g(z)z & f(z) & 0 \\
0 & 0 & 0 & 1
\end{array}\right)
\]
the commutation relations (\ref{R-T}) produce the following relations between $A$ and $B$ and $D$ and $B$:
\[
A(z)B(v)={1\over f(vz^{-1})}B(v)A(z)-{g(vz^{-1})zv^{-1}\over f(vz^{-1})}B(z)A(v)
\]
\[
D(z)B(v)={ 1\over f(zv^{-1})}B(v)D(z)-{g(zv^{-1})zv^{-1}\over f(zv^{-1})}B(z)D(v)
\]
where $f(z)={z-z^{-1}\over zq-z^{-1}q^{-1}}, g(z)={q-q^{-1}\over zq-z^{-1}q^{-1}}$.

The $L$-operators act on the vector
\[
\Omega=v_0^{(m_1)}\otimes v_0^{(m_2)}\otimes \dots\otimes v_0^{(m_N)}
\]
in a special way:
\[
\cL_i(z)\Omega=\left(\begin{array}{cc} (zq^{m_i+1\over 2}-z^{-1}q^{-{m_i+1\over 2}})\Omega & * \\ 0 &  (zq^{-{m_i-1\over 2}}-z^{-1}q^{{m_i-1\over 2}})\Omega
\end{array}\right)
\]

From here it is clear that $\Omega$ is an eigenvector
for operators $A$ and $D$ and that $C$ annihilates it:
\[
T(z)\Omega=\left(\begin{array}{cc} \alpha(z)\Omega & * \\ 0 &  \delta(z)\Omega
\end{array}\right)
\]
where
\[
\alpha(z)=Z\prod_{i=1}^N(za_i^{-1}q^{m_i+1\over 2}-z^{-1}a_iq^{-{m_i+1\over 2}})
\]
\[
\delta(z)=Z^{-1}\prod_{i=1}^N(za_i^{-1}q^{-{m_i-1\over 2}}-z^{-1}a_iq^{m_i-1\over 2})
\]

The details of proof of the following construction of eigenvectors can be found in \cite{FT}.
\begin{theorem}
The following identity holds
\[
(A(z)+D(z))B(v_1)\dots B(v_n)\Omega=\Lambda(z|\{v_i\})B(v_1)\dots B(v_n)\Omega
\]
where
\begin{equation}\label{BE-eigenv}
\Lambda(z|\{v_i\})=\alpha(z)\prod_{i=1}^n{v_iz^{-1}q-v_i^{-1}zq^{-1}\over v_iz^{-1}-v_i^{-1}z}+\delta(z)\prod_{i=1}^n{v_i^{-1}zq-v_iz^{-1}q^{-1}\over v_i^{-1}z-v_iz^{-1}}
\end{equation}
if the numbers $v_i$ satisfy the Bethe equations:
\begin{equation}\label{BE}
\prod_{\alpha=1}^N{v_ia_\alpha^{-1}q^{m_\alpha+1\over 2}-z^{-1}a_\alpha q^{-{m_\alpha+1\over 2}}\over v_ia_\alpha^{-1}q^{-{m_\alpha-1\over 2}}-z^{-1}a_\alpha q^{m_\alpha-1\over 2}}=-Z^2\prod_{j=1}^n {v_iv_j^{-1}q-v_i^{-1}v_jq^{-1}\over v_iv_j^{-1}q^{-1}-v_i^{-1}v_jq}
\end{equation}
\end{theorem}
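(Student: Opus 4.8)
The plan is to follow the standard algebraic Bethe ansatz argument, the details of which are in \cite{FT}. First I would record the relevant exchange relations. From \eqref{R-T} written in the tensor product basis of $\CC^2\otimes\CC^2$ one extracts, besides the two relations between $A,D$ and $B$ already displayed in the text, the relation
\[
B(z)B(v)=B(v)B(z),
\]
which guarantees that the vector $B(v_1)\dots B(v_n)\Omega$ is symmetric in the $v_i$. This symmetry is what makes the final bookkeeping consistent: any formula we derive must be invariant under permutations of the $v_i$.

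The core computation is to push $A(z)$ and then $D(z)$ through the product $B(v_1)\dots B(v_n)$ using the two commutation relations repeatedly. Each time $A(z)$ (or $D(z)$) hops past one $B(v_j)$ it produces a ``wanted'' term, in which $A(z)$ (resp.\ $D(z)$) keeps its argument $z$ and picks up the scalar factor $1/f(v_jz^{-1})$ (resp.\ $1/f(zv_j^{-1})$), and an ``unwanted'' term in which the arguments $z$ and $v_j$ get swapped. Collecting all the wanted terms, one moves $A(z)$ (resp.\ $D(z)$) all the way to the right onto $\Omega$, where it acts by $\alpha(z)$ (resp.\ $\delta(z)$); using $f(z)=(z-z^{-1})/(zq-z^{-1}q^{-1})$ one recognizes the product of $1/f$ factors as exactly the products appearing in \eqref{BE-eigenv}, giving the stated $\Lambda(z|\{v_i\})$. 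For the unwanted terms, the key simplification — this is where the permutation symmetry of $B(v_1)\dots B(v_n)$ is used — is that all unwanted terms in which, say, $z$ has been exchanged with a fixed $v_k$ can be summed into a single expression: the coefficient of $B(z)B(v_1)\dots\widehat{B(v_k)}\dots B(v_n)\Omega$ is a sum of two pieces, one coming from the $A$-channel and one from the $D$-channel, each a product over the remaining $v_j$. Setting this total coefficient to zero for every $k$ is precisely the Bethe equation \eqref{BE}: the $A$-piece contributes $\alpha(v_k)$ times $\prod_{j\ne k}$ of the $f$-ratios, the $D$-piece contributes $\delta(v_k)$ times the complementary ratios, and equating them and substituting the explicit $\alpha,\delta,f,g$ yields \eqref{BE} after clearing denominators (the factor $-Z^2$ comes from $\alpha(v_k)/\delta(v_k)$).

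The main obstacle is the combinatorial control of the unwanted terms. A priori, commuting $A(z)$ past $n$ factors of $B$ generates $2^n$ terms, and it is not obvious that the roughly $n\cdot 2^{n-1}$ unwanted terms regroup so cleanly. The standard resolution is an induction on $n$: one shows that the unwanted terms with the exchanged argument sitting in the first slot are handled by the first hop, and the rest are accounted for by the inductive hypothesis after using $B(z)B(v)=B(v)B(z)$ to reorder; alternatively one invokes the symmetry of the Bethe vector directly to argue that it suffices to compute the coefficient of $B(v_1)B(v_2)\dots B(v_n)\Omega$ with $v_1$ replaced by $z$ and multiply by $n$. Either way the identities $f(x)f(x^{-1})+\dots$ needed to collapse the two channels are elementary rational-function identities in $q$ and the arguments, which I would verify once and then cite. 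Everything else is substitution of the explicit $\alpha(z)$, $\delta(z)$ and the entries of \eqref{Rmatrix}.
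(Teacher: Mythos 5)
Your proposal is the standard algebraic Bethe ansatz computation — commuting $A(z)$ and $D(z)$ through the $B(v_i)$'s via the two exchange relations displayed in the text, reading off the wanted terms as $\Lambda(z|\{v_i\})$, and cancelling the unwanted terms by the Bethe equations — which is precisely the argument the paper sets up and then defers to \cite{FT} for details. This is correct and is essentially the same approach as the paper's (cited) proof.
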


Note that the formula for the eigenvalues in terms
of solutions to Bethe equations is a rational function.
Bethe equations can be regarded as conditions
\[
res_{z=v_j} \Lambda(z|\{v_i\})=0
\]
This agrees with the fact that $t(z)$ is a commuting family
of operators which has no poles at finite $z$.

\subsection{The completeness of Bethe vectors}
The next step is to establish whether the construction outlined above all eigenvectors. We will focus here on the spin chain of
spin $1/2$.

Assume that $q$, $e^{2H}$ , and inhomogeneities $a_i$ are generic.
Let us demonstrate that the vectors
\begin{equation}\label{bethe}
B(v_1)\dots B(v_n)\Omega
\end{equation}
where $v_i$ are solutions to Bethe equations give all $2^N$ eigenvectors of the transfer-matrix.

\subsubsection{} Consider the limit of (\ref{bethe}) when $a_N\to \infty$.

Assume that $v$ is fixed and $a_N\to \infty$. From the definition of $B(v)$ we have:
\[
B(v)=a_Nq^{-\frac{1}{2}}v^{-1}(\tilde{A}(v)\otimes f-\tilde{B}(v)\otimes K)(1+o(1))
\]
where $\tilde{A}(v)$, $\tilde{B}(v)$ are elements of the quantum monodromy
matrix (\ref{mon-matr}) with only $N-1$ first factors.

On the other hand if $a_N\to \infty$ and $v\to \infty$ such that $v=wa_N$ and $w$ is finite the asymptotic is different:
\[
B(v)=w^{-1}q^{-\frac{1}{2}}\prod_{n=1}^{N}(wa_Na_n^{-1}q^{\frac{1}{2}})
k\otimes\dots\otimes k\otimes f (1+o(1))
\]

From here we obtain the asymptotic of Bethe vectors
when all $v_i$ are fixed and $a_N\to \infty$
\begin{multline}\label{as-B-1}
B(v_1)\dots B(v_n)\Omega_N\to q^{\frac{m_N+1}{2}n}(-a_Nq^{\frac{1}{2}})^n\prod_{i=1}^n v_i^{-1}\\(\tilde{B}(v_1)\dots \tilde{B}(v_{n})\Omega_{N-1}\otimes v_0^{(m_N)}-\sum_{i=1}^n q^{-\frac{m_N+1}{2}-i+1}\\
\tilde{B}(v_1)\dots \tilde{A}(v_i)\dots\tilde{B}(v_{n})\Omega_{N-1}\otimes f v_0^{(m_N)})(1+o(1)).
\end{multline}
Similarly, when $v_1,\dots, v_{n-1}$ are fixed
and $a_N\to \infty$ such that $v_n=wa_N$ we have
\begin{multline}\label{as-B-2}
B(v_1)\dots B(v_n)\Omega_N\to q^{\frac{m_N+1}{2}(n-1)-n}(-a_Nq^{\frac{1}{2}})^{n-1}\prod_{i=1}^{n-1} v_i^{-1}q^{\frac{m_i}{2}}\\
\tilde{B}(v_1)\dots \tilde{B}(v_{n-1})\Omega_{N-1}\otimes fv_0^{(m_N)}(1+o(1))
\end{multline}

\subsubsection{}

Solutions to (\ref{BE}) have the following possible asymptotic when $a_N\to \infty$ :

1. For all $j=1,\dots, N$, $\lim_{a_N\to \infty}v_j=v'_j$ where
$\{v'_j\}$ is a solution to the Bethe system for the spin chain of length $N-1$ with inhomogeneities $a_1,\dots, a_{N-1}$ and $Z$.

2. For one of $v_j$'s, say for $v_n$ we have $v_n=a_Nw+O(1)$
and for others  $\lim_{a_N\to \infty}v_j=v'_j$ where
$\{v'_j\}$ is a solution to the Bethe system for the spin chain of length $N-1$ with inhomogeneities $a_1,\dots, a_{N-1}$ and $Zq^{-1}$.
From the Bethe equation for $v_n$ we have
\[
w^2={1-Z^2q^{-N+2n}\over q^2-Zq^{-N+2n}}
\]

3. More then one of $v_i$ is proportional to $a_N$.

Using induction and the asymptotic of Bethe vectors (\ref{as-B-1})
and (\ref{as-B-2})
it is easy to show that only first two options describe the
spectrum of the spin $1/2$ transfer-matrix. Similar arguments were used in \cite{KKR} to prove the completeness of Bethe vectors in
an $SL_n$ spin chain.

This implies immediately that there are $\left(\begin{array}{c} N   \\ n \end{array} \right)$ Bethe vectors
for each $0\leq n\leq N$. And that the total number of Bethe vectors is
\[
2^N=\sum_{n=0}^N \left(\begin{array}{c} N   \\ n \end{array} \right)
\]
Other solutions to Bethe equations describe eigenvectors in infinite dimensional representations of quantized affine algebra with the
same weights. They do not correspond to any eigenvectors of
the inhomogeneous spin $1/2$.

\subsubsection{} For special values of $a_\alpha$ the solutions to
the Bethe equations may degenerate (a level crossing may occur
in the spectrum of $t(z)$). In this case the Bethe ansatz should involve derivatives of vectors (\ref{bethe}).

\section{The thermodynamical limit}

The procedure of "filling Dirac seas" is a way to construct
physical vacua and the eigenvalues of quantum integrals of motion
in integrable spin chains solvable by Bethe ansatz.

To be specific, consider the homogenous spin chain of spins $1/2$.
Let $H_1, H_2, \dots$ be quasilocal Hamiltonians described in the
section \ref{local-homo} with $q=\exp(\eta)$ for some real $\eta$.

Take the linear combination
\begin{equation}\label{l-Hamilt}
H(\lambda)=\sum_k H_k \lambda_k
\end{equation}
This operator is bounded. Let $\Omega_N(\lambda)$ be its
normalized ground state. As $N\to \infty$, matrix elements $(\Omega_N, a\Omega_N)$  converges to the state $\omega_\lambda$
on the inductive limit of the algebra of observables.
The action of local operators on $\omega_\lambda$ generate the
Hilbert space ${\mathcal H}$.

Since the eigenvalues of coefficients of $t(u)$ can be computed
in terms of solutions to Bethe equations, the spectrum of
these operators in the large $N$ is determined by the large $N$
asymptotic of solutions to Bethe equations.

The main assumption in the analysis of the Bethe equations
in the limit $N\to\infty$ is that the numbers $\{v_\alpha^{(0)}\}$
\footnote{solutions to the Bethe equations corresponding to the minimum
eigenvalue of $H(\lambda)$} are
distributed along the real line with some density $\rho(u)$.
The intervals where
$\rho(u)\neq 0$ are called Dirac seas. For Hermitian Hamiltonian
(\ref{l-Hamilt}) there is  strong evidence that Dirac seas
is a finite colelction of intervals $(B^+_1,B^-_1),\dots ,(B^+_n,B^-_n)$.
Here numbers $B_\alpha^\pm$ are boundaries of Dirac seas.
We assume they are increasing from left to right.
The boundaries of Dirac seas are uniquely determined by $\{\lambda_l\}$.

A solution to the Bethe equations is said to contain an $m$-string,
when as $N\to\infty$, there is a subset of $\{v_i\}$ of the form
\[
v^{(m)}+i\frac{\eta}{2}m,
v^{(m)}+i\frac{\eta}{2}(m-2),\dots ,v^{(m)}-i\frac{\eta}{2}(m-2),
v^{(m)}-i\frac{\eta}{2}m .
\]
with some real $v^{(m)}$.

The excitations over the ground states can be of the following types:
\begin{itemize}
\item A hole in the Dirac sea $(B^+_k,B^-_k)$ correspond to the
solution to Bethe equations which has one less number $v_i$, and
as $N\to\infty$ the remaining $v_i$ ``fill'' the same Dirac seas
with the the densities deformed by the fact that one of the numbers
$\{v_\alpha^{(0)}\}$ is missing and other are ``deformed'' by the
missing one. The number which is ``missing'' is a state with a hole
is the ``rapidity'' $v\in (B^+_k,B^-_k)$ of the hole.
\item Particles correspond to ``adding''
one real number to the collection $\{v_\alpha^{(0)}\}$).
\item $m$-strings, $m > 1$  (corresponding to adding one $m$-string
solution to the collection $\{v_\alpha^{(0)}\}$).
\end{itemize}

There are convincing arguments, that the Fock space of the system with
the Hamiltonian (\ref{l-Hamilt}) has the follwoing structure.
It has a vacuum state $\O_\lambda$ corresponding to the solution
of the Bethe equations with the minimal eigenvaue of $H(\lambda)$.
Excited states are eigenvectors of the Hamiltonian (and of all
other integrals) which corresond to solutions of Bethe
euqations with finitely many holes, particles, and $m$-strings. It has the
following structure
\begin{eqnarray*}
{\mathcal H}(\{B^+_i,B^-_i\}^k_{i=1}) & = &
\bigoplus_{ N_h\geq 0, N_p\geq 0,
N\geq 0}
\bigoplus_{ n_1+\dots +n_k=N_h,N_p} \\
& \cdot & \bigotimes^k_{j=1}
L_2^{\mbox{symm}}( {I_1^+}^{\times n_1}\times\dots {I_k^+}^{\times n_k}\times
I^{N_p}\times
{S^1}^{N})
\end{eqnarray*}
Here we used the notation $I_l^+=(B^+_l,B^-_l), I_l^-=(B^+_{l-1},B^-_l)$,
where $B^+_0=-\pi$, and $B^-_{k+1}=\pi$; $n_k$ is the number
of holes in the Dirac sea $(B^+_k,B^-_k)$, $N_p$ is the number of
particles; $I$ is the complement to the Dirac seas, $N$ is the number
of strings with $m\geq 2$.  The symbol ``symm'' means
certain symmetrization procedure which we will not discuss here
(see, for example, \cite{JM} for a discussion of the aniferromagnetic
ground state).

Varying $\{\lambda_k\}$, or equivalently, positions $\{ B_k^\pm\}$ of
Dirac seas we obtain a ``large'' part of the space of states of the
spin chain in the limit $N\to \infty$:
$({\Bbb C}^2)^{\otimes N}$ in the limit $N\to\infty$ into
the direct integral of
separable Hilbert spaces:
\begin{equation}       \label{decomp}
({\Bbb C}^2)^{\otimes N} \to \bigoplus_{k\geq 0}
\int^\oplus_{[-\pi,\pi]^{\times2k}}
{\mathcal H}(\{B^+_i,B^-_i\}^k_{i=1})
\end{equation}

\section{The 6-vertex  model}

\subsection{The 6-vertex configurations and boundary conditions}
The $6$-vertex model is a model is statistical mechanics
where states are configurations of arrows on a square planar grid, see an example on Fig. \ref{dwbc_conf}. The weights are assigned
to vertices of the grid. They depend on the arrows on edges surrounding the vertex non-zero weights correspond to the
configurations on Fig. \ref{vertices}, to configurations where
the number of incoming arrows in equal to the number of outgoing
arrows. This is also known as the ice rule \cite{LW}.

\begin{figure}[t]
\begin{center}
\includegraphics{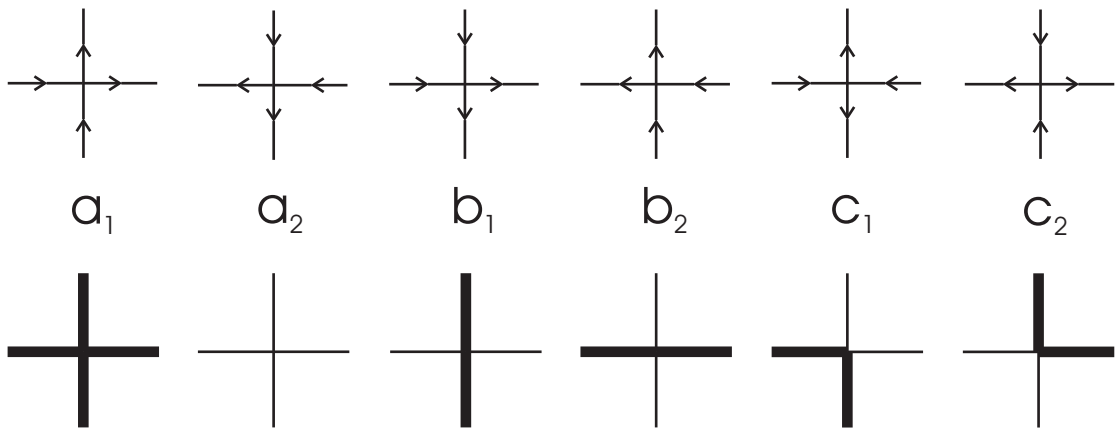}
\end{center}
\caption{The $6$ types of vertices and the corresponding
thin and thick edges configurations.}
\label{vertices}
\end{figure}

Each configuration of arrows on the lattice can be equivalently
described as the configuration of ``thin'' and ``thick'' edges (or
``empty'' and ``occupied'' edges) as it is shown on figure \ref{vertices}.
There should be an even number of thick edges at each vertex as
a consequence of the ice rule.

The thick edges form paths going from NorthWest (NW) to SouthEast (SE). We assume that when there are 4 think edges meeting at a vertex, the
corresponding paths meet at this point and then are going apart.
So, equivalently, configurations of the $6$-vertex model can be
regarded as configurations of paths going from NW to SE satisfying the rules from Fig. \ref{vertices}\footnote[1]{One can consider such configurations on any 4-valent graph. But only for special graphs and special Boltzmann weights
one can compute the partition function pe site.}.

\subsection{Boundary conditions}
It is natural to consider the 6-vertex model on surface grids.

If the surface is a  domain on a plane we will say that an edge is
{\it outer} if it intersects the boundary of a domain. We assume that the boundary is chosen such that it intersects each edge at most once. Outer edges are attached to a 4-valent vertex by one side and to the boundary by the other side.

Fixed boundary conditions means that fixing the 6-vertex
configurations on outer edges. An example of fixed boundary conditions known as domain wall (DW) boundary conditions on a
square domain is shown on fig. \ref{dwbc_conf}.

We will be interested in three types of boundary conditions:

\begin{itemize}
\item A {\it domain} (connected simply connected on a plane)
with {\it fixed boundary conditions}, see Fig. \ref{domain}. We will
also call this Dirichlet boundary conditions.
\begin{figure}[t]
\begin{center}
\includegraphics[height=3cm,width=3cm]{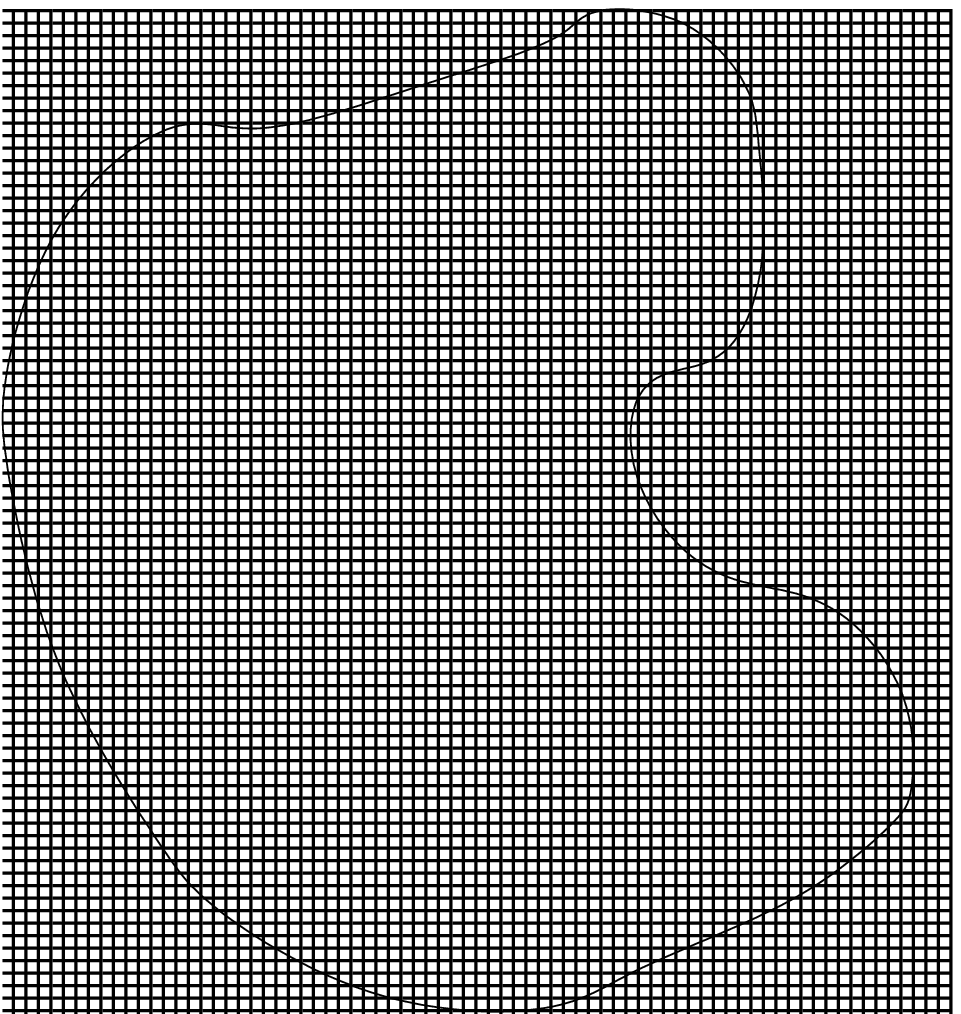}
\end{center}
\caption{A domain, connected, simply-connected.}
\label{domain}
\end{figure}
\item  A {\it cylinder with fixed boundary conditions}, see Fig. \ref{cyliner}. This
case can be regarded as a domain with states on outer edges of two
sides being identified and with fixed boundary conditions on other
sides.
\begin{figure}[t]
\begin{center}
\includegraphics[height=3cm,width=3cm]{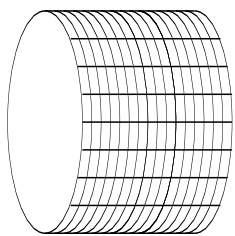}
\end{center}
\caption{Cylindric boundary conditions.}
\label{cylinder}
\end{figure}
\item Identification of states on outer edges of opposite sides of a
  rectangle gives the states for the 6-vertex model on a torus, see
  Fif. \ref{torus}. It is also
\begin{figure}[t]
\begin{center}
\includegraphics[height=3cm,width=3cm]{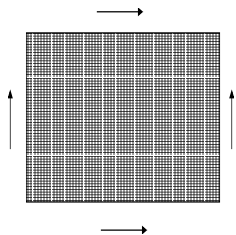}
\end{center}
\caption{Toric boundary conditions.}
\label{torus}
\end{figure}
known as the $6$-vertex model with periodical boundary conditions.
\end{itemize}

\subsection{The partition function and local correlation functions}

To each configuration  $a_1$, $a_2$, $b_1$, $b_2$, $c_1$, and $c_2$ on Fig.
\ref{vertices} we assign Boltzmann weights which
we denote by the same letters. The physical meaning of a Boltzmann weight is $\exp(-\frac{E}{T})$, where $E$ is the energy of a state and $T$ is the temperature (in appropriate units), so all numbers
$a_1$, $a_2$, $b_1$, $b_2$, $c_1$, and $c_2$ should be positive.

The weight of the configuration is the product of weights
corresponding to vertices inside the domain of weights assigned
to each vertex by the 6-vertex rules.

The 6-vertex model is called {\it homogeneous} if the weight
assigned to a vertex depends only on the configuration of arrows
on adjacent edges and not on the vertex itself.

When the weights also depend on the vertex the model is
called {\it inhomogeneous}.

The partition function is the sum of weights of all states of the model
$$
Z=\sum_{\rm states}\prod_{\rm vertices} w(\rm vertex),
$$
where $w(\rm vertex)$ is the weights of a vertex assigned according
to  Fig. \ref{vertices}.

Weights of states  define the probabilistic measure on the set of states of
the $6$-vertex model. The probability of a state is given by the
ratio of the weight of the state to the partition function of the
model
\begin{equation}\label{6v-prob}
P(state)=\frac{\prod_{\rm vertices} w(\rm vertex)}{Z}.
\end{equation}

The{\it characteristic function} of an edge $e$ is the function defined on the set of 6-vertex states
$$
\sigma_e(state)=\left\{
\begin{array}{cc} 1, & \ e\,\,\mbox{is occupied by a path},\\
0, &  \mbox{otherwise}.
\end{array} \right .
$$

A {\it local correlation function} is the expectation value of the product of such characteristic functions with respect to the measure (\ref{6v-prob}):
$$
\label{loc_cor}
\langle\sigma_{e_1}\sigma_{e_2}..\sigma_{e_n}\rangle
=\sum_{states}P(state)\prod_{i=1}^n\sigma_{e_i}(state).
$$

It is convenient to
write the Boltzmann weights in exponential form
\begin{eqnarray}
&& a_1=ae^{H+V}, \qquad a_2=ae^{-H-V}, \nonumber
\\
&& b_1=be^{+H-V}, \qquad b_2=be^{-H+V},
\nonumber
\\
&& c_1=ce^{-E}, \qquad c_2=ce^{E}, \nonumber
\end{eqnarray}
From now on we will assume that $E=0$. If the weight
are homogenous (do not depend on a vertex), local correlation functions for a domain, cylinder or torus do not depend on $E$.
Also, the parameters $H$ and $V$ have a clear physical meaning,
they can be regarded as horizontal and vertical electric fields. Indeed, if $E=0$ the weight of the state
can be written as
\begin{multline}
w(S)=\prod_{v\in vertices} w(v|S) \\
\exp(\sum_{e\in E_h}
\sigma_e(S)(H(e_+)+H(e_-))/2+\sum_{e\in E_v}
\sigma_e(S)(V(e_+)+V(e_-))/2)
\end{multline}
Here $S$ is a state of the model, $E_h$ is a set of horizontal
edges and $E_v$ is the set of vertical edges, $w(v|S)$ is the weight of the vertex $v$ in the state $S$ where $a_1=a_2=a$ and
$b_1=b_2=b$. The symbol $\sigma_e(S)$ is the characteristic function of $e$: $\sigma_e(S)=1$ if the arrow pointing
up or to the left, and $\sigma_e(S)=-1$ if the arrow is pointing
down or to the right.

For a given {\it domain} let us chose its boundary edge and enumerate all other boundary edges counter-clock wise. The partition function for a domain for various fixed boundary conditions can be considered as a vector in $(\CC^2)^{\otimes N}$ where the factors in the tensor product counted from left to right correspond to enumerated boundary vertices.

Similarly, the partition function for a (vertical) {\it cylinder} of size $N\times M$ (horizontal, vertical) can be regarded as a linear operators acting in $(\CC^2)^{\otimes N}$ where the factors in the tensor
product correspond to states on boundary edges.

The partition function for the {\it torus} of size $N\times M$ is a number and it can be regarded as a the trace of the partition
for the vertical cylinder of size $N\times M$ over the states
on its horizontal sides, i.e. over $(\CC^2)^{\otimes N}$.
It can also be regarded as trace of the partition
for the horizontal cylinder of size $M\times N$ over the states
on its vertical sides, i.e. over $(\CC^2)^{\otimes M}$.
The result is an identity which we will discuss later.

\begin{figure}[t]
\begin{center}
\includegraphics[height=5cm,width=5cm]{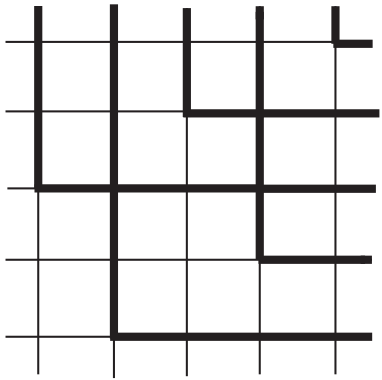}
\end{center}
\caption{A possible configuration of paths on a $5\times 5$ square grid for
the DW boundary conditions.}
\label{dwbc_conf}
\end{figure}

\subsection{Transfer-matrices}
Let is write the matrix of Boltzmann weights for a vertex
as the $4\times 4$ matrix acting in the tensor product of
spaces of states on adjacent edges.

Let $e_1$ be the vector corresponding to the arrow pointing
up on a vertical edge, and left on a horizontal edge. Let
$e_2$ be the vector corresponding to the arrow pointing down
on a vertical edge and right on a horizontal edge.

The matrix of Boltzmann weights with zero electric fields
acts as
\begin{eqnarray}
R e_1\otimes e_1&=&a e_1\otimes e_1\\
R e_1\otimes e_2&=&b e_1\otimes e_2+c e_2\otimes e_1\\
R e_2\otimes e_1&=&b e_2\otimes e_1+c e_1\otimes e_2\\
R e_2\otimes e_2&=&a e_2\otimes e_2
\end{eqnarray}
In the tensor product basis $e_1\otimes e_1, e_1\otimes e_2, e_2\otimes e_1, e_2\otimes e_2$ it is the $4\times 4$ matrix
\begin{equation}\label{R-bar}
\overline{R}=\left(\begin{array}{cccc} a & 0 & 0 & 0 \\
0 & b & c & 0 \\
0 & c & b & 0 \\
0 & 0 & 0 & a
\end{array}
\right)
\end{equation}
The $6$-vertex rules imply that the operator $R$ commutes with
the operator representing the total number of thick vertical edges, i.e.
\[
[D^H\otimes D^H, \overline{R}]=0
\]
where
\begin{equation*}
D^H = \begin{pmatrix}
e^{H/2} & 0  \\
0 & e^{-H/2}
\end{pmatrix}.
\end{equation*}

The row-to-row transfer-matrices with open boundary conditions
also known as the (quantum) {\it monodromy matrix} is defined as
\[
T_a=D_a^{H(a,1)}R_{a1}D_a^{H(a,1)}\dots D_a^{H(a,1)}R_{aN}
\]
It acts in the tensor product $\CC^2_a\otimes \CC^2_1\otimes \dots \otimes \CC^2_N$ of spaces corresponding to
horizontal edges   and vertical edges . Each matrix $R_{ai}$ is of the form (\ref{R-weights}), it acts trivially (as the identity matrix) in all factors of the tensor
product except $a$ and $i$. In the inhomogeneous case matrix elements of $\overline{R}_{ai}$ depend
on $i$.

A matrix element of $T$ is the partition function of the 6-vertex
model on a single row with fixed boundary conditions.

Define operators
\[
D^{(a)}=D^{V(a,1)}\otimes \dots \otimes D^{V(a,N)}
\]
\[
D_a^H=1\otimes \dots \otimes D^H\otimes \dots \otimes 1
\]

The row-to-row transfer-matrix corresponding to the
cylinder with a single horizontal row $a$ with
with electric field $H(a,i)$
applied to the $i$-th horizontal edge of the $a$-th
horizontal line is the following trace
\[
t_a=tr_a(T_a)=tr(D_a^{H(a,1)}R_{a1}\dots D_a^{H(a,N)}R_{aN})
\]
It is an operator acting in $(\CC^2)^{\otimes N}$. Its matrix element is the partition function of the $6$-vertex model
on the cylinder with a single row with fixed boundary conditions on vertical edges.

The partition function of an inhomogeneous $6$-vertex on a cylinder of height $M$ with fixed boundary condition on outgoing vertical edges is a matrix element of the linear operator
\[
Z^{(C)}=D^{(M)}t_M\dots D^{(1)}t_1
\]
where
\[
D^{(a)}=D^{V(a,1)}\otimes \dots \otimes D^{V(a,N)}
\]

The partition function for the torus with $N$ columns and
$M$ rows is the trace of the partition function for
the cylinder
\[
Z^{(T)}_{N,M}=tr_{(\CC^2)^{\otimes N}}(D^{(M)}t_M\dots D^{(1)}t_1)
\]
Using the $6$-vertex rules this trace
can be transformed to
\[
Z^{(T)}_{N,M}=tr_{(\CC^2)^{\otimes N}}(t_M^{(H_M)}\dots t_1^{(H_1)}(D_1^{V_1}\otimes \dots \otimes D_N^{V_N}))
\]
where $H_a=\sum_{i=1}^N H_{ai}$ and $V_i=\sum_{a=1}^M V_{ai}$
and
\[
t^{(H)}=tr_a(R_{a1}\dots R_{aN}D_a^{H})
\]

The partition function for a generic domain does not have
a natural expression in terms of a transfer-matrix. However,
it is possible in few exceptional cases, such as domain wall
boundary conditions on a square domain, see for example \cite{DW}.
and references therein.

\subsection{The commutativity of transfer-matrices and positivity of weights}

\subsubsection{Commutativity of transfer-matrices}
Baxter discovered that matrices of the form
(\ref{R-weights}) acting in the tensor product of two two-dimensional spaces satisfy the equation
\begin{equation}\label{Bax}
R_{12}R_{13}'R_{23}''=R_{23}''R_{13}'R_{12}
\end{equation}
if
$$
\frac{a^2+b^2-c^2}{2ab}=\frac{{a'}^2+{b'}^2-{c'}^2}{2a'b'}
=\frac{{a''}^2+{b''}^2-{c''}^2}{2a''b''}
$$
This parameter is denoted by $\Delta$:
$$
\Delta=\frac{a^2+b^2-c^2}{2ab}
$$

If each factor in monodromy matrices $T'_a=R'_{a1}\dots R'_{aN}$ and $T''_b=R''_{a1}\dots R''_{aN}$ have the same value of
$\Delta$, the equation (\ref{Bax}) implies that they these monodromy matrices
satisfy the relation
\[
R_{ab}T'_aT''_b= T''_bT'_a R_{ab}
\]
in $V_a\otimes V_b\otimes V_1\otimes \dots \otimes V_N$.

If $R$ is invertible, this relation implies that row-to-row transfer-matrices with periodic boundary conditions commute:
\[
t'=tr_a(T'_a D_a^H), \ \  t''=tr_b(T''_bD_b^H), \ \ [t',t'']=0
\]

It is easy to recognize that $t$ is exactly the generating function for commuting family of local Hamiltonians for spin chains constructed earlier. Thus, the problem of computing the
partition function for periodic and cylindrical boundary conditions for the $6$-vertex model is closely related to finding the spectrum of Hamiltonians for integrable spin chains.

\subsubsection{The parametrization} The set of positive triples of real numbers $a:b:c$ (up to a common multiplier) with fixed values of $\Delta$ has the following parametrization \cite{Ba}.

\begin{enumerate}

\item When $\Delta >1$ there are two cases.

If $a>b+c$, the Boltzmann weights $a,b,$ and $c$ can be parameterized as
$$
a=r\sinh(\lambda+\eta), \quad b=r\sinh(\lambda),\quad c=r\sinh(\eta)
$$
with $\lambda,\eta>0$.

If $b>a+c$, the Boltzmann weights can be parameterized as
$$
a=r\sinh(\lambda-\eta), \quad b=r\sinh(\lambda),\quad c=r\sinh(\eta)
$$
with $0<\eta<\lambda$.
For both of these parameterizations of weights $\Delta=\cosh(\eta)$.

\item When $-1<\Delta\leq 0$
$$
a=r\sin(\lambda-\gamma), \quad b=r\sin(\lambda), \quad c=r\sin(\gamma),
$$
where $0<\gamma<\pi/2$, $\gamma<\lambda<\pi/2$, and $\Delta=-\cos\gamma$.

\item When $0\leq \Delta<1$
$$
a=r\sin(\gamma-\lambda), \quad b=r\sin(\lambda), \quad c=r\sin(\gamma),
$$
where $0<\gamma<\pi/2$, $0<\lambda<\gamma$, and $\Delta=\cos\gamma$.

\item When $\Delta <-1$ the parametrization is
$$
a=r\sinh(\eta-\lambda),\quad  b=r\sinh(\lambda),\quad c=r\sinh(\eta),
$$
where $0<\lambda<\eta$ and $\Delta=-\cosh\eta$.

\end{enumerate}

We will write $a=a(u), b=b(u), c=c(u)$ assuming this parameterizations.

\subsubsection{Topological nature of the partition function
of the $6$-vertex model}

Fix a domain and a collection of simple non-selfintersecting oriented curves with simple (transversive) intersections. The result is a $4$-valent graph embedded into the domain.
Fix $6$-vertex states (arrows) at the boundary edges of this graph. Such graph connects boundary points, and defines a perfect matching between boundary points.\begin{figure}[t]
\begin{center}
\includegraphics[height=3cm,width=4cm]{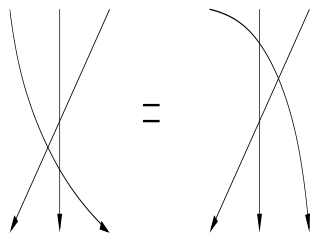}
\end{center}
\caption{The Yang-Baxter equation.}
\label{Red-1}
\end{figure}
\begin{figure}[t]
\begin{center}
\includegraphics[height=3cm,width=4cm]{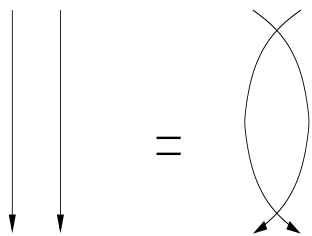}
\end{center}
\caption{The `unitarirty'.}
\label{Red-2}
\end{figure}

The $6$-vertex rules define the partition function on such graph.
The Yang-Baxter relation implies the invariance with respect to
the two moves shown of Fig. \ref{Red-1}, Fig. \ref{Red-2}.
Because of this, the partition function of the $6$-vertex model
depends only on the connection pattern between boundary points,
i.e. it depends only on the perfect matching on boundary points
induced by the graph.

The `unitarity' relation involves the inverse to $R$, and
therefore does not preserve the positivity of weights. But if it is used "even" number of times it gives the equivalence of partition functions with positive weights. For example it can be used to `permute' the rows in case of cylindrical boundary conditions.

\subsubsection{Inhomogeneous models with commuting transfer-matrices} From now on we will focus on the $6$-vertex
model in constant electric fields. If Boltzmann weights of the $6$-vertex model have special inhomogeneity $a_{ij}=a(u_i-w_j),
b_{ij}=b(u_i-w_j), c_{ij}=c(u_i-w_j)$ the partition function
of the $6$-vertex model on the torus is
\begin{equation}\label{6-v-ihom}
Z^{(T)}_{N,M}(\{u\},\{w\}|H,V)=Tr_{\cH_N}( t(u_1)\dots t(u_M)(D^{NV}\otimes \dots\otimes  D^{NV}))
\end{equation}
where $\cH_N=(\CC^2)^{\otimes N}$ and $t(u)=tr_a(T_a(u))$
\begin{equation}\label{6-v-ihom-T}
T_a(u)=R_{a1}(u-w_1)\dots R_{aN}(u-w_N)D^{NH}
\end{equation}

Because the $R$-matrix satisfies the Yang-Baxter equation
we have
\[
R_{ab}(u-v)T_a(u)T_b(v)=T_b(v)T_a(u)R_{ab}(u-v)
\]
As a corollary, traces of these matrices commute:
\[
[t(u), t(v)]=0
\]
and as we have seen the previous sections their spectrum can be
described explicitly by the Bethe ansatz.

Notice that positivity of weights restricts possible value of
inhomogeneities. For example when $\Delta>1$ and $a>b+c$
we should have $-u<w_i<u$.

\subsubsection{} For a diagonal matrix $d$ such that $[d\otimes 1+
1\otimes d, R(u)]=0$ define
\[
R^d(u)=(e^{ud}\otimes 1)R(u)(e^{-ud}\otimes 1)
\]
It is clear that
\[
tr(R^d_{a1}(u-w_1)\dots R^d_{aN}(u-v_N)D^{NH})=Utr(R_{a1}(u-w_1)\dots R_{aN}(u-v_N)D^{NH})U^{-1}
\]
where $U=\exp(-\sum_{i=1}^Nw_id_i)$.

In particular the partition function for a torus
with weights given by $R(u)$ and with weights given by $R^d(u)$ are the same.

\subsubsection{}
Let $u_1,\dots, u_N$ be parameters of inhomogeneities along horizontal directions, and $w_1,\dots, w_M$ be inhomogeneities
along vertical directions. The partition function (\ref{6-v-inhom}) have the following properties:
\begin{itemize}
\item $Z^{(T)}_{N,M}(\{u\},\{w\}|H,V)$ is a symmetric function of $u$ and $w$.

\item It has a form $\prod_{i=}^Ne^{-Mu_i}\prod_{i=1}^Me^{-Nw_i}P(\{e^{2u}\}, \{e^{2w}\})$ where $P$ is a polynomial of degree $M$ in each $e^{u_i}$ and of degree $N$ is each $e^{w_j}$.

\item It is a function of $u_i-w_j$.

\item It satisfies the identity
\begin{equation}\label{modul}
Z^{(T)}_{N,M}(\{u\},\{w\}|H,V)=Z^{(T)}_{M,N}(\{w\},\{h-u\}|V,H)
\end{equation}
where $h=\eta$ or $\gamma$ depending on the regime. This identity correspond to the "rotation" of the torus by $90$ degrees and is known as "crossing-symmetry" or "modular identity".

\end{itemize}
\section{The $6$-vertex model on a torus in the thermodynamic limit }

\subsection{The thermodynamic limit of the 6-vertex model for
the periodic boundary conditions}\label{free-en-tdl}
By the thermodynamical limit here we will mean here the large volume limit, when $N,M\to \infty$.

The free energy per site in this limit is
\[
f=-\lim_{N,M\to\infty}\frac{\log(Z_{N,M})}{NM}
\]
where $Z_{N,M}$ is the partition function with the periodic
boundary conditions on the rectangular grid $L_{N,M}$. It is a
function of the Boltzmann weights and magnetic fields.

For generic $H$ and $V$ the $6$-vertex model in the thermodynamic
limit has the unique translational invariant Gibbs measure with the
slope $(h,v)$:
\begin{equation}\label{slope}
h=-\frac{1}{2}\frac{\pa f}{\pa H}+\frac{1}{2}, \quad
v=-\frac{1}{2}\frac{\pa f}{\pa V}+\frac{1}{2}.
\end{equation}

This Gibbs measure defines local correlation functions in the thermodynamical limit
The parameter
$$
\Delta=\frac{a^2+b^2-c^2}{2ab}.
$$
defines many characteristics of the $6$-vertex model in the
thermodynamic limit.

\subsection{The large $N$ limit of the eigenvalues of the transfer-matrix}
The row-to-row transfer-matrix for the homogeneous
$6$-vertex model on a lattice with periodic boundary condition
with rows of length $N$ is
\[
t(u)=tr_a(R_{a1}(u)\dots R_{aN}(u)e^{H\sigma^z})\exp(V\sum_{a=1}^N\sigma_a^z)
\]
According to (\ref{BE-eigenv}) and (\ref{BE}), the eigenvalues of
this linear operator are:
\begin{equation}\label{6v-eigenv}
\Lambda_{\{v_i\}}=a^Ne^{NH}\prod_{i=1}^n{v_iz^{-1}q-v_i^{-1}zq^{-1}\over v_iz^{-1}-v_i^{-1}z}+b^Ne^{-NH}\prod_{i=1}^n{v_i^{-1}zq-v_iz^{-1}q^{-1}\over v_i^{-1}z-v_iz^{-1}}
\end{equation}
where $0\leq n \leq N$ and $z,q$ are parameterizing $a,b,c$ as $a=r(zq-zq^{-1}, b=r(z-z^{-1}),c=r(q-q^{-1})$. The numbers $v_i$
are solutions to the Bethe equations:
\begin{equation}\label{6v-BE}
\left({v_iq-v_i^{-1} q^{-1}\over v_i-v_i^{-1}}\right)^N=-e^{-2H}\prod_{j=1}^n {v_iv_j^{-1}q-v_i^{-1}v_jq^{-1}\over v_iv_j^{-1}q^{-1}-v_i^{-1}v_jq}
\end{equation}

As for the corresponding spin chains it is expected that
the numbers $v_i$ corresponding to the
largest eigenvalue concentrate, when $N\to \infty$,  on a contour in a complex plane with the finite density. The Bethe equations
provide a linear integral equation for this density.
This conjecture is supported by the numerical evidence and it is proven in some special cases, for example when $\Delta=0$.

The partition function for the homogeneous $6$-vertex model
on an $N\times M$ lattice with periodic boundary conditions
is
\begin{equation}\label{part-fcnc}
Z_{N,M}=\sum_\alpha {\Lambda_\alpha^{(N)}}^M
\end{equation}
where $\alpha$ parameterize eigenvalues of $t(u)$.

Let $\omega_N$ be the eigenvector of $t(u)$ corresponding to the
maximal eigenvalue. The sequence of vectors $\{\Omega_N\}$ as $N\to \infty$ defines the Hilbert space of pure states for the infinite system. Let $\Lambda_0$ be the largest eigenvalue of $t(u)$.
According to the main conjecture about that the largest eigenvalue correspond to numbers $v_i$ filling a contour in a complex
plane as $N$, the largest eigenvalue has the following asymptotic:
\begin{equation}\label{L-0}
\Lambda_0^{(N)}=\exp(-Nf(H,V)+O(1))
\end{equation}
The function $f(H,V)$ as we will see below it
is the free energy of the system. It is computed in
the next section.

The transfer-matrix in this limit has the asymptotic
\[
t(u)=\exp(-Nf(H,V))\widehat{t}(u)
\]
where the operator $\widehat{t}(u)$ acts in the space $H_\infty$
and its eigenvalues are determined by positions of "particles" and
"holes",
similarly to the structure of excitations in the large $N$ limit
in spin chains.

\subsection{Modularity}
\subsubsection{} It is easy to compute now
the asymptotic of the partition function
in the thermodynamical limit. As $M\to \infty$
the leading term in the formula (\ref{part-fcnc})
is given by the largest eigenvalue:
\[
Z_{N,M}={\Lambda_0^{(N)}}^M(1+O(e^{-\alpha M}))
\]
for some positive $\alpha$.

Taking the limit $N\to \infty$ and taking into account the
asymptotic of the largest eigenvalue (\ref{L-0}) we identify
the function $f(H,V)$ in (\ref{L-0}) with the free energy:
\[
Z_{N,M}=e^{NMf(H,V)(1+o(1))}
\]

\subsubsection{} Notice now that we could have changed the role of $N$ and $M$ by first taking the limit $N\to \infty$ and then $M\to \infty$.

In this case we would have to compute first the asymptotic
of
\[
Z_{N,M}=tr(t(u)^M)
\]
as $N\to\infty$ and then take the limit $M\to \infty$.

The large $N$ limit of the trace can be computed by
using the finite temperature technique developed by
Yang and Yang in \cite{YY}. It was done by de Vega and
Destry \cite{DVD}. The leading term of the asymptotic
can be expressed in terms
of the solution to a non-linear integral equation.

This gives an alternative description for the largest eigenvalue.
Similar description exists for all eigenvalues. In other integrable
quantum field theories it was done by Al. Zamolodchikov \cite{Z}.

\section{The $6$-vertex model at the free fermionic point}\label{ff}

When $\Delta=0$ the partition function of the $6$-vertex
model can be expressed in terms of the dimer model on
a decorated square lattice. Because the dimer model
can be regarded as a theory of free fermions, the $6$-vertex model
is said to be free fermionic when $\Delta=0$.

At this point the raw-to-raw transfer-matrix on $N$-sites for a torus can be written in terms of the Clifforrd algebra of $\CC^N$.
The Jordan-Wiegner transform maps local spin operators to the
elements of the Clifford algebra.

In Bethe equations (\ref{BE}) the variables $v_i$ disappear in the
r.h.d which becomes simply $(-1)^{n-1}e^{-2NH}$. After change of
variables these equations can be interpreted as the periodic boundary condition for a fermionic wave function.

\subsection{Homogeneous case}
\subsubsection{} At the free fermionic point $\Delta=0$, i.e. $\gamma=\frac{\pi}{2}$ and the weights of the $6$-vertex model are parameterized
as:
\[
a=\sin(\frac{\pi}{2}-u)=\cos(u), \ \ b=\sin(u), \ \
c=1
\]
Without loosing generality we may assume that $0<u<{\pi \over 4}$.

The eigenvalues of the row-to-row transfer-matrix are given by the
Bethe ansatz formulae:
\begin{equation} \label{ff-hom-eigen}
\Lambda(u)=((\cos(u)^N(-1)^ne^{NH}+(\sin(u))^Ne^{-NH})e^{NV}\prod_{i=1}^n (\cot(u-v_i)e^{-2V})
\end{equation}
Here $0\leq n\leq N$, and $v_i$ are distinct solutions to the Bethe equations:
\[
\cot(v)^N=(-1)^{n-1}e^{-2NH}
\]
or:
\[
\cot(v)=\w e^{-2H}
\]
where $\w^N=(-1)^{n-1}$.

Using the identity:
\[
\cot(u-v)={\cot u \cot v +1\over \cot v-\cot u}=\cot u-{1-\cot^2u\over \omega e^{-2H}-\cot u}
\]
we can write the eigenvalues as:
\begin{equation}\label{eigenval}
\Lambda(u)=((\cos(u)^N(-1)^ne^{NH}+(\sin(u))^Ne^{-NH})e^{(N-2n)V})
\prod_{i=1}^n (\cot u-{1-\cot^2u\over \omega_i e^{-2H}-\cot u})
\end{equation}

\subsubsection{} To find the maximal eigenvalue in the limit $N\to \infty$ we should analyze factors in the
formula (\ref{ff-hom-eigen}).

{\bf 1}. If
\[
\max_{|\omega|=1}\cot(u-v)<e^{2V}
\]
where $\cot(v)=\omega e^{-2H}$, all factor are less then one and
the maximal eigenvalue
\[
\Lambda_{ord}(u)=(\cos(u)^N e^{NH}+(\sin(u))^Ne^{-NH})e^{NV}
\]
is achieved when $n=0$.

As $N\to \infty$ the first term dominates when $
\cot(u)<e^{2H}$. In this case the Gibbs state describing
the 6-vertex model in the thermodynamical limit is
the ordered state $A_1$ from Fig. \ref{ferroelectric_phase}
When $\cot(u)>e^{2H}$, the second term dominates. In
this case the Gibbs state the ordered state $B_2$ shown
on Fig. \ref{ferroelectric_phase}.

{\bf 2}. If $\min_{|\omega|=1}\cot(u-v)>e^{2V}$ all factors $\cot(u-v_j)e^{2V}$ are greater then one by absolute value.
In this case the maximal eigenvalue is achieved when $n=N$.
The corresponding ground state is ordered  and in $A_2$
from Fig. \ref{ferroelectric_phase} when $\cot u >e^{-2H}$
and is $B_1$ when $\cot( u) <e^{-2H}$

{\bf 3}. If
\[
\max_{|\omega|=1}\cot(u-v)<e^{2V}
\]
then there exists $\omega_0=e^{iK}$ such that
\[
|\cot(u-b)|=e^{2V}
\]
where $\cot(b)=e^{\pm iK-2H}$. It is easy to find $K$:
\[
\cos(K)=\frac{(Ue^{2H}+U^{-1}e^{-2H})e^{4V}-(Ue^{-2H}+U^{-1}e^{2H})}{2(e^{4V}+1)}
\]

In this case
\[
|\cot(u-v)|>e^{2V}
\]
when $\cot(v)=e^{i\alpha}e^{-2H}$ with $-K<\alpha< K$
and
\[
|\cot(u-v)|<e^{2V}
\]
when $-\pi\leq \alpha< -K$ or $K<\alpha\leq \pi$.

The maximal eigenvalue
is this case corresponds to maximal $n$ such that ${\pi (n-1)\over N}<K$
and $\omega_j=\exp(i{\pi (n+1-2j)\over N})$, $j=1,\dots, n$ and is
given by (\ref{ff-hom-eigen}).

As $N\to \infty$ the asymptotic of the largest eigenvalue is
given by the integral:
\[
\log(\Lambda_{disord}(u))=N(\log(\cos(u)e^{H})+\frac{1}{2\pi i}\int_{-K}^K\log\left(\frac{\cot(u)\omega e^{-2H}+1}{\omega e^{-2H}-\cot(u)}\right)\frac{d\omega}{\omega})+O(1)
\]
The 6-vertex in this regime is in the disordered phase.

Disordered and ordered phases are separated by the curve
\[
\max_{|\omega|=1}|\cot(u-v)|=e^{-2V}, \ \ \min_{|\omega|=1|}|\cot(u-v)|=e^{-2V} ,
\]
or, more explicitly
\[
\left|{Ue^{-2H}\mp1\over U\pm e^{-2H}}\right|=e^{-2V}
\]
This curve is shown on Fig. \ref{ff-curve}

\begin{figure}[t]
\begin{center}
\includegraphics[width=10cm, height=10cm]{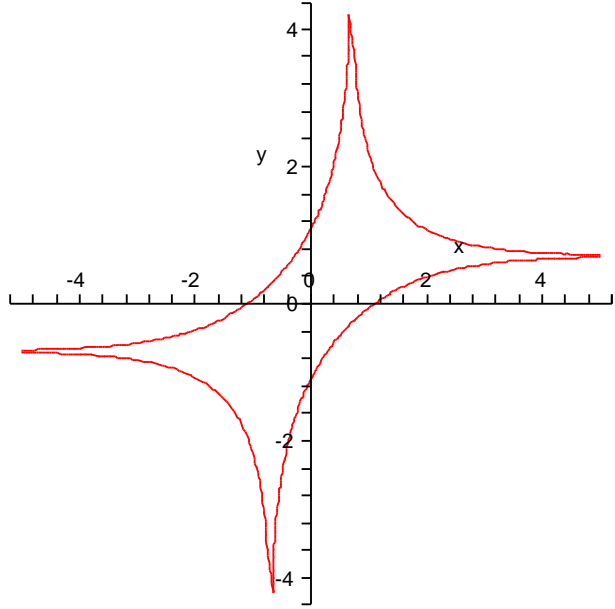}
\end{center}
\caption{The boundary ordered and disordered regions in the $(H,V)$-plane for the homogeneous $6$-vertex model with $\Delta=0$ and $U=\cot(u)=2$} \label{ff-curve}
\end{figure}

\subsection{Horizontally inhomogeneous case}

The partition function for the 6-vertex model at the free fermionic point with inhomogeneous rows is
\[
Z_{N,M}(u,a)=\sum_{n=0}^N \sum_{\omega_1,\dots, \omega_n}\Lambda(u+a|\omega)^M\Lambda(u-a|\omega)^M
\]
where $\omega_i\neq \omega_j$ are solutions to
\[
\omega^N=(-1)^{n-1}
\]
and $\Lambda(u|\omega)$ is given by (\ref{eigenval}).

The boundary between ordered and disordered phases is
given by  equations
\[
\max_{|\omega|=1}|\cot(u+a-v)\cot(u-a-v)|=e^{4V}, \ \ \min_{|\omega|=1}|\cot(u+a-v)\cot(u-a-v)|=e^{4V}
\]
or,
\begin{equation}\label{b-curve}
\left|{U_+e^{-2H}\pm 1 \over U_+\mp e^{-2H}}\right|\left|{U_-e^{-2H}\pm 1 \over
 U_-\mp e^{-2H}}\right|=e^{4V}
\end{equation}
This curve is shown on Fig. \ref{ff-inhom-curve}.

The ordered phases $A1B_2$ and $A_2B_1$ are shown on Fig.
\ref{ff-AB-phases}.

\begin{figure}[t]
\begin{center}
\includegraphics[width=10cm, height=10cm]{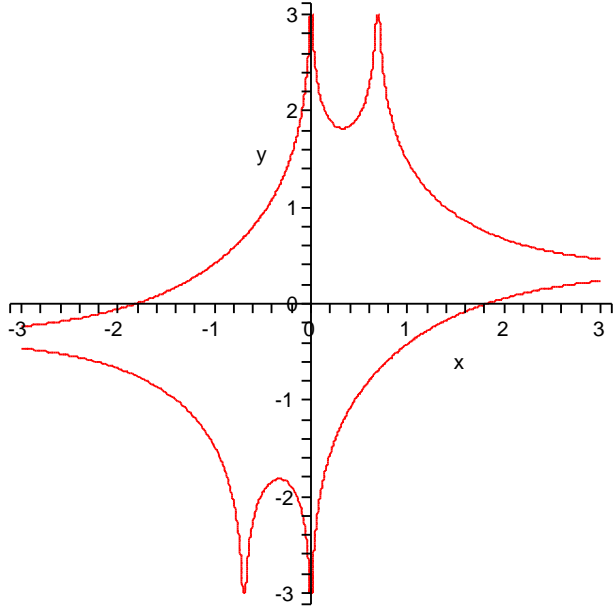}
\end{center}
\caption{The boundary between ordered and disordered regions in the $(H,V)$-plane for the inhomogeneous $6$-vertex model with $\Delta=0$ and $U=\cot(u)=3, T=\cot(a)=1$} \label{ff-inhom-curve}
\end{figure}

\begin{figure}[t]
\begin{center}
\includegraphics[width=10cm, height=4cm]{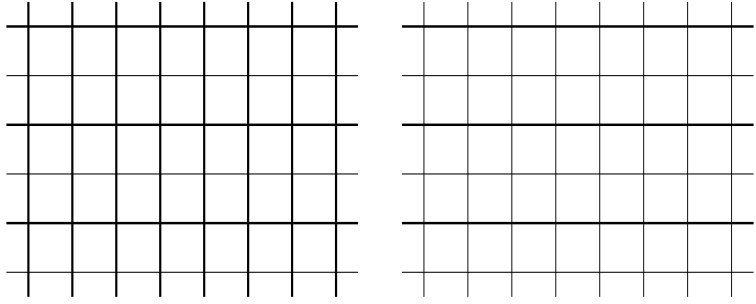}
\end{center}
\caption{Horizontally inhomogeneous ordered phases $A_1B_2$ and $A_2B_1$ respectively} \label{ff-AB-phases}
\end{figure}

The free energy per site in the disordered region is given by
\begin{multline}\label{inhom-free-en}
f(H,V)=\log(cos(u+a)\cos(u-a)+\\
\frac{1}{2\pi i}\int_{-K_1}^{K_1}\log\left(\frac{\cot(u+a)\omega e^{-2H}+1}{\omega e^{-2H}-\cot(u+a)}\frac{\cot(u-a)\omega e^{-2H}+1}{\omega e^{-2H}-\cot(u-a)}\right)\frac{d\omega}{\omega})+\\
\frac{1}{2\pi i}\int_{-K_2}^{K_2}\log\left(\frac{\cot(u+a)\omega e^{-2H}+1}{\omega e^{-2H}-\cot(u+a)}\frac{\cot(u-a)\omega e^{-2H}+1}{\omega e^{-2H}-\cot(u-a)}\right)\frac{d\omega}{\omega})
\end{multline}
where $K_{1,2}$ are defined by equations
\[
|\cot(u+a-b)\cot(u-a-b)|=e^{2V}
\]
with $\cot(b)=e^{\pm i K}e^{-2H}$.

\subsection{Vertically non-homogeneous case}
\subsubsection{}Assume the weights are homogeneous in the vertical direction
and alternate in the horizontal direction with parameters
alternating as $\dots, u+a,u-a,u+a,u-a,\dots$. Positivity of weights in the region $0\leq u\leq \frac{\pi}{4}$
requires $0\leq a\leq u$.

The eigenvalues of the transfer-matrix are given by the Bethe ansatz:
\begin{multline}\label{eigenval-1}
\Lambda(u)=((\cos(u+a)\cos(u-a))^N(-1)^ne^{NH}+(\sin(u+a)\sin(u-a))^Ne^{-NH})\\
e^{(N-2n)V}\prod_{i=1}^n \cot(u-v_i)
\end{multline}
where $v_i$ are distinct solutions to
\[
(\cot(v-a)\cot(v+a))^N=(-1)^{n-1}e^{-2NH}
\]
The left side is the $N$-th power of
\[
\frac{\cot(v)\cot(a)+1}{\cot(a)-\cot(v)}
\frac{-\cot(v)\cot(a)+1}{-\cot(a)-\cot(v)}=
\frac{\cot(v)^2\cot(a)^2-1}{\cot(a)^2-\cot(v)^2}
\]
From here it is easy to find the parametrization of solutions
by roots of unity:
\begin{equation}\label{wv-inhom}
\cot(v)^2=\frac{\w \cot(a)^2e^{-2H}+1}{\cot(a)^2+\w e^{-2H}}
\end{equation}
where
\[
\w^N=(-1)^{n-1}
\]

\subsubsection{Largest eigenvalue}
The analysis of the largest eigenvalue of the transfer-matrix in the limit $N\to \infty$ is similar to the previous cases. When
\begin{equation}\label{b-curve-inh}
\max_{|\omega|=1}|\cot(u-v)|\geq e^{2V}
\end{equation}{
where $v$ and $\omega$ are related as in (\ref{wv-inhom})
the absolute value of all factors $\cot(u-v)|e^{-2V}$ is
greater then one by the absolute value and the largest eigenvalue
correspond to $n=0$. The rest of the analysis is similar.

Let us notations:
\[
T=\cot(a), \ \ x=e^{-H}, \ \ U=\cot(u), \ \ s=\cot(v)
\]
Positivity of weights imply
\[
T\geq U>0
\]
\begin{proposition}
The curve separating ordered phases from the disordered phase
is
\[
X=\left|{YU_+-1 \over Y+U_+}\right|\left|{YU_--1\over Y+U_-}\right|
\]
\end{proposition}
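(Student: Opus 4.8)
The plan is to repeat the analysis carried out above for the homogeneous and horizontally inhomogeneous cases, the only genuinely new ingredient being the quadratic parametrization (\ref{wv-inhom}) of the Bethe roots.

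First I would observe, from the eigenvalue formula (\ref{eigenval-1}), that including the Bethe root $v$ contributes the factor $\cot(u-v)\,e^{-2V}$ to $\Lambda(u)$, the prefactor $e^{(N-2n)V}$ supplying one power $e^{-2V}$ per selected root. Hence the largest eigenvalue is obtained by selecting exactly the admissible roots with $|\cot(u-v)|>e^{2V}$; when this selection is empty one takes $n=0$ and the leading term of $\Lambda(u)$ is the larger of $(\cos(u+a)\cos(u-a))^Ne^{NH}$ and $(\sin(u+a)\sin(u-a))^Ne^{-NH}$, which is what distinguishes the two ordered phases $A_1B_2$ and $A_2B_1$. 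As $N\to\infty$ the admissible $\w$, defined by $\w^N=\pm1$, become equidistributed on the unit circle, and the set of selected roots fills a union of arcs. The model is in an ordered phase precisely when that union is empty or is the whole circle, and disordered when a proper sub-arc is selected; therefore the phase boundary is the locus of $(H,V)$ at which the supremum --- respectively the infimum --- over $|\w|=1$ of the occupation factor first meets the threshold $e^{2V}$.

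Next I would localize this extremum. For fixed $\w$, equation (\ref{wv-inhom}) has the two solutions $\cot v=\pm s$ with $s^2=\dfrac{\w T^2x^2+1}{T^2+\w x^2}$, and the product of the occupation factors attached to the pair $\{v,-v\}$ is
\[
\cot(u-v)\,\cot(u+v)=\frac{U^2s^2-1}{s^2-U^2},
\]
which depends on $\w$ only through $s^2$. Both $\w\mapsto s^2$ and $s^2\mapsto\dfrac{U^2s^2-1}{s^2-U^2}$ are M\"obius transformations with real coefficients, so their composition carries the unit circle onto a circle symmetric about the real axis; the points of that circle of largest and of smallest modulus lie on the real axis and are the images of the two real points $\w=\pm1$. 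Hence the extrema of the occupation factor are attained at $\w=\pm1$, where $s^2$ takes the two real values $s_\pm^2=\dfrac{\pm T^2x^2+1}{T^2\pm x^2}$; at $\w=-1$, where $s^2$ may be negative, the occupation factors of $v$ and $-v$ coincide, which is precisely what lets the critical condition be written as a single equality.

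Finally I would substitute $s^2=s_\pm^2$ into $\left|\dfrac{U^2s^2-1}{s^2-U^2}\right|$ and use the factorization
\[
\frac{U^2s^2-1}{s^2-U^2}=\frac{(Us-1)(Us+1)}{(s-U)(s+U)}
\]
together with the addition formulas for $\cot$ to bring the critical value into the announced product form $\left|\dfrac{YU_+-1}{Y+U_+}\right|\left|\dfrac{YU_--1}{Y+U_-}\right|$; equating it to $X$ and sorting out the signs in the positivity sector $T\ge U>0$ produces the stated curve, the two real points $\w=\pm1$ contributing the two branches of the boundary, exactly as the homogeneous computation gave $\left|\frac{Ue^{-2H}\mp1}{U\pm e^{-2H}}\right|=e^{-2V}$.

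I expect the substantive obstacle to be not this algebra but the uniform-in-$N$ justification that the ``fill every arc where the occupation factor exceeds the threshold'' recipe really yields the leading eigenvalue, and the bookkeeping that identifies which ordered phase occupies each region adjacent to the curve; this repeats the corresponding step in the homogeneous and horizontally inhomogeneous cases, and is where the analytic content lies. A secondary point requiring care is that each $\w$ carries the two roots $\cot v=\pm s$, so the occupation factor is naturally a pair of numbers rather than one; it is the product of these two that is a rational function of $s^2$, and this is what produces the product of two factors in the final formula.
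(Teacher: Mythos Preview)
Your approach differs from the paper's in a substantive way, and there is a genuine gap.

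The paper works with the \emph{individual} occupation factor $|\cot(u-v)|$ and computes it directly: it writes
\[
|\cot(u-v)|^{2}=\frac{U^{2}|s|^{2}+2U\,\mathrm{Re}(s)+1}{U^{2}-2U\,\mathrm{Re}(s)+|s|^{2}},
\]
expresses $\mathrm{Re}(s^{2})$ and $|s|^{4}$ explicitly in $\alpha$ (where $\omega=e^{i\alpha}$), uses $\mathrm{Re}(s)=\sqrt{(\mathrm{Re}(s^{2})+|s|^{2})/2}$, and then asserts that the maximum is attained at $\omega=1$. Substituting the real value $s_{1}$ gives a \emph{single-factor} boundary equation $|\,\cdot\,|=e^{2V}$, which is then solved algebraically for $X$; the product form in the proposition arises only at this last algebraic step.

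You instead pass to the product $\cot(u-v)\cot(u+v)=\dfrac{U^{2}s^{2}-1}{s^{2}-U^{2}}$ and use a M\"obius argument to locate its extrema at $\omega=\pm 1$. That argument is clean, but it controls the wrong quantity. The phase boundary out of the ordered region is the locus where $\max_{v}|\cot(u-v)|=e^{2V}$, with the maximum taken over \emph{individual} roots: one is free to include $v$ without including $-v$. The product of the pair equalling $e^{4V}$ is a different condition, and at $\omega=1$ the two real factors $\bigl|\tfrac{Us_{1}+1}{s_{1}-U}\bigr|$ and $\bigl|\tfrac{Us_{1}-1}{s_{1}+U}\bigr|$ are generically unequal, so the two criteria do not obviously define the same curve. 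Your M\"obius reasoning does not apply to the single factor, since $\tfrac{Us+1}{s-U}$ is not rational in $s^{2}$. You flag this as a ``secondary point requiring care'' but do not resolve it, and the closing phrase ``equating it to $X$'' is not the correct comparison (the product should be compared to $e^{4V}$, after which one still has to solve for $X$).

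To repair your argument you would need an additional step showing that the maximum of the \emph{individual} factor is also attained at a real $\omega$; one route is to show monotonicity of $|\cot(u-v)|$ in $\alpha\in[0,\pi]$ for the appropriate branch of $s$, which is essentially what the paper's direct computation is groping toward. Alternatively, one can invoke modularity: the vertically inhomogeneous model is the $90^{\circ}$ rotation of the horizontally inhomogeneous one, where the eigenvalue genuinely carries the paired factor $\cot(u{+}a{-}v)\cot(u{-}a{-}v)$ per root, and there your product argument would apply directly; the paper in fact closes by noting this modular coincidence.
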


\begin{proof}
In the notations from above:
\[
|\cot(u-v)|^2=|\frac{\cot(u)\cot(v)+1}{\cot(v)-\cot(u)}|^2=
\frac{Us+1}{s-U}\frac{U\bar{s}+1}{\bar{s}-U}=
\frac{U^2|s|^2+2U Re(s)+1}{U^2-2U Re(s)+|s|^2}
\]
The equation of the boundary of the disordered region is
\[
\max_{|\omega|=1} |\cot(u-v)|=e^{2V}
\]

The equation defining $v$ in terms of roots of unity $\omega$ can be solved explicitly for $s=\cot v$:
\[
s^2=\frac{T^2x^2\w+1}{T^2+\w x^2}
\]
Denote $\w=e^{i\alpha}$, then
\[
s^2=\frac{(T^2x^2\cos(\al)+1)+iT^2x^2\sin(\al)}
{(T^2+x^2\cos(\al))+ix^2\sin(\al)}
\]
\[
Re(s^2)=\frac{(T^2x^2\cos(\al)+1)(T^2+x^2\cos(\al))+T^2x^2\sin(\al)x^2\sin(\al)}{(T^2+x^2\cos(\al))^2+
x^4\sin(\al)^2}
\]
\[
|s|^4=\frac{T^4x^4+2T^2x^2\cos(\al)+1}{T^4+2T^2x^2\cos(\al)+x^4}
\]
Now, a simple algebra:
\[
Re(s)=\sqrt{\frac{Re(s^2)+|s|^2}{2}}
\]
As we vary $\omega$ along the unit circle, the function $\cot(u-v)$
has maximum at $\w=1$.

Taking this into account the equation for the boundary curve
becomes
\[
\left|{U\sqrt{1+x^2T^2}\pm \sqrt{T^2+x^2}\over U\sqrt{T^2+x^2}\pm
 \sqrt{1+x^2T^2}}\right|=e^{2V}
\]
Solving this equation for $X=x^2=e^{-2H}$ in terms of $Y=e^{2V}$ we
obtain
\[
X=\left|{YU_+-1 \over Y+U_+}\right|\left|{YU_--1\over Y+U_-}\right|
\]
which is, after changing coordinates to $X^{-1}, Y^{-1}$ is the same curve as (\ref{b-curve}). This is one of the implications of the modular symmetry.
\end{proof}

By modularity, i.e. by "rotating" the lattice with periodic boundary conditions by $90$ degrees, all characteristics
of the model with vertical inhomogeneities can be identifies with
the corresponding characteristics of the model with horizontal inhomogeneities.

\section{The free energy of the $6$-vertex model}

The computation of the free energy
for the $6$-vertex model by taking the large $N,M$ limit of the partition function on a torus was outlines in section \ref{free-en-tdl}. In this section we will describe
the free energy as a function of electric fields $(H,V)$ and
its basic properties.

\subsection{The phase diagram for $\Delta>1$}
\subsubsection{} The weights $a$, $b$, and $c$
in this region satisfy one of the two inequalities, either $a>b+c$
or $b>a+c$.

If $a>b+c$, the Boltzmann weights $a$, $b$, and $c$ can be
parameterized as
\begin{equation}\label{dg1par-1}
a=r\sinh(\lambda+\eta), b=r\sinh(\lambda), c=r\sinh(\eta)
\end{equation}
with $\lambda,\eta>0$.

If $a+c<b$, the Boltzmann weights can be parameterized as
\begin{equation}\label{dg1par-2}
a=r\sinh(\lambda-\eta), b=r\sinh(\lambda), c=r\sinh(\eta)
\end{equation}
with $0<\eta<\lambda$.

For both of these parametrization of weights $\Delta=\cosh(\eta)$.

The phase diagram of the model for $a>b+c$ (and, therefore, $a>b$)
is shown on Fig. \ref{ferro_diagram_1} and for $b>a+c$ (and,
therefore, $a<b$) on Fig. \ref{ferro_diagram_2}.

\subsubsection{} When magnetic fields $(H,V)$ are in one of the regions $A_i, B_i$
of the phase diagram, the system in the thermodynamic limit
is described by
the translationally invariant Gibbs measure supported on the
corresponding frozen (ordered) configurations. There are four frozen configurations $A_1$, $A_2$, $B_1$, and $B_2$, shown on Fig. \ref{ferroelectric_phase}. For a finite but large grid the
probability of any other state is of the order at most $\exp(-\alpha N)$ for some positive $\alpha$.

Local correlation functions in a frozen state are products of
expectation values of characteristic functions of edges.
$$
\lim_{N\to\infty}\langle \sigma_{e_1}\dots\sigma_{e_n}\rangle_N=
\sigma_{e_1}(S)\dots\sigma_{e_n}(S)
$$
where $S$ is the one of the ferromagnetic states $A_i,B_i$.

\begin{figure}[b]
\begin{center}
\includegraphics{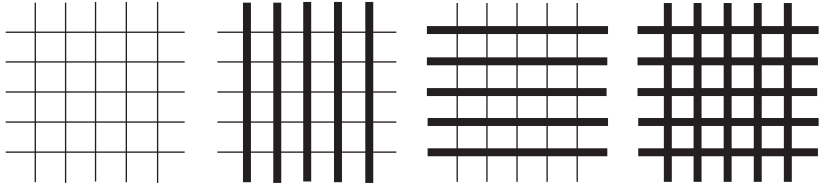}
\end{center}
\caption{Four frozen configurations of the ferromagnetic phase}
\label{ferroelectric_phase}
\end{figure}

\subsubsection{} The boundary between ordered phases in the $(H,V)$-plane and disordered phases, as in the free fermionic case, is determined by the next to the largest eigenvalue of the row-to-row transfer matrix.

Without going into the details of the computations as we did in the free fermionic case we will just give present answers.

$\bullet$ $a>b+c$, see Fig. \ref{ferro_diagram_1},
\begin{eqnarray}
&& \mbox{$A_1$-region:}\qquad V+H\geq 0, \qquad
\cosh(2H)\leq\Delta, \nonumber
\\
&& \qquad (e^{2H}-b/a)(e^{2V}-b/a)\geq(c/a)^2,\qquad e^{2H}>b/a,
\qquad \cosh(2H)>\Delta, \nonumber
\\
&& \mbox{$A_2$-region:}\qquad V+H\leq 0, \qquad
\cosh(2H)\leq\Delta, \nonumber
\\
&& \qquad (e^{-2H}-b/a)(e^{-2V}-b/a)\geq(c/a)^2,\qquad
e^{-2H}>b/a, \qquad \cosh(2H)>\Delta, \nonumber
\\
&& \mbox{$B_1$-region:}\qquad
(e^{2H}-a/b)(e^{-2V}-a/b)\geq(c/b)^2, \qquad e^{2H}>a/b, \nonumber
\\
&& \mbox{$B_2$-region:}\qquad
(e^{-2H}-a/b)(e^{2V}-a/b)\geq(c/b)^2, \qquad e^{-2H}>a/b.
\nonumber
\end{eqnarray}

\begin{figure}[t]
\begin{center}
\includegraphics[width=8cm, height=8cm]{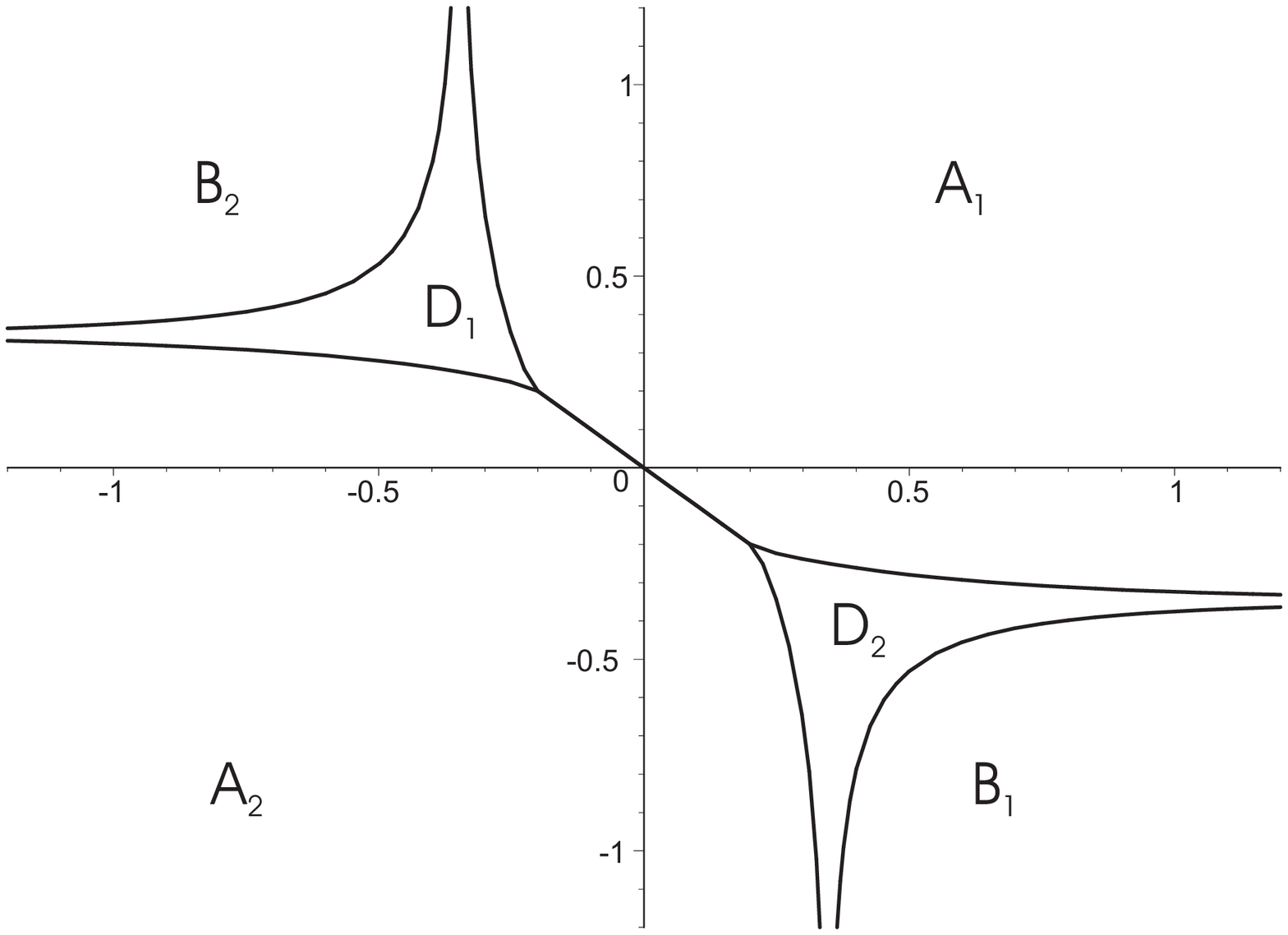}
\end{center}
\caption{The phase diagram in the $(H,V)$-plane for $a=2$, $b=1$,
and $c=0.8$} \label{ferro_diagram_1}
\end{figure}

$\bullet$ $b>a+c$, see Fig. \ref{ferro_diagram_2},
\begin{eqnarray}
&& \mbox{$A_1$-region:}\qquad(e^{2H}-b/a)(e^{2V}-b/a)\geq(c/a)^2,
\qquad e^{2H}> b/a; \nonumber
\\
&&
\mbox{$A_2$-region:}\qquad(e^{-2H}-b/a)(e^{-2V}-b/a)\geq(c/a)^2,
\qquad e^{-2H}> b/a; \nonumber
\\
&& \mbox{$B_1$-region:}\qquad V-H\geq 0, \qquad
\cosh(2H)\leq\Delta, \nonumber
\\
&& \qquad (e^{2H}-a/b)(e^{-2V}-a/b)\geq(c/b)^2, \qquad e^{2H}>a/b,
\qquad \cosh(2H)>\Delta, \nonumber
\\
&& \mbox{$B_2$-region:}\qquad  V-H\leq 0, \qquad
\cosh(2H)\leq\Delta, \nonumber
\\
&& \qquad (e^{-2H}-a/b)(e^{2V}-a/b)\geq(c/b)^2, \qquad
e^{-2H}>a/b, \qquad \cosh(2H)>\Delta. \nonumber
\end{eqnarray}

\begin{figure}[t]
\begin{center}
\includegraphics[width=8cm, height=8cm]{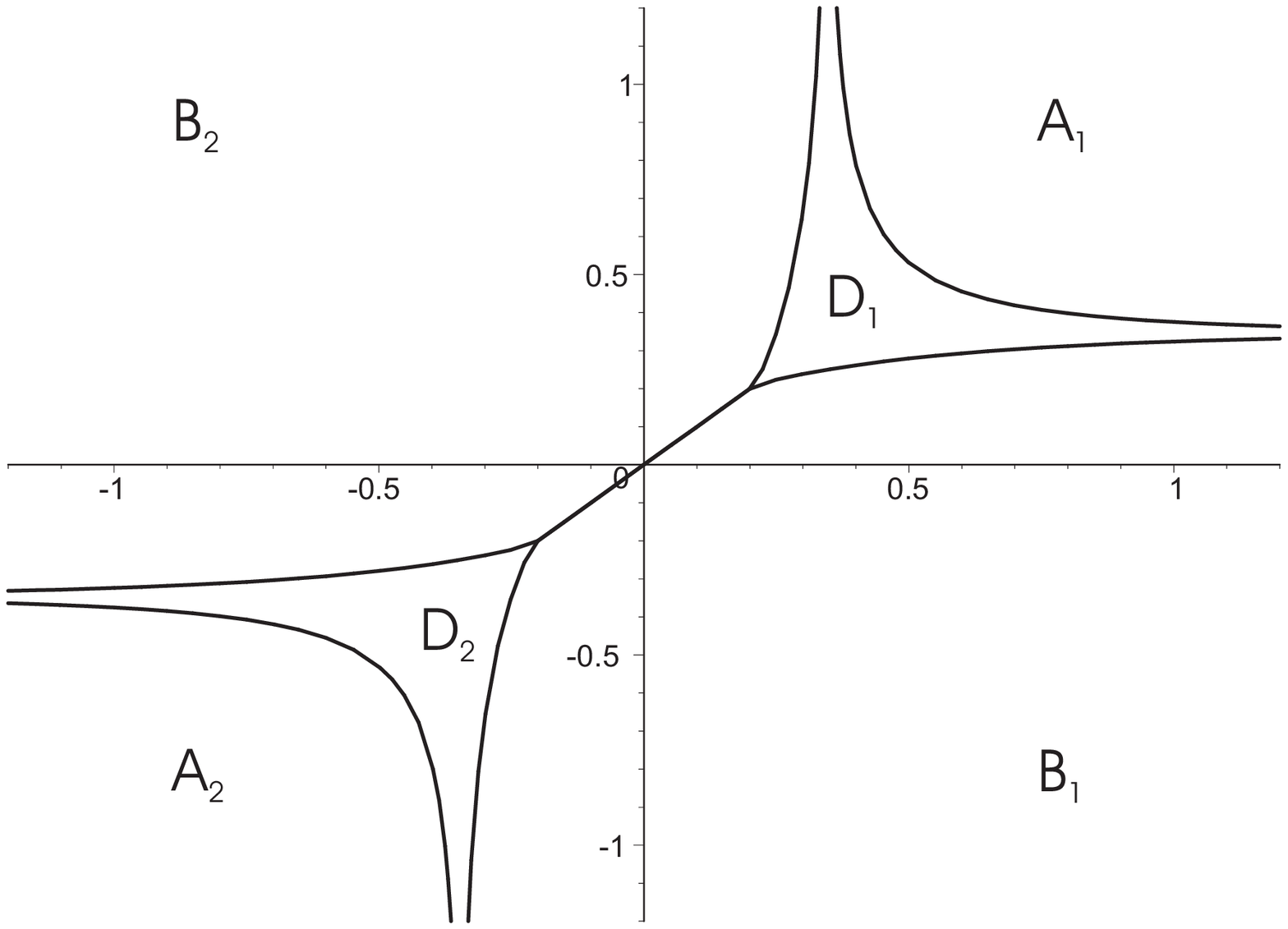}
\end{center}
\caption{The phase diagram in the $(H,V)$-plane for $a=1$, $b=2$,
and $c=0.8$} \label{ferro_diagram_2}
\end{figure}

The free energy is a linear function in $H$ and $V$ in the four
frozen regions:
\begin{eqnarray}
&& f=-\ln a-H-V\qquad\mbox{in}\quad A_1, \nonumber
\\
&& f=-\ln b+H-V\qquad\mbox{in}\quad B_2,
\\
&& f=-\ln a+H+V\qquad\mbox{in}\quad A_2, \nonumber
\\
&& f=-\ln b-H+V\qquad\mbox{in}\quad B_1. \nonumber
\label{linear_energy}
\end{eqnarray}

The regions $D_1$ and $D_2$ are disordered phases. If $(H,V)$ is in one of these regions, local correlation functions are determined by the unique Gibbs measure with the polarization given by the gradient of the free energy.
In this phase the system is disordered,
which means that local correlation functions decay as a power of
the distance $d(e_i,e_j)$ between $e_i$ and $e_j$ when
$d(e_i,e_j)\to \infty$.

In the regions $D_1$ and $D_2$ the free energy is given by
\cite{SY}:
\begin{multline}\label{free_energy}
f(H,V)=
\min(\min_\alpha\left(E_1-H-(1-2\alpha)V-\frac{1}{2\pi i} \int_C\ln(\frac{b}{a}-\frac{c^2}{ab-a^2 z})\rho(z)dz\right),
\\
\min_\alpha\left(E_2+H-(1-2\alpha)V-
\frac{1}{2\pi i}
\int_C\ln(\frac{a^2-c^2}{ab}+\frac{c^2}{ab-a^2
z})\rho(z)dz\right)),
\end{multline}
where $\rho(z)$ can be found from the integral equation
\begin{equation}\label{int-eqn}
\rho(z)=\frac{1}{z}+\frac{1}{2\pi i}\int_C
\frac{\rho(w)}{z-z_2(w)}dw -\frac{1}{2\pi i}\int_C
\frac{\rho(w)}{z-z_1(w)}dw,
\end{equation}
in which
$$
z_1(w)=\frac{1}{2\Delta-w}, \qquad z_2(w)=-\frac{1}{w}+2\Delta.
$$
$\rho(z)$ satisfies the following normalization condition:
$$
\alpha=\frac{1}{2\pi i}\int_C\rho(z)dz.
$$
The contour of integration $C$ (in the complex $z$-plane) is
symmetric with respect to the conjugation $z\rightarrow\bar{z}$,
is dependent on $H$  and is
defined by the condition that the form $\rho(z)dz$ has purely
imaginary values on the vectors tangent to $C$:
$$
{\rm Re}(\rho(z)dz)\Bigr|_{z\in\,\, C}=0.
$$

The formula (\ref{free_energy}) for the free energy follows from
the Bethe Ansatz diagonalization of the row-to-row
transfer-matrix. It relies on a number of conjectures that
are supported by numerical and analytical evidence and in physics
are taken for granted. However, there is no rigorous proof.

There are two points where three phases coexist (two frozen and
one disordered phase). These points are called {\it tricritical}.
The angle $\theta$ between the boundaries of $D_1$ (or $D_2$) at a
tricritical point is given by
$$
\cos(\theta)=\frac{c^2}{c^2+2\min(a,b)^2(\Delta^2-1)}.
$$
The existence of such points makes the $6$-vertex model (and its
degeneration known as the $5$-vertex model \cite{HWKK}) remarkably
different from dimer models \cite{KOS} where generic singularities
in the phase diagram are cusps. Physically, the existence of
singular points where two curves meet at the finite angle
manifests the presence of interaction in the $6$-vertex model.

Notice that when $\Delta= 1$ the phase diagram of the model has a
cusp at the point $H=V=0$. This is the transitional point between
the region $\Delta>1$ and the region $|\Delta|<1$ which is
described below.

\subsection{The phase diagram $|\Delta|<1$} In this case,
the Boltzmann weights have a convenient parametrization by
trigonometric functions. When $1\geq\Delta\geq 1$
\[
a=r\sin(\lambda-\gamma), b=r\sin(\lambda), c=r\sin(\gamma),
\]
where $0\leq \gamma\leq \pi/2$, $\gamma\leq \lambda\leq \pi$, and
$\Delta=\cos\gamma$.

When $0\geq\Delta\geq -1$
\[
a=r\sin(\gamma-\lambda), b=r\sin(\lambda), c=r\sin(\gamma),
\]
where $0\leq \gamma\leq \pi/2$, $\pi- \gamma\leq \lambda\leq \pi$,
and $\Delta=-\cos\gamma$.

The phase diagram of the $6$-vertex model with $|\Delta|<1$ is
shown on Fig. \ref{disordered_diagram}. The phases $A_i, B_i$ are
frozen and identical to the frozen phases for $\Delta>1$. The
phase $D$ is disordered. For magnetic fields $(H,V)$ the Gibbs
measure is translationally invariant with the slope
$(h,v)=(\frac{\pa f(H,V)}{\pa H}, \frac{\pa f(H,V)}{\pa V})$.

\begin{figure}[t]
\begin{center}
\includegraphics[width=10cm, height=10cm]{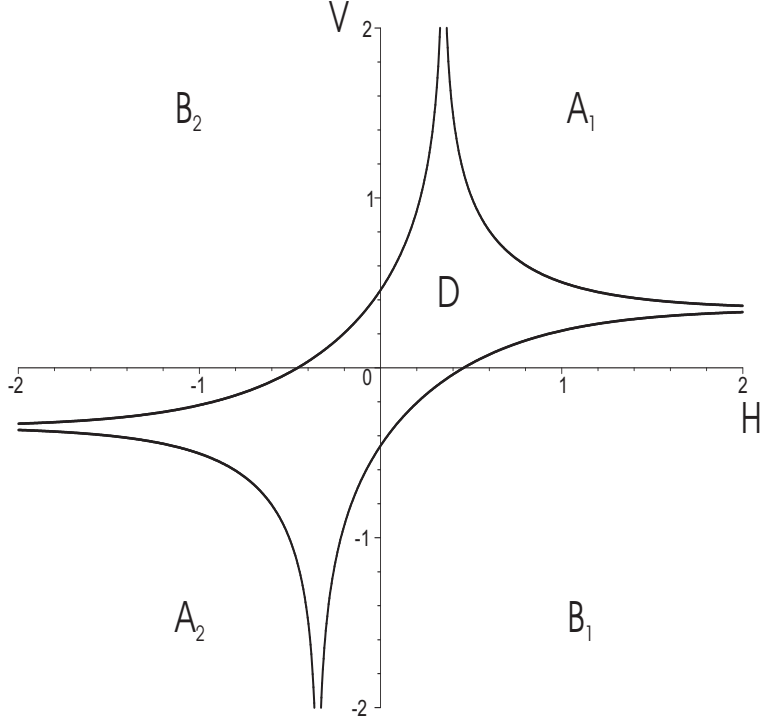}
\end{center}
\caption{The phase diagram in the $(H,V)$-plane for $a=1$, $b=2$,
and $c=2$} \label{disordered_diagram}
\end{figure}

The frozen phases can be described by the following inequalities:
\begin{eqnarray}
&& \mbox{$A_1$-region:}\qquad(e^{2H}-b/a)(e^{2V}-b/a)\geq(c/a)^2,
\qquad e^{2H}> b/a, \nonumber
\\
&&
\mbox{$A_2$-region:}\qquad(e^{-2H}-b/a)(e^{-2V}-b/a)\geq(c/a)^2,
\qquad e^{-2H}> b/a, \nonumber
\\
&& \mbox{$B_1$-region:}\qquad(e^{2H}-a/b)(e^{-2V}-a/b)\geq(c/b)^2,
\qquad e^{2H}>a/b, \label{boundaries_phases}
\\
&& \mbox{$B_2$-region:}\qquad(e^{-2H}-a/b)(e^{2V}-a/b)\geq(c/b)^2,
\qquad e^{-2H}>a/b. \nonumber
\end{eqnarray}

The free energy function in the frozen regions is still given by
the formulae (\ref{linear_energy}). The first derivatives of the
free energy are continuous at the boundary of frozen phases, The
second derivative is continuous in the tangent direction at the
boundary of frozen phases and is singular in the normal direction.

It is smooth  in the disordered region where it is given by
(\ref{free_energy}) which, as in case $\Delta>1$ involves a
solution to the integral equation (\ref{int-eqn}). The contour of
integration in (\ref{int-eqn}) is closed for zero magnetic fields
and, therefore, the equation (\ref{int-eqn}) can be solved
explicitly by the Fourier transformation \cite{Ba} .

The $6$-vertex Gibbs measure with zero magnetic fields converges
in the thermodynamic limit to the superposition of translationally
invariant Gibbs measures with the slope $(1/2,1/2)$. There are two
such measures. They correspond to the double degeneracy of the
largest eigenvalue of the row-to-row transfer-matrix \cite{Ba}.

There is a very interesting relationship between the $6$-vertex
model in zero magnetic fields and the highest weight
representation theory of the corresponding quantum affine algebra.
The double degeneracy of the Gibbs measure with the slope
$(1/2,1/2)$ corresponds to the fact that there are two integrable
irreducible representations of $\widehat{sl_2}$ at level one.
Correlation functions in this case can be computed using
$q$-vertex operators \cite{JM}. For latest developments see
\cite{BJMST}.

\subsection{The phase diagram $\Delta<-1$}

\subsubsection{The phase diagram}The Boltzmann weights for these values of
$\Delta$ can be conveniently parameterized as
\begin{equation}\label{hyp-param}
a=r\sinh(\eta-\lambda), b=r\sinh(\lambda), c=r\sinh(\eta),
\end{equation}
where $0<\lambda<\eta$ and $\Delta=-\cosh\eta$.

The Gibbs measure in thermodynamic limit depends on the value of
magnetic fields. The phase diagram in this case is shown on Fig.
\ref{antiferroelectric_diagram} for $b/a>1$. In the
parameterization (\ref{hyp-param}) this correspond to $0<\lambda<
\eta/2$. When $\eta/2<\lambda <\eta$ the $4$-tentacled ``amoeba''
is tilted in the opposite direction as on Fig.
\ref{ferro_diagram_1}.

When $(H,V)$ is in one of the $A_i,B_i$
regions in the phase diagram the Gibbs measure is supported on the
corresponding frozen configuration, see Fig.
\ref{ferroelectric_phase}.

\begin{figure}[t]
\begin{center}
\includegraphics[height=10cm, width=10cm]{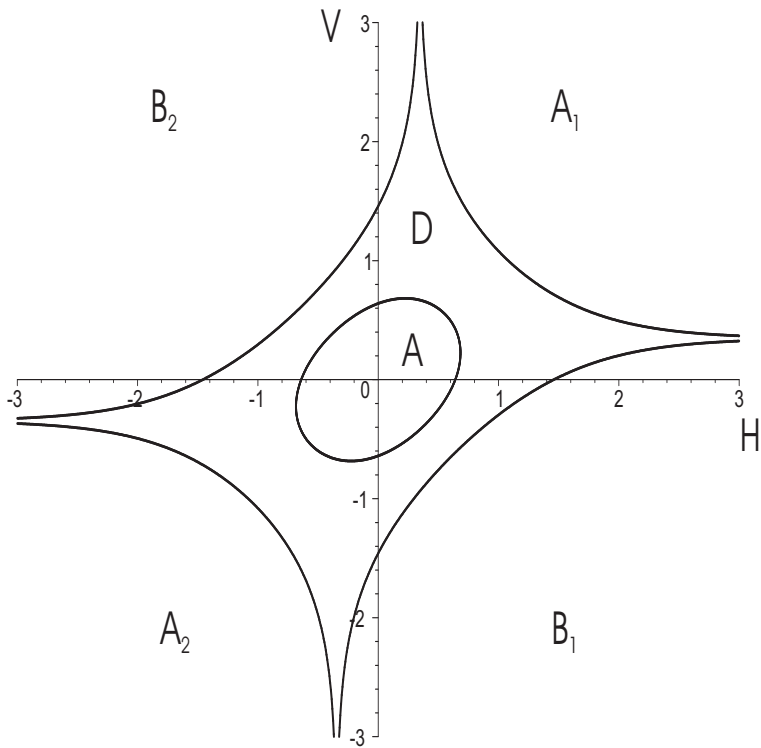}
\end{center}
\caption{The phase diagram in the $(H,V)$-plane for $a=1$, $b=2$,
and $c=6$} \label{antiferroelectric_diagram}
\end{figure}

The boundary between ordered phases $A_i, B_i$ and the disordered phase $D$ is given by inequalities (\ref{boundaries_phases}). The free energy in these regions is linear in electric fields and is given by (\ref{linear_energy}).

If $(H,V)$ is in the region $D$, the Gibbs measure is the
translationally invariant measure with the polarization $(h,v)$
determined by (\ref{slope}). The free energy in this case is
determined by the solution to the linear integral equation
(\ref{int-eqn}) and is given by the formula (\ref{free_energy}).

If $(H,V)$ is in the region $A$, the Gibbs measure is the
superposition of two Gibbs measures with the polyarization $(1/2,1/2)$. In
the limit $\Delta\to -\infty$ these two measures degenerate to two
measures supported on configurations $C_1, C_2$, respectively,
shown on Fig. \ref{antiferr}. For a finite $\Delta$ the support of
these measures consists of configurations which differ from $C_1$
and $C_2$ in finitely many places on the lattice.

\begin{figure}[b]
\begin{center}
\includegraphics[width=8cm, height=4cm]{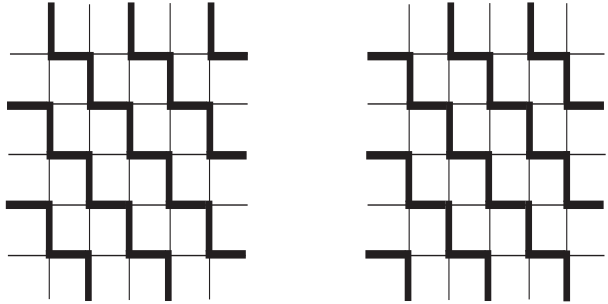}
\end{center}
\caption{The configurations $C_1$ and $C_2$.} \label{antiferr}
\end{figure}

\begin{figure}[h]
\begin{center}
\includegraphics{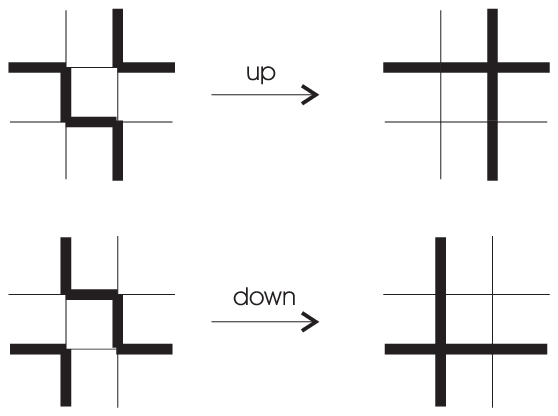}
\end{center}
\caption{The elementary up and down fluctuations in the
antiferromagnetic phase.} \label{fluct}
\end{figure}

\begin{remark} Any two configurations lying in the support of each
of these Gibbs measures can be obtained from $C_1$ or $C_2$ via
flipping the path at a vertex ``up'' or ``down'' as it is shown on
Fig. \ref{fluct} finitely many times. It is also clear that it
takes infinitely many flips to go from $C_1$ to $C_2$.
\end{remark}

\subsubsection{The antiferromagnetic region}\label{AF-region}
The $6$-vertex model in the phase $A$ is disordered and is also
noncritical. Here the non-criticality means that the local correlation function
$\langle\sigma_{e_i}\sigma_{e_j}\rangle$ decays as $\exp(-\alpha
d(e_i,e_j))$ with some positive $\alpha$ as the distance
$d(e_i,e_j)$ between $e_i$ and $e_j$ increases to infinity.

The free energy in the $A$-region can be explicitly computed by
solving the equation (\ref{int-eqn}). In this case the largest eigenvalue will correspond to $n=N/2$ , the contour of integration in (\ref{int-eqn}) is closed, and the equation can be solved by the Fourier transform. \footnote{Strictly speaking this is a conjecture supported by the numerical evidence}.

The boundary between the antiferromagnetic region $A$
and the disordered region $D$ can be derived
similarly to the boundaries of the ferromagnetic regions $A_i$ and
$B_i$  by analyzing next to the largest eigenvalue of the
row-to-row transfer-matrix.
This computation was done in \cite{SY}, \cite{LW}. The result is a
simple closed curve, which can be described parameterically as
$$
H(s)=\Xi(s), \qquad V(s)=\Xi(\eta-\theta_0+s),
$$
where
$$
\Xi(\varphi)=\cosh^{-1}\Bigl(\frac{1}{{\rm
dn}(\frac{K}{\pi}\varphi|1-\nu)}\Bigr),
$$
$$
|s|\leq 2\eta,
$$
and
$$
e^{\theta_0}=\frac{1+\max(b/a,a/b)e^{\eta}}{\max(b/a,a/b)+e^{\eta}}.
$$
The parameter  $\nu$ is defined by the equation $\eta K(\nu)=\pi
K'(\nu)$, where
$$
K(\nu)=\int_0^{\pi/2}(1-\nu\sin^2(\theta))^{-1/2}d\theta \qquad
K'(\nu)=\int_0^{\pi/2}(1-(1-\nu)\sin^2(\theta))^{-1/2}d\theta.
$$

The curve is invariant with respect to the reflections $(H,V)\to
(-H,-V)$ and $(H,V)\to (V,H)$ since the function $\Xi$ satisfies
the identities
$$
\Xi(\varphi)=-\Xi(-\varphi), \qquad
\Xi(\eta-\varphi)=\Xi(\eta+\varphi).
$$
This function is also $4\eta$-periodic:
$\Xi(4\eta+\varphi)=\Xi(\varphi)$.

As it was shown in \cite{PaR} this curve is algebraic
in $e^H$ and $e^V$ and can be written as
\begin{eqnarray}
&& \Bigl((1-\nu\cosh^2V_0) \cosh^2H+\sinh^2V_0-(1-\nu)\cosh V_0
\cosh H \cosh V\Bigr)^2= \nonumber
\\
&& (1-\nu\cosh^2V_0)\sinh^2V_0\cosh^2V\sinh^2H(1-\nu\cosh^2H),
\label{antiferroelectic_boundary}
\end{eqnarray}
where $V_0$ is the positive value of $V$ on the curve when $H=0$.
Notice that $\nu$ depends on the Boltzmann weights $a,b,c$ only
through $\eta$.

\section{Some asymptotics of the free energy}

\subsection{The scaling near the boundary of the $D$-region}
Assume that $\vec{H}_0=(H_0,V_0)$ is  a regular point
at the boundary between the disordered region and and the $A_1$
-region, see Fig.\ref{disordered_diagram}. Recall that this boundary is the curve defined by the equation
\[
g(H,V)=0
\]
where
\begin{equation}\label{inter_D_F}
g(H,V)=\ln(b/a+\frac{c^2/a^2}{e^{2H}-b/a})-2V.
\end{equation}

Denote the normal vector to the boundary of the $D$-region at $\vec{H}_0$ by $\vec{n}$ and the tangent vector pointing inside of
the region $D$ by $\vec{\tau}$.

We will study the asymptotic of the free energy along the curves $\vec{H}(r,s,t)=\vec{H_0}+r^2s\vec{n}+rt\vec{\tau}$, as $r\to 0$.
It is clear that  $\vec{H}(r,s,t)$ is in the  $D$-region  if $s\geq 0$ .

\begin{theorem}
Let $\vec{H}(r,s,t)$ be defined as above. The asymptotic of the
free energy of the $6$-vertex model in the limit $r\rightarrow 0$
is given by
\begin{equation}\label{asympt_free_energy_interface}
f(\vec{H}(r,s,t))=f_{lin}(\vec{H}(r,s,t)) +\eta(s,t)r^3+O(r^5),
\end{equation}
where $f_{lin}(H,V)=-\ln(a)-H-V$ and
\begin{equation}\label{scal_lim_bound}
\eta(s,t)=-\kappa \left(\theta s+t^2\right)^{3/2}.
\end{equation}
Here the constants $\kappa$ and $\theta$ depend on the Boltzmann
weights of the model and on $(H_0,V_0)$ and are given by
$$
\kappa=\frac{16}{3\pi}\,\partial_H^2 g(H_0,V_0)
$$
and
$$
\theta=\frac{4+\left(\partial_H g(H_0,V_0)\right)^2}{2\partial_H^2
g(H_0,V_0)},
$$
where $g(H,V)$ is defined in (\ref{inter_D_F}).

Moreover, $\partial_H^2g(H_0,V_0)>0$ and, therefore, $\theta>0$.
\end{theorem}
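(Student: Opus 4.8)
\noindent\emph{Plan of the proof.}
The starting point is the Sutherland--Yang variational formula \eqref{free_energy}: in the disordered region $f=\min\bigl(\min_{\alpha}F_1,\min_{\alpha}F_2\bigr)$, where on the branch relevant near the $A_1$--$D$ interface
\[
F_1(\alpha;H,V)=E_1-H-(1-2\alpha)V-I(\alpha,H),\qquad
I(\alpha,H)=\frac{1}{2\pi i}\int_{C}\ln\!\Bigl(\frac ba-\frac{c^2}{ab-a^2z}\Bigr)\rho(z)\,dz ,
\]
and $(C,\rho)$ solve \eqref{int-eqn} with $\alpha=\frac1{2\pi i}\int_{C}\rho\,dz$. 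At $\alpha=0$ the contour is empty, $I(0,H)=0$, and $F_1(0;H,V)$ equals the $A_1$ free energy $f_{lin}(H,V)=-\ln a-H-V$. First I would verify that near $\vec{H}_0$ it is this branch that is active in \eqref{free_energy}, that the minimum over $\alpha$ is attained at an interior critical point $\alpha^{*}(H,V)$, and that $\alpha^{*}\to 0$ while the contour $C$ collapses onto the point $z_{*}(H)=e^{2H}$ as $(H,V)\to\vec{H}_0$; locally $(H,V)\in D$ is equivalent to $g(H,V)\ge 0$.

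The heart of the matter is the small-$\alpha$ expansion of $I(\alpha,H)$. Because the M\"obius maps $z_1(w)=1/(2\Delta-w)$ and $z_2(w)=2\Delta-1/w$ send a small arc around $z_{*}$ to small arcs around $z_1(z_{*}),z_2(z_{*})$, which stay a fixed distance from $C$, the two integral terms in \eqref{int-eqn} act on a collapsing contour as multiplication by functions holomorphic near $z_{*}$, and the equation reduces to leading orders to a finite Hilbert transform (``airfoil'') equation on the arc with driving term $1/z$. In Dirac-sea language this is the Pokrovsky--Talapov situation: the filled band is an interval about a quadratic minimum $z_{*}$ of the dispersion, of width $\propto\sqrt{\mu-\epsilon_{\min}}\propto\alpha$ and symmetric about $z_{*}$ to leading order, so the would-be $\alpha^{2}$ term in $I$ cancels by parity. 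One gets
\[
I(\alpha,H)=\mu(H)\,\alpha-d_1(H)\,\alpha^{3}+O(\alpha^{5}),\qquad
\mu(H)=\ln\!\Bigl(\frac ba+\frac{c^2/a^2}{e^{2H}-b/a}\Bigr),
\]
with no $\alpha^{2}$ (nor $\alpha^{4}$) term; evaluating the cubic coefficient from the band curvature and the value of $\rho$ at $z_{*}$ yields $d_1(H)=\tfrac{\pi^{2}}{24}\,\partial_H^{2}g$ (a function of $H$ only). Since $2V-\mu(H)=-g(H,V)$ this gives $F_1(\alpha;H,V)=f_{lin}(H,V)-g(H,V)\,\alpha+d_1(H)\,\alpha^{3}+O(\alpha^{5})$. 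Existence of a genuine minimum over $\alpha\ge 0$ forces $d_1(H)>0$, i.e.\ $\partial_H^{2}g(H_0,V_0)>0$ and hence $\theta>0$ (alternatively this follows from strict convexity of $f$ in $D$, or from direct differentiation of \eqref{inter_D_F}); then $\alpha^{*}=\sqrt{g/(3d_1)}\,\bigl(1+O(g)\bigr)$ and
\[
f(H,V)=f_{lin}(H,V)-\frac{2}{3\sqrt{3\,d_1(H)}}\;g(H,V)^{3/2}+O\bigl(g(H,V)^{5/2}\bigr),\qquad g\ge 0 .
\]
Making this rigorous --- controlling the collapsing-contour asymptotics of \eqref{int-eqn}, proving the vanishing of the $\alpha^{2}$ term, and pinning down $d_1=\tfrac{\pi^{2}}{24}\partial_H^{2}g$ --- is the step I expect to be the main obstacle.

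It remains to substitute the affine path. Since the nonlinear part of $g$ depends on $H$ alone, $\partial_V g\equiv-2$, so $\vec{\tau}=(2,\partial_H g(\vec{H}_0))$ is tangent to $\{g=0\}$ at $\vec{H}_0$; take $\vec{n}=\nabla g(\vec{H}_0)=(\partial_H g(\vec{H}_0),-2)$, which points into $D$ because $g>0$ there. Taylor-expanding $g$ along $\vec{H}(r,s,t)=\vec{H}_0+r^{2}s\,\vec{n}+rt\,\vec{\tau}$ and using $\nabla g\cdot\vec{\tau}=0$, $(\vec{\tau}\cdot\nabla)^{2}g=4\,\partial_H^{2}g$ and $\nabla g\cdot\vec{n}=4+(\partial_H g)^{2}$,
\[
g\bigl(\vec{H}(r,s,t)\bigr)=\Bigl(\bigl(4+(\partial_H g)^{2}\bigr)s+2\,\partial_H^{2}g\,t^{2}\Bigr)r^{2}+\cdots
=2\,\partial_H^{2}g(\vec{H}_0)\,\bigl(\theta s+t^{2}\bigr)\,r^{2}+\cdots ,
\]
with $\theta=\dfrac{4+(\partial_H g(\vec{H}_0))^{2}}{2\,\partial_H^{2}g(\vec{H}_0)}$. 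As $f_{lin}$ is affine it contributes no correction, so combining the last two displays,
\[
f\bigl(\vec{H}(r,s,t)\bigr)-f_{lin}\bigl(\vec{H}(r,s,t)\bigr)
=-\,\frac{2\,\bigl(2\,\partial_H^{2}g\bigr)^{3/2}}{3\sqrt{3\,d_1(H_0)}}\,\bigl(\theta s+t^{2}\bigr)^{3/2}\,r^{3}+\cdots ,
\]
and inserting $d_1(H_0)=\tfrac{\pi^{2}}{24}\,\partial_H^{2}g(\vec{H}_0)$ collapses the prefactor to $\kappa=\tfrac{16}{3\pi}\,\partial_H^{2}g(\vec{H}_0)$, giving \eqref{asympt_free_energy_interface}--\eqref{scal_lim_bound}. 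The error term in the statement then follows by carrying the $g^{5/2}$ correction and the subleading terms of $g(\vec{H}(r,s,t))$ through the same bookkeeping.
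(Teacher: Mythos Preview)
The paper does not prove this theorem: after stating it, the text simply says ``We refer the reader to \cite{PaR} for the details,'' so there is no in-paper argument to compare against. Your outline is almost certainly the route taken in \cite{PaR}: treat the Sutherland--Yang formula \eqref{free_energy} as a one-parameter variational problem in $\alpha$, expand the integral term for small $\alpha$ as the Bethe-root contour collapses onto $z_*=e^{2H}$, identify the linear coefficient with the interface function $g$, and minimize. The Pokrovsky--Talapov mechanism you invoke (parity of the leading density about the band center killing the $\alpha^2$ term) is exactly the standard reason for the $3/2$ exponent, and your bookkeeping at the end --- the choice $\vec\tau=(2,\partial_Hg)$, $\vec n=\nabla g$, the Taylor expansion of $g$ along the path, and the collapse of the constants to $\kappa=\tfrac{16}{3\pi}\partial_H^2g$ once $d_1=\tfrac{\pi^2}{24}\partial_H^2g$ is granted --- all check out.

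You have correctly isolated the one genuine gap: establishing the small-$\alpha$ expansion $I(\alpha,H)=\mu(H)\alpha-d_1(H)\alpha^3+O(\alpha^5)$ from the integral equation \eqref{int-eqn}, in particular (i) that the $\alpha^2$ term vanishes and (ii) that $d_1=\tfrac{\pi^2}{24}\partial_H^2g$. This is indeed the nontrivial analytic step; everything else is calculus. One small caution on the final error bound: along your path $g(\vec H(r,s,t))$ has, in general, a nonzero $r^3$ contribution (from the cubic $(\tau\cdot\nabla)^3g=8\,\partial_H^3g$ and the mixed $r^2s\cdot rt$ term), so $g^{3/2}$ naively produces an $r^4$ term. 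To actually get $O(r^5)$ as stated you would need to track this together with the $H$-dependence of $d_1$ and check for a cancellation; this is worth flagging rather than folding silently into ``the same bookkeeping.''
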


We refer the reader to \cite{PaR} for the details. This behavior is universal in a sense that the exponent $3/2$ is the same for
all points at the boundary.

\subsection{The scaling in the tentacle}
Here, to be specific we assume that $a>b$. The theorem below describes the asymptotic of
the free energy function when $H\to +\infty$ and
\begin{equation}
\frac{1}{2}\ln(b/a)-\frac{c^2}{2ab}\,e^{-2H}\leq V
\leq\frac{1}{2}\ln(b/a)+\frac{c^2}{2ab}\,e^{-2H}, \qquad
H\longrightarrow\infty. \label{tentacle}
\end{equation}
These values of $(H,V)$ describe points inside the right
``tentacle'' on the Fig. \ref{ferro_diagram_1}.

Let us parameterize these values of $V$ as
$$
V=\frac{1}{2}\ln(b/a)+\beta\frac{c^2}{2ab}e^{-2H},
$$
where $\beta\in[-1,1]$.

\begin{theorem}
When $H\to \infty$ and $\beta\in [-1,1]$ the asymptotic of the
free energy is given by the following formula:
\begin{eqnarray}\label{tent-ass}
f(H,V)=-\frac{1}{2}\ln(ab)-H-\nonumber
\\  \frac{c^2}{2ab}\,e^{-2H} \Bigl(\beta+
&&
\frac{2}{\pi}\sqrt{1-\beta^2}-\frac{2}{\pi}\beta\arccos(\beta)\Bigr)
+O(e^{-4H}), \nonumber
\end{eqnarray}
\end{theorem}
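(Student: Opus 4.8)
The natural starting point is the exact formula~(\ref{free_energy}) for the free energy in the disordered region, since the tentacle of~(\ref{tentacle}) is the thin sliver of $D_1$ wedged between the frozen phases $A_1$ and $B_1$. Using~(\ref{boundaries_phases}) with $e^{2H}\to\infty$ one checks that the upper edge $\beta=1$ is the $A_1$-boundary, the lower edge $\beta=-1$ is the $B_1$-boundary, and that on this sliver the relevant branch of the minimum in~(\ref{free_energy}) is the first one, with $E_1=-\ln a$. As $(H,V)$ runs into the tentacle the minimising filling fraction $\alpha=\alpha_\star$ tends to $0$ (it equals $0$ exactly on the $A_1$-edge). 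The plan is: (i) find the rate at which $\alpha_\star$ and the contour $C=C(H)$ of~(\ref{int-eqn}) shrink as $H\to\infty$; (ii) pass to the scaling limit of the singular integral equation~(\ref{int-eqn}); (iii) solve the limiting equation explicitly; (iv) substitute into~(\ref{free_energy}) and expand to order $e^{-2H}$, controlling the remainder.

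The structural fact behind step (ii) is that the maps $z_1(w)=(2\Delta-w)^{-1}$ and $z_2(w)=2\Delta-w^{-1}$ in~(\ref{int-eqn}) are mutually inverse M\"obius transformations with common fixed points $e^{\pm\eta}$ (the roots of $w^2-2\Delta w+1=0$, $\Delta=\cosh\eta$), the multipliers there being $e^{\pm2\eta}$; in a coordinate putting these fixed points at $0,\infty$ and then in a logarithm they become opposite shifts, so~(\ref{int-eqn}) is a convolution-type linear equation, and for zero field the contour is a full period, which is why it is then solvable by Fourier transform. In the tentacle I expect $C$ to collapse, at a definite rate in $\epsilon:=e^{-2H}$, onto one of the fixed points $e^{\pm\eta}$ --- collapse onto a fixed point being forced because that is the only place where the kernels of~(\ref{int-eqn}) stay singular in the limit. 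Rescaling $z$ and $w$ around that point turns~(\ref{int-eqn}) into a linear singular integral equation on a fixed interval $[p_-,p_+]$ --- a small ``Fermi sea'' --- with a difference kernel and a source coming from the expansion of $1/z$; this is a finite Hilbert transform (airfoil) equation. The endpoints $p_\pm$, equivalently the symmetric offset that $\beta$ records, together with the square-root vanishing of the limiting density $\rho_0$ at $p_\pm$, are fixed by the stationarity condition from the $\min_\alpha$ in~(\ref{free_energy}); inverting that relation is what produces the functions $\sqrt{1-\beta^2}$ and $\arccos\beta$.

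Feeding $\rho_0$ back into~(\ref{free_energy}), and using $V=\tfrac12\ln(b/a)+\beta\,\tfrac{c^2}{2ab}\,\epsilon$, the terms $E_1-H-(1-2\alpha)V$ produce $-\tfrac12\ln(ab)-H$ together with an $O(\epsilon)$ piece linear in $\beta$ and in $\alpha_\star$, while the integral in~(\ref{free_energy}), evaluated on the rescaled interval against $\rho_0$, contributes the remaining $O(\epsilon)$ piece; a short evaluation of the resulting logarithmic potential assembles these into $-\tfrac{c^2}{2ab}\,e^{-2H}\big(\beta+\tfrac2\pi\sqrt{1-\beta^2}-\tfrac2\pi\beta\arccos\beta\big)$. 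The next correction is $O(\epsilon^2)=O(e^{-4H})$: the subleading terms in the expansions of $z_1,z_2,1/z$ and of the logarithm around the collapse point are relatively $O(\epsilon)$, and the first correction to the contour $C$ is $O(\epsilon)$ in length, so both feed into $f$ only at order $\epsilon^2$. As a consistency check, at $\beta=\pm1$ the bracket equals $1$ and the formula collapses to the linear free energies $-\ln a-H-V$ in $A_1$ and $-\ln b-H+V$ in $B_1$ from~(\ref{linear_energy}), as continuity of $f$ across the edges of the tentacle demands.

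The main obstacle is making steps (i)--(ii) rigorous: one must prove that $C(H)$ really collapses onto the fixed point in the predicted way --- using the self-consistency condition ${\rm Re}\big(\rho(z)\,dz\big)\big|_{z\in C}=0$ that defines $C$ --- and that the solution of~(\ref{int-eqn}) converges, with an $O(\epsilon)$-controlled error in the density and in the endpoints, to the explicit airfoil solution. Getting the sharp remainder $O(e^{-4H})$, rather than merely $o(e^{-2H})$, means pushing this perturbation analysis to second order in $\epsilon$ for both $\rho$ and $C$. This is a perturbation of a Riemann--Hilbert / singular-integral problem of the type carried through in~\cite{PaR}, whose techniques also underlie the boundary-scaling theorem stated just above; once that is in place, the remaining logarithmic-potential computation is routine.
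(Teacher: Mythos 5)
The paper does not actually prove this theorem --- it defers to \cite{PaR} --- so your plan has to be judged on its own terms. The framework (expand (\ref{free_energy}) in the tentacle) and your checks at $\beta=\pm 1$ against (\ref{linear_energy}) are correct, but two of your central structural claims are wrong and would send the computation down the wrong road. First, the filling fraction does \emph{not} tend to $0$ in the tentacle. Differentiating the formula you are trying to prove gives $f_V=-1+\frac{2}{\pi}\arccos\beta$, so by (\ref{slope}) the vertical polarization is $v=1-\frac{1}{\pi}\arccos\beta$, and hence the Bethe filling fraction equals $\frac{1}{\pi}\arccos\beta$ (up to the convention $\alpha\leftrightarrow 1-\alpha$) and sweeps all of $(0,1)$ as $\beta$ runs over $(-1,1)$; it vanishes only on the $A_1$ edge $\beta=1$. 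This is exactly what the paper's Legendre-transform discussion says (the tentacle maps to $h\to 1$ with $0<v<1$), so your premise ``$\alpha_\star\to 0$'' is inconsistent with the statement you are proving. What becomes small is not the number of roots but the support of their images in the $z$-plane.

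Second, the contour does not collapse onto a fixed point $e^{\pm\eta}$ of $z_1,z_2$, and the limiting problem is not a finite-Hilbert-transform (airfoil) equation. From (\ref{6v-BE}) one has $|z|^N\sim e^{-2NH}$ up to the two-body product, whose modulus is $\approx 1$ once all roots cluster; hence $C$ collapses onto $z=0$ at radius $\sim e^{-2H}$. The fixed points of $z_{1},z_{2}$ satisfy $z z'=1$ and are bounded away from $0$; near $z=0$ one has $z_1(w)\to 1/(2\Delta)$ and $z_2(w)\to\infty$, so both kernels of (\ref{int-eqn}) are bounded and smooth on the collapsed contour. Writing $z=e^{-2H}\zeta$, $w=e^{-2H}\xi$ one finds $\frac{1}{z-z_2(w)}-\frac{1}{z-z_1(w)}=2\Delta+O(e^{-2H})$, so the interaction degenerates to a constant and there is no singular integral equation to solve at the relevant order: the problem is effectively free. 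The $O(e^{-2H})$ term then comes from elementary band filling --- the single-particle dispersion on the small circle is $\propto e^{-2H}\cos\theta$, the filled set is an arc of angular length $2\pi\alpha$, and maximizing $(1-2\alpha)\beta+\frac{2}{\pi}\sin\pi\alpha$ over $\alpha$ gives $\cos\pi\alpha=\beta$, whence $\sin\pi\alpha=\sqrt{1-\beta^2}$ and exactly the stated bracket. So $\sqrt{1-\beta^2}$ and $\arccos\beta$ are not square-root edge effects of an airfoil density but the output of this optimization. Your remaining program (perturbative control of the contour and of the $O(e^{-4H})$ remainder, as in \cite{PaR}) is reasonable once step (ii) is replaced by this degenerate, non-singular limit of (\ref{int-eqn}).
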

The proof is given in \cite{PaR}.

\subsection{The $5$-vertex limit}
The $5$-vertex model can be obtained as the limit of the
$6$-vertex model when $\Delta\rightarrow\infty$. Magnetic fields
in this limit behave as follows:
\begin{itemize}

\item $a>b+c$. In the parametrization (\ref{dg1par-1}) after
changing variables $H=\frac{\eta}{2}+l$, and $V=-\frac{\eta}{2}+m$
take the limit $\eta\to \infty$ keeping $\lambda$ fixed. The
weights will converge (up to a common factor) to:
\[
a_1:a_2:b_1:b_2:c_1:c_2\to e^{\lambda+l+m}:e^{\lambda-l-m}:
(e^\lambda-e^{-\lambda})e^{l-m}:0:1:1
\]

\item $a+c<b$. In the parametrization (\ref{dg1par-2}) after
changing variables $H=\frac{\eta}{2}+l$, and $V=\frac{\eta}{2}+m$
take the limit $\eta\to \infty$ keeping $\xi=\lambda-\eta$ fixed.
The weights will converge (up to a common factor) to:
\[
a_1:a_2:b_1:b_2:c_1:c_2\to
(e^\xi-e^{-\xi})e^{l+m}:0:e^{\xi+l-m}:e^{\xi-l+m}:1:1
\]

\end{itemize}
The two limits are related by inverting horizontal arrows. From
now on we will focus on the 5-vertex model obtained by the limit
from the 6-vertex one when $a>b+c$.

The phase diagram of the 5-vertex model is easier then the one for
the 6-vertex model but still sufficiently interesting. Perhaps the
most interesting feature is that the existence of the tricritical
point in the phase diagram.

We will use the parameter
\[
\gamma=e^{-2\lambda}
\]
Notice that $\gamma<1$.

The frozen regions on the phase diagram of the $5$-vertex model,
denoted on Fig. \ref{amoeba_5v} as $A_1$, $A_2$, and $B_1$, can be
described by the following inequalities:
\begin{eqnarray}
&& \mbox{$A_1$-region:}\qquad m\geq -l,\qquad l\leq 0, \nonumber
\\
&& \qquad\qquad\qquad e^{2m}\geq 1-\gamma(1-e^{-2l}), \qquad l>1;
\nonumber
\\
&& \mbox{$A_2$-region:}\qquad m\leq -l, \qquad l\leq 0,
\\
&& \qquad\qquad\qquad e^{2m}\leq 1-\frac{1}{\gamma}(1-e^{-2l}),
\qquad l>1; \nonumber
\\
&& \mbox{$B_1$-region:}\qquad
(e^{2l}-\frac{1}{1-\gamma})(e^{-2m}-\frac{1}{1-\gamma})\geq
\frac{\gamma}{(1-\gamma)^2}, \qquad e^{2l}>\frac{1}{1-\gamma};
\nonumber
\end{eqnarray}

As it follows from results \cite{HWKK} the limit from the 6-vertex
model to the 5-vertex model commutes with the thermodynamical
limit and for the free energy of the $5$-vertex model we can use
the formula
\begin{equation}
\label{free_en_5v} f_5(l,m)=\lim_{\eta\rightarrow+\infty}
(f(\eta/2+l,-\eta/2+m)-f(\eta/2,-\eta/2)),
\end{equation}
where $f(H,V)$ is the free energy of the $6$-vertex model.

\begin{figure}[t]
\begin{center}
\includegraphics[width=8cm, height=8cm]{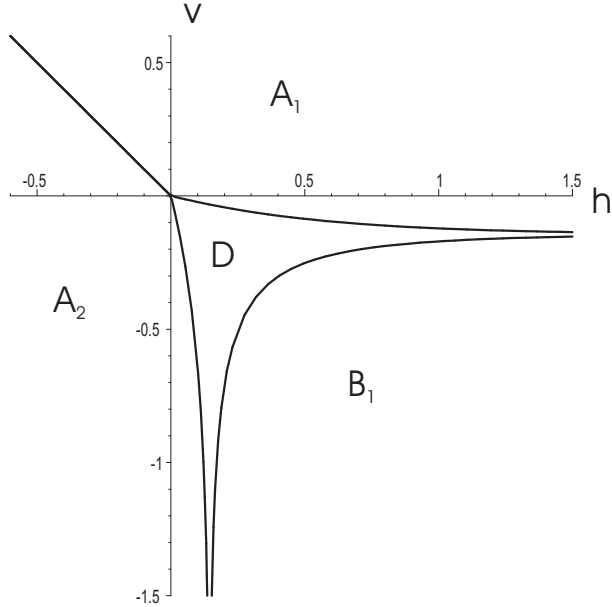}
\end{center}
\caption{The phase diagram of the $5$-vertex model with
$\gamma=1/4$ ($\beta=e^{-2h}$).} \label{amoeba_5v}
\end{figure}

\subsection{The asymptotic of the free energy near the
tricritical point in the 5-vertex model }

The disordered region $D$ near the tricritical point forms a
corner
$$
-\frac{1}{\gamma}\,l+O(l^2)\leq m\leq -\gamma\,l+O(l^2), \qquad
h\rightarrow 0+.
$$
The angle $\theta$ between the boundaries of the disordered region
at this point is given by
$$
\cos(\theta)=\frac{2\gamma}{1+\gamma^2}.
$$

One can argue that the finiteness of the angle $\theta$ manifests
the presence of interaction in the model. In comparison,
translation invariant dimer models most likely can only have cusps as such singularities.

Let $\gamma\leq k\leq \frac{1}{\gamma}$ and
$$
m=-kl,
$$
As it was shown in \cite{BS} for $m=-l$ and in \cite{PaR} for
$m=-kl$ with $\gamma\leq k\leq \frac{1}{\gamma}$ the asymptotic of the free energy as $l\to +0$ is given by
\begin{equation}
f(l,-kl)=c_1(k,\gamma)l +c_2(k,\gamma)l^{5/3}+O(l^{7/3}).
\end{equation}
where
\begin{equation}
c_1(k,\gamma)=\frac{1}{1-\gamma}\left(-(1+k)(1+\gamma)+4\sqrt{k\gamma}\right),
\end{equation}
and
\begin{equation}
c_2(k,\gamma)=(6\pi)^{2/3}\frac{2\gamma^{5/6}(1-\gamma)
k^{3/2}(\sqrt{k}-1/\sqrt{\gamma})^{4/3}}{5(\sqrt{k}-\sqrt{\gamma})^{4/3}}.
\end{equation}

The scaling along any ray inside the corner near the tricritical
point in the 6-vertex model differ from this only by in
coefficients. The exponent $h^{5/3}$ is the same.

\subsection{The limit $\Delta\to -1^-$}
If $\Delta=-1$, the region $A$ consists of one point located at the origin.

It is easy to find the asymptotic of the function $\Xi(\varphi)$ when $\eta\rightarrow 0+$, or $\Delta\rightarrow -1-$.
In this limit $K'\rightarrow \pi/2$,
$K\rightarrow \frac{\pi^2}{2\eta}$. Using the asymptotic
$\frac{1}{{\rm
dn}(u|1-m)}\sim 1+\frac{1}{2}(1-m)\sin^2(u)$ when $m\rightarrow 1-$, and
$\cosh^{-1}(x)\sim \pm\sqrt{2(x-1)}$, when $x\rightarrow 1+$, we
have
$$
\theta_0=\frac{|b-a|\eta}{a+b}.
$$
and \cite{LW}:
$$
\Xi(\varphi)\sim 4e^{-\frac{\pi^2}{2\eta}}
\sin(\frac{\pi}{2\eta}\varphi).
$$

The gap in the spectrum of elementary excitations vanishes
in this limit at the same rate as $\Xi(\varphi)$.
At the lattice distances of order $e^{\frac{\pi^2}{2\eta}}$
the theory has a scaling limit and become the relativistic $SU(2)$
chiral Thirring model. Correlation function of vertical and horizontal edges in the $6$-vertex model become correlation functions of the currents in the Thirring model.


\subsection{The convexity of the free energy} The following identity
holds in the region $D$ \cite{BIR},\cite{NK}:
\begin{equation}\label{hessian}
f_{H,H}f_{V,V}-f_{H,V}^2=\left(\frac{2}{\pi g}\right)^2.
\end{equation}

Here $g=\frac{1}{2D_0^2}$. The constant $D_0$ does not vanish in
the $D$-region including its boundary. It is determined by the
solution to the integral equation for the density $\rho(z)$.

Directly from the definition of the free energy we have
\[
f_{H,H}=\lim _{N,M\to \infty}\frac{<(n(L)-n(R))^2>}{NM},
\]
where $n(L)$ and $n(R)$ are the number of arrows pointing to the
left and the number of arrows pointing to the right, respectively.

Therefore, the matrix  $\pa_i\pa_j f$ of second derivatives with
respect to $H$ and $V$  is positive definite.

As it follows from the asymptotical behavior of the free energy
near the boundary of the $D$-phase, despite the fact that the
Hessian is nonzero and finite at the boundary of the interface,
the second derivative of the free energy in the transversal
direction at a generic point of the interface develops a
singularity.

\section{The Legendre Transform of the Free Energy}
The Legendre transform of the free energy
$$
\sup_{H,V}\Bigl(xH+yV+f(H,V)\Bigr)
$$
as a function of $(x,y)$ is defined for $-1\leq x , y\leq 1$.

The variables $x$ and $y$ are known as polarizations and are
related to the slope of the Gibbs measure as $x=2h-1$ and
$y=2v-1$. We will write the Legendre transform of the free energy
as a function of $(h,v)$
\begin{equation}\label{surface_tension}
\sigma(h,v)=\sup_{H,V}\Bigl((2h-1)H+(2v-1)V+f(H,V)\Bigr).
\end{equation}
$\sigma(h,v)$ is defined on $0\leq h,v\leq 1$.

For the periodic boundary conditions the surface tension function
has the following symmetries:
$$
\sigma(x,y)=\sigma(y,x)=\sigma(-x,-y)=\sigma(-y,-x).
$$
The last two equalities follow from the fact that if all arrows
are reversed, $\sigma$ is the same, but the signs of $x$ and $y$
are changed. It follows that $\sigma_h(h,v)=\sigma_v(v,h)$ and
$\sigma_v(h,v)=\sigma_h(v,h)$.

The function $f(H,V)$ is linear in the domains that correspond to
conic and corner singularities of $\sigma$. Outside of these
domains (in the disordered domain $D$) we have
\begin{equation}\label{f-sig-comp}
\nabla\sigma\circ\nabla f={\rm id}_D, \ \nabla f\circ\nabla \sigma
={\rm id}_{\nabla f(D)}.
\end{equation}
Here the gradient of a function as a mapping $\RR^2\to \RR^2$.

When the 6-vertex model is formulated in terms of the height
function, the Legendre transform of the free energy can be
regarded as a surface tension. The surface in this terminology is
the graph of of the height function.

\subsection{}
Now let us  describe some analytical properties of the function
$\sigma(h,v)$ is obtained as the Legendre transform of the free
energy. The Legendre transform maps the regions where the free
energy is linear with the slope $(\pm 1, \pm 1)$ to the corners of
the unit square ${\mathcal D}=\{(h,v)|\quad 0\leq h\leq 1, 0\leq
v\leq 1 \}$. For example, the region $A_1$ is mapped to the corner
$h=1$ and $v=1$ and the region $B_1$ is mapped to the corner $h=1$
and $v=0$. The Legendre transform maps the tentacles of the
disordered region to the regions adjacent to the boundary of the
unit square. For example, the tentacle between $A_1$ and $B_1$
frozen regions is mapped into a neighborhood of $h=1$ boundary of
$\mathcal D$, i.e. $h\to 1$ and $0<v<1$.

Applying the Legendre transform to asymptotics of the free energy
in the tentacle between $A_1$ and $B_1$ frozen regions we get
$$
H(h,v)=-\frac{1}{2}\ln\left(\frac{\pi
ab}{c^2}\frac{1-h}{\sin{\pi(1-v)}}\right),\qquad
V(h,v)=\frac{1}{2}\ln(b/a)+\frac{\pi}{2}(1-h)\cot(\pi(1-v)),
$$
and
\begin{equation}\label{sur_ten_boundary}
\sigma(h,v)=(1-h)\ln\left(\frac{\pi
ab}{c^2}\frac{1-h}{\sin(\pi(1-v))}\right) -(1-h)+v\ln(b/a)-\ln(b),
\end{equation}
Here $h\to 1-$ and $0<v<1$. From (\ref{sur_ten_boundary}) we see
that $\sigma(1,v)=v\ln(b/a)-\ln(b)$, i.e. $\sigma$ is linear on
the boundary $h=1$ of $\mathcal D$. Therefore, its asymptotics
near the boundary $h=1$ is given by
$$
\sigma(h,v)=v\ln(b/a)-\ln(b)+(1-h)\ln(1-h)+O(1-h),
$$
as $h\to 1-$ and $0<v<1$. We note that this expansion is valid
when $(1-h)/\sin(\pi(1-v))\ll 1$.

Similarly, considering other tentacles of the region $D$, we
conclude that the surface tension function is linear on the
boundary of $\mathcal D$.

\subsection{}
Next let us find the asymptotics of $\sigma$ at the corners of
$\mathcal D$ in the case when all points of the interfaces between
frozen and disordered regions are regular, i.e. when $\Delta<1$.
We use the asymptotics of the free energy near the interface
between $A_1$ and $D$ regions
(\ref{asympt_free_energy_interface}).

First let us fix the point $(H_0,V_0)$ on the interface and the
scaling factor $r$ in (\ref{asympt_free_energy_interface}). Then
from the Legendre transform we get
$$
1-h=-\frac{3}{4}\,r\,\frac{\kappa(\theta s+t^2)^{1/2}}{(\partial_H
g)^2+4} (\theta\partial_H g+4rt)
$$
and
$$
1-v=-\frac{3}{2}\,r\,\frac{\kappa(\theta s+t^2)^{1/2}}{(\partial_H
g)^2+4} (-\theta+r\partial_H g t).
$$
It follows that
$$
\frac{1-h}{1-v}=\frac{\theta\partial_H g
+4rt}{2(-\theta+rt\partial_H g)}.
$$
In the vicinity of the boundary $r\to 0$ and, hence,
\begin{equation}\label{slope_corner}
\frac{1-h}{1-v}=-\frac{\partial_H g }{2}= \frac{1-b/a\,\,
e^{-2V_0}}{1-b/a\,\, e^{-2H_0}}
\end{equation}
as $h,v\to 1$. Thus, under the Legendre transform, the slope of
the line which approaches the corner $h=v=1$ depends on the
boundary point on the interface between the frozen and disordered
regions.

It follows that the first terms of the asymptotics of $\sigma$ at
the corner $h=v=1$ are given by
$$
\sigma(h,v)=-\ln a-2(1-h)H_0(h,v)-2(1-v)V_0(h,v),
$$
where $H_0(h,v)$ and $V_0(h,v)$ can be found from
(\ref{slope_corner}) and $g(H_0,V_0)=0$.

When $|\Delta|< 1$ the function $\sigma$ is strictly convex and
smooth for all $0<h,v<1$. It develops conical singularities near
the boundary.

When $\Delta < -1$, in addition to the singularities on the
boundary, $\sigma$ has a conical singularity at the point
$(1/2,1/2)$. It corresponds to the ``central flat part'' of the
free energy $f$, see Fig. \ref{antiferroelectric_diagram}.

When $\Delta >1$ the function $\sigma$ has corner singularities
along the boundary as in the other cases. In addition to this, it
has a corner singularity along the diagonal $v=h$ if $a>b$ and
$v=1-h$ if $a<b$. We refer the reader to \cite{BS} for further
details on singularities of $\sigma$ in the case when $\Delta>1$.

When $\Delta=-1$ function $\sigma$ has a corner singularity at $((1/2,1/2)$.

\section{The limit shape phenomenon}
\subsection{The Height Function for the 6-vertex model}

Consider the square grid $L_\epsilon\subset \RR^2$ with
the step $\epsilon$. Let $D\subset \RR^2$ be a domain in $\RR^2$. Denote by $D_\epsilon$ a domain
in the square lattice which corresponds to the intersection
$D\cap L_\e$, assuming that the intersection is generic, i.e.
the boundary of $D$ does not intersect vertices of $L_\e$.

Faces of $D_\e$ which do not intersect the boundary $\pa D$ of $D$
are called {\it inner faces}. Faces of $D_\e$ which intersect the boundary of $D$ are called boundary faces.

A height function $h$ is an integer-valued function on the faces
of $D_\e$ of the grid $L_N$ (including the outer faces) which is

\begin{itemize}

\item non-decreasing when going up or to the right,

\item if $f_1$ and $f_2$ are neighboring faces then $h(f_1)-h(f_2)=-1,0,1$.

\end{itemize}

{\it Boundary value} of the height function is its restriction
to the outer faces.
Given a function $h^{(0)}$ on the set of boundary faces denote ${\mathcal H}(h^{(0)})$
the space of all height functions with the boundary value $h^{(0)}$. Choose a marked face $f_0$ at the boundary. A height function is normalized at this face if $h(f_0)=0$.

\begin{figure}[t]
\begin{center}
\includegraphics[height=5cm, width=5cm]{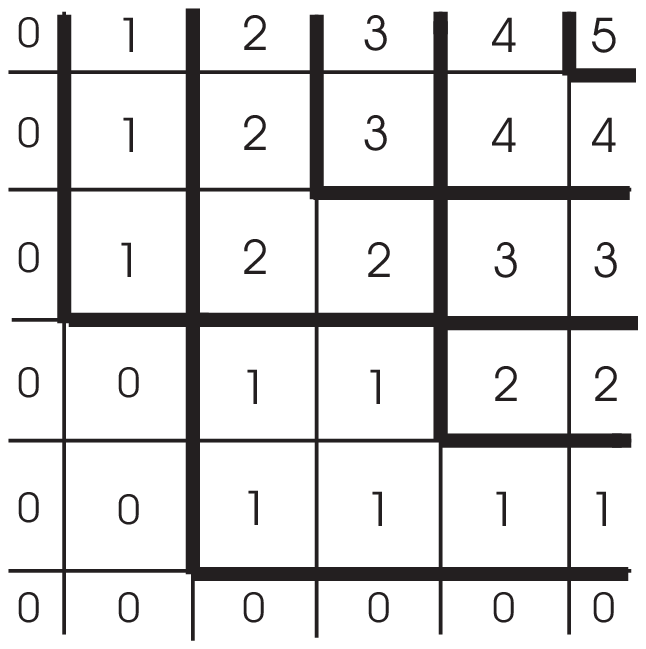}
\end{center}
\caption{The values of the height function for the configuration of paths
given on Fig. \ref{dwbc_conf}.}
\label{hf_example}
\end{figure}

\begin{proposition}
There is a bijection between the states of the $6$-vertex model with fixed boundary conditions, and height functions with corresponding boundary values normalized at $f_0$ .
\end{proposition}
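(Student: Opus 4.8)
The plan is to write down the bijection explicitly and check both directions, the only real work being the verification that the local rules match. For the forward map, fix a state $S$, regarded as a configuration of thin and thick (occupied) edges obeying the ice rule, and put $h_S(f_0)=0$. For any other face $f$ choose a path in the dual graph from $f_0$ to $f$ and let $h_S(f)$ be the sum of the increments picked up along it, where crossing a thin edge contributes $0$ and crossing a thick edge contributes $+1$ when the crossing is made upward or rightward and $-1$ when downward or leftward. Since the domain is simply connected, well-definedness of $h_S$ reduces to checking that these increments sum to zero around each single vertex, and this is exactly the ice rule: going once around a vertex one crosses each of its four edges precisely once, and the balance between in- and out-pointing arrows makes the four signed contributions cancel; one checks directly that each of the six admissible vertex types has vanishing signed total while the two excluded "even" configurations do not.

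Next I would verify that $h_S$ is a height function in the stated sense. Two neighbouring faces are separated by a single edge, across which $h_S$ changes by $0$ or $\pm 1$, giving the second condition. For the first, any step taken upward or to the right crosses one edge and so contributes $0$ or $+1$; hence $h_S$ is non-decreasing going up or to the right. Finally, the restriction of $h_S$ to the outer faces is obtained by summing increments along the boundary starting from $f_0$, so it is precisely the boundary height function determined by the fixed boundary condition on the outer edges.

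For the inverse map I would start from a height function $h$ with the prescribed boundary value and declare an edge thick exactly when its two adjacent faces carry distinct heights, thin otherwise. The point to prove — and I expect this to be the only delicate step, being entirely a matter of careful sign bookkeeping — is that the resulting edge configuration is a legitimate $6$-vertex state. Labelling the four faces around a vertex $v$ by $h_{NW},h_{NE},h_{SE},h_{SW}$, the four consecutive differences lie in $\{-1,0,1\}$ and sum to $0$, so an odd number of them cannot cancel; hence an even number are nonzero, i.e.\ $v$ has an even number of thick edges. Among the eight local patterns with an even number of thick edges, the two not among the admissible six are ruled out by monotonicity: for instance, "north and west thick, south and east thin" would force $h_{NW}<h_{NE}=h_{SE}=h_{SW}<h_{NW}$ (using the monotonicity of $h$ on the two thick edges and equality on the two thin ones), which is absurd, and the complementary forbidden pattern is excluded in the same way. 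The state $S_h$ so obtained has the given boundary condition.

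Finally, the two assignments are mutually inverse: an edge is thick in the configuration attached to $h_S$ iff $h_S$ jumps across it iff it was thick in $S$, so $S_{h_S}=S$; and $h_{S_h}=h$ since both are normalized at $f_0$ and have identical increments across every edge. As the fixed boundary condition corresponds to the boundary value $h^{(0)}$ under both maps, this yields the asserted bijection between states with fixed boundary condition and height functions with the corresponding boundary value normalized at $f_0$.
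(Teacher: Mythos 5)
Your proof is correct and follows the same route as the paper: occupied (thick) edges are exactly the edges across which the height function jumps, i.e. the level curves of $h$, and the two assignments are mutually inverse once normalized at $f_0$. The paper states the well-definedness and admissibility checks as "clear," while you carry them out explicitly (the signed increments cancel around each vertex precisely for the six admissible types, and monotonicity rules out the two forbidden even patterns), which is a faithful filling-in of the same argument.
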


\begin{proof}
Indeed, given a height function consider its ``level curves,'' i.e. paths on $D_\e$, where the height function changes its value by
$1$, see Fig. \ref{hf_example}. Clearly this defines a state for
the $6$-vertex model on $D_\e$ with the boundary conditions
determined by the boundary values of the height function.

On the other hand, given a state in the $6$-vertex model, consider the
corresponding configuration of paths. That there is a unique height
function whose level curves are these paths and which satisfies
the condition $h=0$ at $f_0$.

It is clear that this correspondence is a bijection.
\end{proof}

There is a natural partial order on the set of height functions
with given boundary values. One function is bigger than the other,
$h_1\leq h_2$,
if it is entirely  above the other, i.e. if $h_1(x)\leq h_2(x)$
for all $x$ in the domain. There exist the minimum
$h_{\rm min}$ and the maximum $h_{\rm max}$ height functions such
that $h_{\rm min}\leq h\leq h_{\rm max}$ for all height functions
$h$. for DW boundary conditions maximal and minimal height functions ar shown on Fig. \ref{DW_min_max}.

\begin{figure}[b]

\begin{center}
\includegraphics[width=8cm, height=4cm]{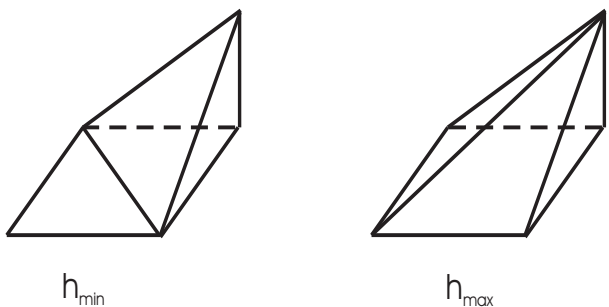}
\end{center}
\caption{The minimum and maximum height functions for
the DW boundary conditions.}
\label{DW_min_max}
\end{figure}

The characteristic function of an edge is related to the hight function as
\begin{equation}\label{e-f}
\sigma_e=h(f^+_e)-h(f^-_e)
\end{equation}
where faces $f^\pm_e$ are adjacent to $e$, $f^+_e$ is to the right
of $e$ for the vertical edge and on the top for the horizontal one.
The face $f^-_e$ is on the left for the vertical edge and below $e$
for the horizontal edge.

Thus, we can consider the $6$-vertex model on a domain as a theory of fluctuating discrete
surfaces constrained between $h_{\rm min}$ and $h_{\rm max}$.
Each surface occurs with probability given by the
Boltzmann weights of the $6$-vertex model.

The height function does not exist when the region i snot simply-connected or when a square lattice is on a surface with
non-trivial fundamental group. One can draw analogy between states
of the $6$-vertex model and $1$-forms. States on a domain with trivial fundamental group can be regarded as exact forms $\omega=dh$ where $h$ is a height function.

\subsection{Stabilizing Fixed Boundary Conditions}

Recall that the height function is a monotonic integer-valued
function on the faces of the grid, which satisfies the Lipschitz
condition (it changes at most by $1$ on any two adjacent faces).

The normalized height function is a piecewise constant function
on $D_\e$ with the value
$$
h^{\rm norm}(x,y)=\e \,h_\e(n,m).
$$
where $h_\e(n,m)$ is a height function on $D_\e$.

Normalized height functions on the boundary of $D$ satisfy the
inequality
$$
|h(x,y)-h(x',y')|\leq |x-x'|+|y-y'|.
$$

As for non-normalized height functions there is a natural partial ordering on the set of all normalized height functions
with given boundary values: $h_1\geq h_2$ if $h_1(x)\geq h_2(x)$ for all $x\in D$.
We define the operations
$$
h_1\vee h_2=\min_{x\in D}(h_1(x),h_2(x)), \qquad
h_1\wedge h_2=\max_{x\in D}(h_1(x),h_2(x)).
$$
It is clear that
$$
h_1\vee h_2\leq h_1,h_2\leq h_1\wedge h_2
$$
It is also clear that in this partial ordering there is a unique minimal and maximal
height functions, which we denote by $h^{min}$ and $h^{max}$, respectively.

The boundary value of the height function defines a piecewise
constant function on boundary faces of $D_\e$.
Consider a sequence of domains $D_\e$ with $\e\to 0$.
{\it Stabilizing fixed boundary conditions} for height functions
is a sequence  $\{h^{(\e)}\}$ of functions on boundary faces
of $D_\e$ such that $h^{(\e)}\to h^{(0)}$ where $h^{(0)}$
is a continuous function of $\pa D$.

It is clear that in the limit $\e\to 0$ we have at least two characteristic scales. At the {\it macroscopical} scale we can "see" the region $D\subset \RR^2$. Normalized height functions
in the limit $\e\to 0$ will be functions on $D$. As we will see in the next section, the height function in the $6$-vertex model
develops deterministic limit shape $h_0(x,y)$ at this scale. It is reflected in the structure of local correlation functions. Let $e_i$ be edges with coordinates $(n_i,m_i)=(\frac{x_i}{\e},\frac{y_i}{\e})$.
When  $\e\to 0$ and $(x_i,y_i)$ are fixed
\[
<\sigma_{e_1}\dots \sigma_{e_k}>= \pa_{i_1} h_0(x_1,y_1)\dots \pa_{i_k} h_0(x_k,y_k)+O(e^{-\frac{c}{\e}})
\]

The randomness remain at the smaller scale, and in particular at the lattice, microscopical, scale. If coordinates of edges $e_i$
are $(n_i,m_i)=(\frac{x}{\e}+\Delta n_i,\frac{y}{\e}+\Delta m_i)$
we expect that in the limit $\e\to 0$ correlation functions have
the limit
\[
<\sigma_{e_1}\dots \sigma_{e_k}>\to \sigma_{e_1}\dots \sigma_{e_k}>_{(\pa_xh_0(x,y),\pa_y h_0(x,y))}
\]
where the correlation function at the right side is taken
with respect to the translation invariant Gibbs nmeasure
with the average polarization $(\pa_xh_0(x,y),\pa_y h_0(x,y))$.
The correlator in the r.h.s. depends on the polarization and on
$\Delta n_i -\Delta n_j , \Delta m_i -\Delta m_j$, i.e. it is translation invariant.

\subsection{The Variational Principle}

\subsubsection{ } Here we will outline the derivation of the
variational problem which determines the limit shape of the height function.

First, consider the sequence of rectangular domain of size $M,M$
when $N, M\to \infty$ such that $a=N/M$ is finite and the boundary values of the height function stabilize to the function
$\phi(x,y)=hy+vx, x\in [0,a], y\in [0,1]$. The partition function of such system has the asymptotic
\begin{equation}\label{asym}
Z_{N,M}\propto \exp(NM\sigma(h,v))
\end{equation}
where $ \sigma(h,v)$ is the Legendre transform of the free
energy for the torus.

For a domain $D_\e$, choose a subdivision
of it into a collection of small rectangles.
Taking into account (\ref{asym}), the partition function of the $6$-vertex model with zero electric fields and stabilizing boundary conditions can be written as
\[
Z_{D_\e}\simeq \sum_h e^{|D_\e|\sum_i\sigma(\Delta h_{x_i}, \Delta h_{y_i})\Delta_{x_i}\Delta_{y_i}}
\]
When $\e\to 0$ the size of $|D_\e|$ the region is increasing and
the the leading contribution to the sum comes from the height function which minimizes the functional
\[
I(h)=\int_D \sigma(\nabla h) d^2x
\]

One can introduce the extra weight $q^{vol(h)}$ to the partition function of the $6$-vertex model. It corresponds to inhomogeneous
electric fields \cite{PaR}. If $\e\to 0$ and $q=\exp(-\lambda \e)$
the leading contribution to the partition function
comes from the height function which minimizes the
functional
\[
I_\lambda(h)=\int_D \sigma(\nabla h(x)) d^2x+\lambda \int_D h(x)d^2x
\]

\subsubsection{The variational principle}
Thus, in order to find the limit shape in the thermodynamical limit of the $6$-vertex model on a disc with Dirichlet boundary conditions $\varphi_0$, we should minimize the functional
\begin{equation}\label{var-pr}
I_\lambda[\varphi]=\int_D\sigma(\nabla \varphi)d^2x+\lambda\int_D \varphi d^2x,
\end{equation}
on the space $L(D,\varphi_0)$
of functions satisfying the Lipshitz condition
$$
|\varphi(x,y)-\varphi(x',y')|\leq |x-x'|+|y-y'|
$$
and the boundary conditions monotonically increasing
in $x$ and $y$ directions
$$
\varphi|_{\partial D}=\varphi_0.
$$

Since $\sigma$ is convex, the minimizer is unique when it exists. Thus, we should expect that the variational problem (\ref{var-pr})
has a unique solution.

The large deviation principle applied to this situation
should result in the convergence in probability, as $\e\to 0$ of random normalized height functions
$h(x,y)$  to the minimizer of (\ref{var-pr}).

\subsubsection{} If the vector $\nabla h(x,y)$ is not a singular point of $\sigma$,
the minimizer $h$ satisfies the Euler-Lagrange equation in a neighborhood
of $(x,y)$
\begin{equation}\label{E-L_eq}
{\rm div}(\nabla\sigma\circ\nabla h)=\lambda.
\end{equation}
We can also rewrite this equation in the form
\begin{equation}\label{fun_g}
\nabla\sigma(\nabla h(x,y))=\frac{\lambda}{2}\,(x,y)+(-g_y(x,y),g_x(x,y)),
\end{equation}
where $g$ is an unknown function such that $g_{xy}(x,y)=g_{yx}(x,y)$.
It is determined by the boundary conditions for $h$.

Applying (\ref{f-sig-comp}), it follows that
\begin{equation}
\nabla h(x,y)=\nabla f\Bigl(\frac{\lambda}{2}\,x-g_y(x,y),
\frac{\lambda}{2}\,y+g_x(x,y)\Bigr).
\label{height}
\end{equation}

From the definition of the slope, see (\ref{slope}), it follows that $|f_H|\leq 1$
and $|f_V|\leq 1$. Thus, if the minimizer $h$ is differentiable at $(x,y)$, it
satisfies the constrains $|h_x|\leq 1$ and $|h_y|\leq 1$. It is given that
$h_x,h_y\geq 0$, hence, $0\leq h_x\leq 1$ and $0\leq h_y\leq 1$.

In particular, one can choose $g(x,y)=0$. In this case the function (\ref{height}) is the minimizer of the rate functional $I_\lambda[h]$ with very special boundary conditions.
For infinite region $D$ (and finite $\lambda$) this height
function reproduces the free energy as a function of electric fields.

The  limit shape height function is a real analytic function
almost everywhere. One of the corollaries of (\ref{height}) is that curves on which it is not
analytic ( interfaces of the limit shape) are images of boundaries between different phases in the phase diagram of $f(H,V)$.  The ferroelectric and
antiferroelectric regions, where $f$ is linear, correspond to the regions where
$h$ is linear.

From (\ref{height}) and the asymptotic of free energy near boundaries  between
different phases, we can make some general
conclusions about the the structure of limit shapes.

First conclusion is that near regular pieces of
boundary between regions where the height function is real analytic it
behaves
as  $h(x,y)-h(x_0,y_0)\simeq d^{3/2}$ where $d$ is the distance from
$(x,y)$ to the closest point $(x_0,y_0)$ at the boundary.

Second obvious conclusion is that if $(x,y)$ is inside the corner
singularity in the boundary between smooth parts, the
height function behaves near the corner as $h(x,y)-h_(x_0,y_0)
\simeq d^{5/3}$ where $d$ is the distance from $(x,y)$ to the corner,
and $(x_0,y_0)$ are coordinates of the corner.

\subsection{Limit shapes for inhomogeneous models}
So far the analysis of limit shapes was done in the homogeneous
$6$-vertex model. The  variational problem determining limit shapes
involves the free energy as the function of the polarization.
The extension of the analysis outlined above to an inhomogeneous
periodically weighted case is straightforward. The variational
principle is the same same but the function $\sigma$ is determined
from the Bethe ansatz for the inhomogeneous model. It is again the Legendre
transform of the free energy as a function of electric fields.

The computation of the free energy from the large $N$ asymptotic of Bethe ansatz equations is similar and based on similar conjecture about the accumulation of solutions to Bethe equations on
a curve. In the free fermionic case it is illustrated in
section \ref{ff}. If the size of the fundamental domain is
$k\times m$ one should expect $2(k+m)$ cusps in the
boundary between ordered and  disordered regions.

\subsection{Higher spin $6$-vertex model}
The weights in the $6$-vertex model can be identified with
matrix elements of the $R$-matrix of $U_q(\widehat{sl}_2)$
in the tensor product of two $2$-dimensional representations of this algebra. We denoted such matrix $R^{(1,1)}(z)$.

In a similar fashion one can consider matrix elements of $R^{(l_1,l_2)}(z)$ as weights of the higher spin generalization
of the $6$-vertex model. We will call it higher spin $6$-vertex model.

A state in such models is an assignment of a weight
of the irreducible representation $V^{(l_1)}$ to
every horizontal edge, and of a weight of the irreducible
representation of $V^{(l_2)}$ to every vertical edge.

Natural local observables in such model are sums of products
of spin functions of an edge:
\[
s_e(S)=\mbox{ the weight } s \mbox{ assigned to } e \mbox{ in the state } S ,
\]

Spin variables define the height function locally as
\[
s_e=h(f_e^+)-h(f_e^-)
\]
where $f_e^\pm$ are faces adjacent to $e$.
When the domain has trivial fundamental group the
local height function extends to a global one.
Otherwise one should the region into simply-connected pieces.

The height function has the property
\[
|h(n,m)-h(n',m')|\leq l_1|n-n'|+l_2|m-m'| ,
\]

Given a sequence of domains $D_\e$ with $\e\to 0$ one
should expect that the normalized height function $\e h(\e n, \e m)$ develops the limit shape $h_0(x,y)$ which minimizes the
functional
\[
I^{(l_1,l_2)}[h]=\int_D \sigma^{(l_1,l_2)}(\nabla h) d^2x
\]
subset to the Dirichlet boundary conditions and
the constraint
\[
|h(x,y)-h(x',y')|\leq l_1 |x-x'|+l_2 |y-y'| ,
\]

The properties of such model and of its free energy as
function of electric fields is an interesting problem
which need further research.

\section{Semiclassical limits}

\subsection{The semiclassical limits in Bethe states}

The spectrum of quantum spin chains described in section \ref{Bethe-ans} in terms of Bethe equations has a natural
limit when $m_i=R_i/h$, $q=e^h$, and $h\to 0$ with fixed $R_i$.
In this limit the quantum spin chain with the quantum monodromy matrix (\ref{mon-matr}) becomes the classical spin chain with the monodromy matrix (\ref{cl-tran}).

Semiclassical eigenvectors of quantum transfer-matrices
correspond to solutions to Bethe equations which accumulate
along contours representing branch cuts on the spectral curve
of $T(z)$, see \cite{RS} and for more recent results \cite{S}.

\subsection{The semiclassical limits in the higher spin $6$-vertex model}

Relatively little known about the semiclassical limit of the higher spin $6$-vertex model. This is the limit when $l_1=R_i/h$, and $h\to 0$. Here, depending on the values of $\Delta$, $h=\eta$ or $h=\gamma$.

The first problem is to find the asymptotic of the conjugation action of the $R$-matrix in this limit. The mapping
\[
x\to R^{(l_1,l_2)}(u)xR^{(l_1,l_2)}(u)^{-1}
\]
is an automorphism of $End(V^{(\l_1)}\otimes V^{(l_2)})$ which
as $h\to 0$ becomes the Poisson automorphism $\rho^{(R_1,R_2)}(u)$ of the Poisson algebra of functions on $S^{(R_1)}\otimes S^{(R_2)}$.

When $\Delta=1$ this automorphism was computed in \cite{Skl}.
It is easy to extend this results for $\Delta\neq 1$. For
constant $R$-matrices see \cite{Re}.

Next problem is to find the semiclassical asymptotic for the
$R$-matrix considered as an "evolution operators".
For this one should choose two Lagrangian submanifolds in
$S^{(R_1)}\otimes S^{(R_2)}$, one corresponding to the initial data and the other corresponding to the target data. Points in these manifolds parameterize corresponding semiclassical states.
Then matrix elements of the $R$-matrix should have the
asymptotic
\[
R^{(l_1,l_2)}(u)(\sigma_1,\sigma_2)=const \ \ \exp{\frac{S^{(R_1,R_2)}(\sigma_1,\sigma_2)}{h}}\sqrt{H(\sigma_1,\sigma_2)}
(1+O(h))
\]
where $S$ is the generating function for the mapping $\rho$ and
$H$ is the Hessian of $S$. This asymptotical behavior of
the $R$-matrix defines the semiclassical limit of the partition function and needs further investigation.

\subsection{The large $N$ limit in spin-$1/2$ spin chain as a semiclassical limit}

\subsubsection{}

Local spin operators in a spin-$1/2$ spin chain of length $N$ are
\[
S^a_n=1\otimes \dots \sigma^a_n\otimes \dots \otimes 1, \ \ n=1,\dots, N
\]
They commute as
\[
[S^a_n,S^b_m]=i\sum_c \e_{abc} \delta_{nm}S^c_n
\]

As $N\to \infty$ the operators $S^a_n$ converge to local continuous
classical spin variables
\[
S^a_n\to S^a(\frac{n}{N})
\]
where $S^a(x)$ are local functionals on the classical phase space
of the continuum spin system. The commutation relations become the relations between Poisson brackets:
\[
\{S^a(x), S^b(y)\}=\delta(x-y)\sum_c f^{ab}_c S^c(x)
\]
where $S^a(x)$ are continuous local classical spin variables. This combination of classical and continuous limit looks more convincing in terms
of observables
\[
S_N^a[f]=\frac{1}{N} \sum_{n=1}^N f(\frac{n}{N})S^a_n,
\]
As $N\to \infty$, such operators becomes
$S^a[f]=\int_0^1f(x)S^a(x)dx$ and the commutation relation
\[
[S_N^a[f],S_N^b[g]]=\frac{i}{N}\sum_c\e_{abc}S_N^c[fg]
\]
becomes
\[
\{S^a[f],S^b[g]\}=\sum_c\e_{abc}S^c[fg]
\]
Here we used the asymptotic of commutators in the semiclassical limit: $[a,b]=ih\{a,b\}+\dots$
and the fact that $\frac{1}{N}$ plays the role of the Plank constant.

\subsubsection{} Consider the limit $N\to \infty$ of the row-to-row transfer-matrix in the 6-vertex model. Assume that at the same time $\Delta\to 1$
such that $\Delta=1+\frac{\kappa^2}{2N^2}+\dots$. To be specific consider
the case when $\Delta >1$, so that $\kappa=N\eta$ is finite and real.

We can write the transfer-matrix of the 6-vertex model as
\[
t(u)=(\sinh(u))^Ntr(T_N(u)),
\]
where $T_N(u)$ is the solution to the difference equation
\[
T_{n+1}(u)=\frac{R_n(u)}{\sinh(u)}T_n(u)
\]
with the initial condition $T_0(u)=1$. Here
$R(u)$ is the $R$-matrix of the $6$-vertx model. For small $\eta$ we have:
\[
\frac{R_n(u)}{\sinh(u)}=1+\eta(\frac{1}{2}\coth(u)\sigma^3\sigma^3_n+
\frac{\sigma^+\sigma^-_n+\sigma^-\sigma^+_n}{\sinh(u)})+O(\eta^2)
\]
Taking this into account we conclude that as $N\to \infty$, $T_n(u)\to T(u|x)$ where $T(u|x)$ is the solution to
\[
\frac{\pa T(u|x)}{\pa x}=(\frac{1}{2}\coth(u)\sigma^3S^3(x)+
\frac{\sigma^+S^-(x)+\sigma^-S^+(x)}{\sinh(u)})T(u|x)
\]
with the initial condition $T(u|0)=1$. Here $\sigma^z$ is the same as in  (\ref{sigma-m}), $\sigma^\pm=
\frac{1}{2}(\sigma^x\pm i\sigma^y)$, and $S^\pm=\frac{1}{2}(S^1\pm iS^2)$

The row-to-row transfer-matrix of the 6-vertex model have the following asymptotic in this
limit:
\begin{equation}\label{t-fun}
t(u)\to (\sinh(u))^N \tau(u)
\end{equation}
where $\tau(u)=tr(T(u|1)$.

\subsubsection{} Now let us study the evolution of local spin operators in this limit.

Consider the partition function of the $6$-vertex model  on
a cylinder as a liner operator
\[
Z_{M,N}^{(C)}=t(u)^M
\]
acting in ${\CC^2}^{\otimes N}$. Consider the row-to-row transfer-matrix as the evolution operator by one step and
$Z_{M,N}^{(C)}$ as the evolution operator by $M$ steps. In the
Heisenberg picture, local spin operators evolve as:
\begin{equation}\label{d-evol}
S^a_{n,m+1}=t(u)S^a_{n,m}t(u)^{-1}
\end{equation}

In the continuum semiclassical limit $N\to \infty$ the transfer-matrix becomes a functional (\ref{t-fun}) on the
phase space of continuous classical spins. The equation
(\ref{d-evol}) can be written as
\[
S^a_{n,m+1}-S^a_{n,m}=[t(u),S^a_{n,m}]t(u)^{-1}
\]
As $N\to \infty$ and $x=n/N$ and $t=m/N$ are fixed, this
equation become the evolution equation for continuum spins:
\[
\frac{\pa S^a(x,t)}{\pa t}=\{H(u), S^a(x,t)\}
\]
where $H(u)=\log t(u)$.

The evolution with respect to the partition function on a cylinder of height $M$
\begin{equation*}
S^a_{n,m+M}=Z_{M,N}^{(C)}S^a_{n,m}{Z_{M,N}^{(C)}}^{-1}
\end{equation*}
becomes the evolution in time $T=M/N$: $S^a(x,t)\mapsto S^a(x,t+T)$.

If we choose a Lagrangian submanifold in the phase space
of continuous spin system corresponding to the initial and target data (say, a version of initial and target $q$-coordinates),
the asymptotic of the partition function should be of the form
\[
Z_{N,M}^{(C)}(\sigma_1,\sigma_2)=const e^{-NS_T(\sigma_1, \sigma_2)}
\sqrt{Hess(\sigma_1,\sigma_2)}(1+O(1/N))
\]
where $S_T$ is the Hamilton-Jacobi action for the Hamiltonian $H(u)$, $Hess(\sigma,\tau)$ is the Hessian of $S_T$, and in the
left side we have the matrix element of $Z_{N,M}^{(C)}$ between  semiclassical states corresponding to $\sigma_1$ and to $\sigma_2$.

The Bethe equations and the spectrum of the Heisenberg Hamiltonian
in this limit was studied in \cite{Ka}\cite{De}.

\section{The free fermionic point and dimer models}

Decorate the square grid inserting a box with two faces
to each vertex, as it is shown on Fig. \ref{reference_config}.
Recall that a dimer configuration on a graph is a perfect matching on a set of vertices connected by edges. In other words, it is a
collection of "occupied edges" (by dimers) such that two occupied
edges never meet, and any vertex in an endpoint to an occupied edge.

Dimer configurations on the decorated square grid project to
$6$-vertex configurations on a square grid as it is shown on Fig.
\ref{projection}.

\begin{figure}[t]
\begin{center}
\includegraphics[width=7cm, height=10cm]{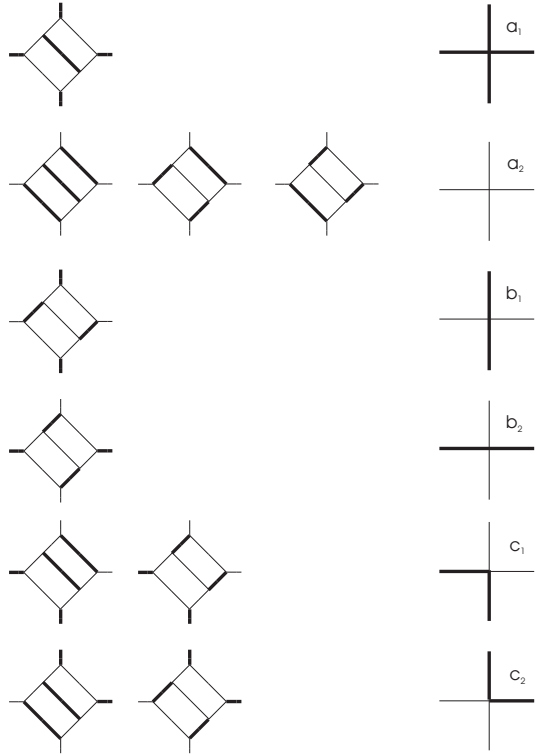}
\end{center}
\caption{Projection of dimer configurations on $G^{(0)}_{nm}$
onto the different types of vertices at the $(n,m)$-vertex.}
\label{projection}
\end{figure}

\begin{figure}[t]
\begin{center}
\includegraphics[width=4cm, height=4cm]{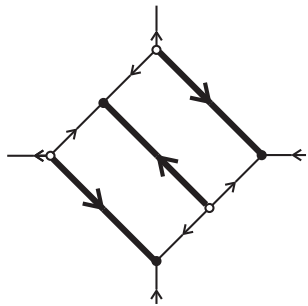}
\end{center}
\caption{Reference dimer configuration on $G^{(0)}_{nm}$.}
\label{reference_config}
\end{figure}

It is easy to check that any edge weight system on the decorated lattice projects to $6$-vertex weights at the free fermionic
point, when $a^2+b^2-c^2=0$ at every vertex. Recall that edge weights is a mapping $w: \mbox{Edges}\to \RR_{\geq 0}$. The weight of a dimer configuration is
\[
W(D)=\prod_{e\in D} w(e)
\]
The statement above means
\[
\sum_{D\in \pi^{-1}(S)} W(D)=\prod_v w_v(S)
\]
where $S$ is a $6$-vertex configuration on the square grid,
$\pi$ is the projection from dimer configurations on the decorated square grid to the $6$-vertex configurations, and the $6$-vertex
weights $w_v$ are given by an explicit formula. The weights $w_v$ satisfy the free fermionic condition.

A pair of dimer configurations $D, D_0$ on bipartite graphs define the height function. This height function agrees with the
height function of the $6$-vertex model:
\[
h_S(f)=h_{D,D_0}(f)
\]
where $S$ is a $6$-vertex state on the square lattice, and
$h_S$ is the corresponding height function, $D\in \pi^{-1}(S)$,
$D_0$ is shown on Fig. \ref{reference_config}, and $f$ is the
face of the decorated lattice which projects to a face of the
square grid.

\appendix

\section{Symplectic and Poisson manifolds}\label{be}

\subsection{}Recall that an even dimensional manifold equipped
with a closed non-degenerate 2-form is called {\it symplectic}.

Let $(\M, \omega)$ be a $2n$-dimensional symplectic manifold. In local coordinates:
\[
\omega=\sum_{ij=1}^{2n} \omega_{ij}dx^i\wedge dx^j,\ \ det(\omega)\neq 0, \ \ d\omega=\sum_{k=1}^{2n} \sum_{ij=1}^{2n} \frac{\partial\omega_{ij}}{\partial x^k}dx^i\wedge dx^j=0,
\]

The last identity is equivalent to the Jacobi identity
for the bracket
\[
\{f,g\}=\sum_{ij=1}^{2n} (\omega^{-1})^{ij}\frac{\partial f}{x^i} \frac{\partial g}{x^j}
\]

\subsection{} A smooth manifold $M$ with bi-vector field
$p$ (a section of the bundle $\wedge^2TM$) such that the bracket between two smooth functions
\[
\{f,g\}=p(df\wedge dg)
\]
satisfies the Jacobi identity is called a Poisson manifold.
I local coordinates $x^1,\dots, x^n$, $p(x)=\sum{i,j=1}^np^{ij}(x)\frac{\pa}{\pa x^i}\wedge \frac{\pa}{\pa x^i}$.

\subsection{} A Poisson tensor on a smooth manifold $M$ defines a subspace mapping $p: T^*M\to TM$. The image is a system of subspaces $p(T^*M)\subset TM$ which is a distribution
on $M$. Leaves of this distribution are spanned by curves which are flow lines of piece-wise Hamiltonian vector fields. They are smooth submanifolds. Symplectic leaves of the Poisson manifold $M$
are leaves of this distribution.

Well know examples of symplectic leaves are co-adjoint orbits
in the dual space to a Lie algebra.

\section{Classical integrable systems and their quantization}
\subsection{Integrable systems in Hamiltonian mechanics}
The notion of integrability is most natural in the Hamiltonian
formulation of classical mechanics. For details see \cite{Arn}
\cite{RSTS}.

In Hamiltonian formalism of classical mechanics the dynamics is taking place on the phase space and equations of motion are of the first order. When it is a system of particles moving on a manifold
$M$,  local coordinates are positions and momenta of particles.
Globally, the phase space in this case is the cotangent bundle to the manifold $M$, see for example \cite{Arn}.

Hamiltonian formulation of spinning tops or other systems
with more complicated constraints involves more complicated
phase spaces. In all cases a phase space has a structure
of a structure of a symplectic manifold, that is it comes
together with a non-degenerate closed 2-form on it.

Any symplectic manifold (and so any phase space of a Hamiltonian
system admits local coordinates (Darboux coordinates) in which
the 2-form  has the form
\[
\omega=\sum_{i=1}^n dp_i\wedge dq^i
\]
These coordinates can be interpreted as momenta and positions,
though in case of spinning tops this interpretation does
not have a lot of physical meaning but Darboux coordinates is a convenient mathematical tool.

The dynamics in a Hamiltonian system is determined by
the energy function $H$. The trajectories of such system
in local Darboux coordinates are solutions to differential
equations:
\[
\frac{d q^i}{ dt}=\frac{\pa H}{\pa p_i}, \ \ \frac{d p_i}{ dt}=-\frac{\pa H}{\pa q^i}
\]

Geometrically, trajectories are flow lines of the Hamiltonian vector field $v_H=\omega^{-1}(dH)\in \Gamma(\wedge^2TM)$
where $\omega^{-1}: TM\to T^*M$ is the bundle isomorphism induced by the symplectic form $\omega$.

Let $\M$ be a $2N$-dimensional symplectic manifold.

\begin{definition} An integrable system on $\M$ is
a collection of $N$ independent functions on $\M$
which commute with respect to the Poisson bracket.
\end{definition}

Recall that

\begin{itemize}
\item The ``level surfaces''
\begin{equation}
{\M }(c_1,\dots,c_n)={\{} x \in {\M} , I_i(x)=c_i{ \}}
\end{equation}
are invariant with respect to the flow of any Hamiltonian
$H=F(I_1,\dots,I_N)$.
\item For every such Hamiltonian and every level surface
${\M}(c_1,\dots,c_n)$ there exists an affine coordinate system $(p_1,\dots,p_n)$
in which the Hamiltonian flow generated by $H$ is linear:
$p_i=$constant.
\item The coordinate system  $(p_1,\dots,p_n)$ on  ${\M}(c_1,\dots,c_n)$
can be completed to a canonical coordinate system $(p_i,q_j)$
in every sufficiently small neighborhood of ${\M}(c_1,\dots,c_n)$.
These coordinates are called action-angle variables and in some cases
these coordinates are global.
\end{itemize}

\section{Poisson Lie groups}
\subsection{Poisson Lie groups}
A Lie group $G$ is called a Poisson Lie group if
\begin{itemize}

\item $G$ has a Poisson structure, i. e. it is given together
with the Poisson tensor $p\in \Gamma(\wedge^2TG)$, in local coordinates $p(x)=p^{ij}(x)\frac{\pa }{\pa x^i}\wedge\frac{\pa }{\pa x^j}\in \wedge^2 T_xG$. This tensor the Poisson bracket
on smooth functions on $G$ (Lie bracket satisfying the
Leibnitz rule with respect to the point-wise multiplication)
\[
\{f,g\}(x)=\sum_{ij}p^{ij}(x)\frac{\pa f}{\pa x^i}\frac{\pa g}{\pa x^j}(x)
\]

\item The Poisson structure is compatible with the group
multiplication. This means that the multiplication mapping
$G\times G\to G$ bring the Poisson tensor on $G\times G$
to the Poisson tensor on $G$, or:
\[
\sum_{ij}p^{ij}(xy)\frac{\pa f}{\pa z^i}\frac{\pa g}{\pa z^j}(z)|_{z=xy}=
\sum_{ij}p^{ij}(x)\frac{\pa f}{\pa x^i}\frac{\pa g}{\pa x^j}(xy)+
\sum_{ij}p^{ij}(y)\frac{\pa f}{\pa y^i}\frac{\pa g}{\pa y^j}(xy)
\]
for any pair of functions $f,g$.
\end{itemize}

For more details on Poisson Lie groups and for numerous
examples see \cite{Dr}\cite{ChariP}\cite{KorSoib}.

\subsection{Factorizable Poisson Lie groups and classical $r$-matrices}
The class of Poisson Lie groups relevant to integrable systems
has so-called $r$-matrix Poisson brackets. Let $\g$ be a Lie algebra corresponding to the Lie group $G$. A classical $r$-matrix for $G$ is an element $r\in \g\otimes \g$ satisfying the bilinear identity
\begin{equation}\label{cYB}
[r_{12}, r_{13}]+[r_{12},r_{23}]+[r_{13},r_{23}]=0
\end{equation}
The Poisson tensor
\[
p(x)=r-Ad_x(r)\in \g\wedge \g
\]
defines a Poisson Lie structure on $G$ if $r$ satisfies (\ref{cYB})
and $r+\sigma(r)$ is an invariant tensor.

Two remarks: we identified $T_xG with T_eG=\g$ using
left translations on $G$; $Ad_x$ is the diagonal adjoint action of $G$ on $\g\wedge \g$, $u\wedge v\to xux^{-1} \otimes xvx^{-1}$
when $G$ is a matrix Lie algebra.

Let $\pi^V:G\to GL(V)$ and $\pi^W: G\to GL(W)$ be two (finite-dimensional) representations of $G$
and let $\pi^V_{ij}, \pi^W_{ab}$ be matrix elements of $G$ in a linear basis: $\pi^V_{ij}: g\in G\to \pi_{ij}^V(g)\in \CC$.
The $r$-matrix Poisson brackets between two
such functions are:
\[
\{\pi^V_1, \pi^W_2\}=[(\pi^V\otimes \pi^W)(r), \pi^V_1,\pi^W_2]
\]
Taking trace in this formula we see that characters of finite
dimensional representations form a Poisson commutative algebra on
$G$. Restricting this subalgebra to a symplectic leaf of $G$ we
will obtain an integrable system if the number of independent functions among $\chi_\lambda$ after this restriction is equal
to half of the dimension of the symplectic leaf.

the principal advantage of this approach is that it gives an
algebraic way to construct classical $r$-matrices through the double construction of Lie bi-algebras.

In then next section we will see how $r$-matrices with spectral
parameter appear naturally from Lie bi-algebras on loop algebras.

\subsection{Basic example $G=LSL_2$}

Our basic example is an infinite dimensional Poisson Lie
group  $LGL_2$ of mappings $S^1=\{z\in \CC| |z|=1\}\to GL_2$
which are holomorphic inside the unit disc.

The Lie algebra $Lgl_2$ (we consider maps which are Laurent polynomials in $t$) of the Lie group $LGL_2$ has a linear basis
$e_{ij}[n]$ with $e_{ij}[n](z)=e_{ij}t^{-n-1}$. It also has an invariant scalar product $(x[n],y[m])=\delta_{n,m}$.
The following element of the completed tensor product $Lgl_2\otimes Lgl_2$

\begin{equation}\label{cl-r}
r=\sum_{i\geq j; i,i=1,2}e_{ij}[0]\otimes e_{ji}[0]+\sum_{i,j=1,2;n\in \ZZ, n\geq 1} e_{ij}[n]\otimes e_{ji}[-n]
\end{equation}
satisfies the classical Yang-Baxter equation \eqref{cYB}.
We will not explain here why, but this follows from the Drinfeld's double construction for Lie bialgebras.

Let $\pi^V: GL_2\to GL(V)$ be a representation of $GL_2$ and $a\in \CC^\vee$ be a non-zero complex number. The evaluation representation $\pi_{V,a}$ of $Lgl_2$ acts in $V$ as
\[
\pi_{V,a^2}(x[n])=\pi^V(x)a^{-2n-2}
\]
where $a$ is a non-zero complex number.

Evaluating the element \eqref{cl-r} in $\pi_{V,a}\otimes \pi_{V,b}$ where $V$ is the two dimensional representation
we get
\[
\pi_{V,a}\otimes \pi_{V,b}(r)=r(a/b)+f(a/b)1
\]
where $f(z)$ is a scalar function and
\begin{equation}\label{Lsl2-r}
r(z)=\frac{z+z^{-1}}{z-z^{-1}}\sigma^z\otimes \sigma^z+
\frac{2}{z-z^{-1}}(\sigma^+\otimes \sigma^-+\sigma^-\otimes \sigma^+)
\end{equation}

Here $\sigma^z=e_{11}-e_{22}$, $\sigma^+=e_{12}$, and
$\sigma^-=e_{21}$ are Pauli matrices.

The Yang-Baxter equation for $r$ implies
\[
[r_{12}(z), r_{13}(zw)]+[r_{12}(z),r_{23}(w)]+[r_{13}(zw),r_{23}(w)]=0
\]

The Poisson bracket between coordinate functions $g_{ij}(z)$ are
\begin{equation}\label{loop-P}
\{g_1(z), g_2(w)\}=[r(z/w), g_1(z)g_2(w)]
\end{equation}

One of the important properties of such Poisson brackets is that characters of finite dimensional representations define families
of Poisson commuting functions on $LG_2$:
\[
t_V(z)=tr_V(\pi^V(g(x)), \ \ \{t_V(z), t_W(w)\}=0
\]
The coefficients of these functions will produce Poison commuting
integrals of motions for integrable systems, when restricted to symplectic leaves of $LGL_2$.

\subsection{Symplectic leaves}

It is a well know, classical fact, which can be traced back to works of Lie, that symplectic leaves of the Poisson manifold which is the dual space to a Lie algebra are co-adjoint orbits.
Similarly, symplectic leaves of a Poisson Lie group $G$ are
orbits of the dressing action of the dual Poison Lie group
on $G$.

The structure of symplectic leaves of Poisson Lie groups is
well known for finite dimensional simple Lie algebras, see \cite{KorSoib}\cite{Yak}. For the construction of integrable
spin chains related to $SL_2$ we will need only some
special symplectic leaves of $LGL_2$. These symplectic leaves are
symplectic leaves of finite dimensional Poisson submanifolds
of polynomial maps of given degree.

\section{Quantization}

\subsection{Quantization}

This section is a brief outline of the quantization
of Hamiltonian systems. There is also a very important point
of view of a path integral quantization. We will not
discuss it here.

\subsubsection{Quantized algebra of observables}

Let $M$ be a symplectic manifold, which is the phase space of our
mechanical system. We want to describe possible quantum mechanical
systems which reproduce our system in the classical limit. This procedure is called quantization.

Let $A$ be a Poisson algebra over $\CC$ i.e. it is a complex vector space with a commutative multiplication $ab$ and with the Lie
bracket $\{a,b\}$ such that $\{a,bc\}=b\{a,c\}+\{a,b\}c$.
Let $X\subset \CC$ is a neighborhood of $0\in \CC$.

\begin{definition} A deformation quantization of $A$ is a
family of associative algebras $A_h$, parameterized by $h\in X$
together with two families of linear maps $\phi_h: A_h\to A$
and $\psi_h: A\to A_h$ such that
\begin{itemize}
\item $\lim_{h\to 0} \phi_h\circ \psi_h\to id_A$,
\item $\lim_{h\to 0} \phi_h(\psi_h(a)\psi_h(b))=ab$,
\item $\lim_{h\to 0} \frac{\phi_h(\psi_h(a)\psi_h(b)-\psi_h(a)\psi_h(b))}{ih}=\{a,b\}$,
\end{itemize}
\end{definition}

Denote by $C(M)$ the classical algebra of observables, i.e.
the algebra of real valued functions on the phase space. If $M$ is $T^*N$, this will be the algebra of functions on $M$ which are polynomial in the cotangent direction and smooth on $N$. If $M$ is an affine algebraic, manifold, $C(M)$ will be algebra of polynomial
functions. If $M$ is a smooth manifold it is $C^\infty M$, etc..
The space $C(M)$ has a natural structure of a Poisson algebra with the point-wise multiplication and the Poisson bracket determined by the symplectic form on $M$.

Denote its complexification by $C(M)_\CC$.
The real subalgebra $C(M)\subset C(M)_\CC$ is the set of fixed
points of complex conjugation, i.e. real valued functions. Denote complex conjugation by
$\sigma$, i. e. $\sigma(f)(x)=\overline{f(x)}$.

Deformation quantizations which are relevant to quantum mechanics
are real deformation quantizations:

\begin{definition} A real deformation quantization of $C(M)$ is
a pair $(A_h, \sigma_h, h\in X\subset \RR)$ where $A_h$ is a deformation quantization (as above) of $C(M)_\CC$, $h$ is a real deformation parameter,  and
$\sigma_h$ is a $\CC$-anti-linear anti-involution of $A_h$, i.e.
\[
\sigma_h^2=1, \  \ \sigma_h(ab)=\sigma_h(b)\sigma_h(a), \ \
\sigma_h(sa)=\bar{s}\sigma_h(a),
\]
The the subspace $C(M)_h\subset A_h$ of fixed points
of $\sigma_h$ is called the space of quantum observables.
Linear maps $\phi_h$ and $\psi_h$ from the definition f $A_h$ should satisfy extra condition
\[
\lim_{h\to 0}\phi_h\circ\sigma_h\circ \psi_h=\sigma
\]
\end{definition}

The space of fixed points of $\sigma_h$ is called the {\it space
of quantum observables}. Notice that the multiplication does not preserve this space. However, it
is closed with respect to operations $AB+BA$ and $i(AB-BA)$. Such
structures are called Jordan algebras.

Of course these definition still need a clarification. The vector space $C(M)$ is infinite dimensional and we have to specify in which topology our linear isomorphism is continuous and in which sense we should take limits. The way how to handle this is either dictated by the nature of a specific problem, so we will discuss it later in relation to integrable spin chains which is the main subject here.

\subsection{Examples of family deformations}

\subsubsection{} Take $M=\RR^2$, $A=Pol_\CC (\RR^2)=\CC[p,q]$ with the standard
symplectic form $dp\wedge dq$ giving the bracket $\{p,q\}=1$ (this
determines the bracket). We have a natural monomial basis $p^nq^m$
on $A$. Define
 \[
  A_h = < p,q| pq-qp=h>
 \]
It is clear that this is a family of associative algebras. To identify
this family with a deformation quantization of $A$ we should find
$\phi_h$ and $\psi_h$. For this choose monomial bases $p^nq^m$ in
$A_h$ and $A$. Define linear maps
$\phi_h$ and $\psi_h$ as linear isomorphisms $A_h\simeq A$ identifying
the monomial bases. It is clear that this choice makes $A_h$ into a
defromation quantization of $A$.

\subsubsection{} Let $\g$ be a Lie algebra, and consider $Pol(\g^*)=\CC[\g]$. If
$\{e_i\}$ is a basis for $\g$, then we can think of the $e_i$ as
coordinate functions $x_i$ on $\g^*$. A theorem of Kostant,
$\{f,g\}(x) = < x,[df(x),dg(x)]>$, $\{x_i,x_j\}=\sum_k c_{ij}^k
x_k$. $Pol(\g^*)$ then gets a Poisson bracket.

We can get a deformation quantization
 \[
  A_h = <x_1,\dots, x_n| x_ix_j-x_jx_i=h \sum_k c_{ij}^k x_k>
 \]
 Note that $A_h\cong U\g$ for any $h\neq 0$ (you just have to rescale the $x$'s by $h$). On the other hand, Choosing the monomial basis $x_1^{a_1}\cdots x_n^{a_n}$ in $\CC[x_1,\dots, x_n]$ and the PBW basis in $A_h$. Identifying them, we get $A_h \cong \CC[x_1,\dots, x_n]$, which is how the PBW theorem is usually formulated.

 \[
  U\g \cong Pol(\g^*)\cong S(\g)
 \]
We get a linear isomorphism $\phi_h\colon A_h\cong A$. It is easy
to check that this is a deformation quantization.

\subsection{Quantization of integrable systems}

As we have seen above the quantization of a classical Hamiltonian system on $\M$ with the Hamiltonian $H\in C(M)$consists of the following:
\begin{itemize}
\item A family of associative algebras $A_h$ quantizing the algebra of function on $\M$.
\item The choice of quantum Hamiltonian $H_h\in A_h$ for each $h$,such that $\phi_h(H_h)\to H$ as $h\to 0$.
\end{itemize}

The quantization is integrable if for each $h$ there is a maximal
commutative subalgebra of $C_h(M)$ quantizing
the subalgebra of classical integrals in $C(M)$, which contains $H_h$.

To be more precise, let $I(M)\in C(M)$ be the subalgebra generated by Poisson commuting integrals in the classical algebra of observables, i.e. $\{H,F\}=0$ for each $F\in I(M)$ and $\{G,F\}=0$ for each $F,G\in I(M)$. Its integrable quantization consists of a commutative subalgebra  $I_h\in A_h$ such that $FH=HF$ for each $F\in I_h$, such that $\lim_{h\to 0} \phi_h(I_h)=I$. For the precise definition in case of formal deformation quantization see \cite{RY}.

When quantized algebra of observables is represented
in a Hilbert space, the important problem is the computation of
the spectrum of commuting Hamiltonians.

\end{document}